\tikzset{p0/.style = {shape = circle, draw, thick, minimum size = 0.7cm}}
\tikzset{p1/.style = {rectangle, minimum size=.7cm, draw, thick}}
\tikzset{>=stealth, shorten >=1pt}
\tikzset{every edge/.style = {thick, ->, draw}}
\tikzset{every loop/.style = {thick, ->, draw}}
\tikzset{circle split part fill/.style  args={#1,#2}{%
 alias=tmp@name, 
  postaction={%
    insert path={
     \pgfextra{%
     \pgfpointdiff{\pgfpointanchor{\pgf@node@name}{center}}%
                  {\pgfpointanchor{\pgf@node@name}{east}}%
     \pgfmathsetmacro\insiderad{\pgf@x}
      \fill[#1] (\pgf@node@name.base) ([xshift=-\pgflinewidth]\pgf@node@name.east) arc
                          (0:180:\insiderad-\pgflinewidth)--cycle;
      \fill[#2] (\pgf@node@name.base) ([xshift=\pgflinewidth]\pgf@node@name.west)  arc
                           (180:360:\insiderad-\pgflinewidth)--cycle; 
         }}}}}  
\definecolor{myred}{rgb}{1,0.604,0.604}
\definecolor{mydarkred}{rgb}{1,0.345,0.345}
\definecolor{myblue}{rgb}{0.635,0.675,0.966}
\definecolor{mydarkblue}{rgb}{0.412,0.475,0.957}
\definecolor{myyellow}{rgb}{1,0.976,0.604}
\definecolor{mydarkyellow}{rgb}{1,0.961,0.345}
\tikzset{
	assign/.style = { fill=myblue },
	choice/.style = { fill=myred },
	check/.style  = { fill=myyellow }
}
\DeclareRobustCommand{\rvdots}{%
  \vbox{
    \baselineskip4\p@\lineskiplimit\z@
    \kern-\p@
    \hbox{.}\hbox{.}\hbox{.}
  }}
\newcommand{\ext}{\mathrm{ext}}
\newcommand{\req}{\mathit{req}}
\newcommand{\initmark}{I}
\newcommand{\finitemark}{f}
\newcommand{\vinit}{v_{\initmark}}
\newcommand{\qbf}{\mathrm{QBF}}
\newcommand{\myquot}[1]{``#1''}
\newcommand{\bigo}[0]{\mathcal{O}}
\newcommand{\size}[1]{|#1|}
\newcommand{\card}[1]{\size{#1}}
\newcommand{\set}[1]{\{ #1 \}}
\newcommand{\nats}{\mathbb{N}}
\renewcommand{\epsilon}{\varepsilon}
\DeclareMathOperator*{\argmax}{arg\,max}
\newcommand{\eps}{\ensuremath{\boldsymbol{\epsilon}}}
\newcommand{\inc}{\ensuremath{\mathbf{i}}}
\newcommand{\ttrue}{\mathit{true}}
\newcommand{\ffalse}{\mathit{false}}
\newcommand{\arena}{\mathcal{A}}
\newcommand{\game}{\mathcal{G}}
\newcommand{\cost}{\mathrm{Cst}}
\newcommand{\col}{\Omega}
\newcommand{\wincond}{\mathrm{Win}}
\newcommand{\parity}{\mathrm{Parity}}
\newcommand{\finparity}{\mathrm{FinParity}}
\newcommand{\cp}{\mathrm{CostParity}}
\newcommand{\bincp}{\cp}
\newcommand{\streett}{\mathrm{Streett}}
\newcommand{\finstreett}{\mathrm{FinStreett}}
\newcommand{\streettc}{\mathrm{CostStreett}}
\newcommand{\mem}{\mathcal{M}}
\newcommand{\init}{m_\initmark}
\newcommand{\update}{\mathrm{Upd}}
\newcommand{\nxt}{\mathrm{Nxt}}
\newcommand{\paritydist}{\mathrm{Cor}}
\newcommand{\streettdist}{\mathrm{StCor}}
\newcommand{\answer}[1]{\mathrm{Ans}({#1})}
\newcommand{\incseq}{\mathit{IncSeq}}
\newcommand{\up}{\textsc{UP}}
\newcommand{\coup}{\textsc{co-UP}}
\newcommand{\np}{\textsc{NP}}
\newcommand{\conp}{\textsc{co-NP}}
\newcommand{\exptime}{\textsc{ExpTime}}
\newcommand{\aptime}{\textsc{APTime}}
\newcommand{\ptime}{\textsc{PTime}}
\newcommand{\twoexp}{\textsc{2ExpTime}}
\newcommand{\pspace}{\textsc{PSpace}\xspace}
\newcommand{\threeexp}{\textsc{3ExpTime}}
\newcommand{\relreq}{\textsc{RelReq}}
\newcommand{\dominates}{\sqsupseteq}
\newcommand{\dominatedby}{\sqsubseteq}
\newcommand{\dominationequivalent}{\approx}
\newcommand{\jump}{\curvearrowright}
\newcommand{\extgame}{\ensuremath{\game'}}
\newcommand{\jumpgame}{\ensuremath{\game'_\jump}}
\newcommand{\finitegame}{\ensuremath{\game'_f}}
\newcommand{\reach}{\mathcal{R}}
\title[Optimal Strategies in Parity Games with Costs]{Easy to Win, Hard to Master:\\ Optimal Strategies in Parity Games with Costs}
\author{Alexander Weinert}
\author{Martin Zimmermann}
\address{Reactive Systems Group, Saarland University, 66123 Saarbr\"ucken, Germany}
\email{\{weinert,zimmermann\}@react.uni-saarland.de}
\thanks{Supported by the project ``TriCS'' (ZI 1516/1-1) of the German Research Foundation (DFG)}
\subjclass{D.2.4 Software/Program Verification.}
\keywords{Parity Games with Costs, Optimal Strategies, Memory Requirements, Tradeoffs}
\begin{document}

\maketitle

\begin{abstract} The winning condition of a parity game with costs requires an arbitrary, but fixed bound on the cost incurred between occurrences of odd colors and the next occurrence of a larger even one.
Such games quantitatively extend parity games while retaining most of their attractive properties, i.e, determining the winner is in NP and co-NP and one player has positional winning strategies. 

We show that the characteristics of parity games with costs are vastly different when asking for strategies realizing the minimal such bound: The solution problem becomes PSPACE-complete and exponential memory is both necessary in general and always sufficient. 
Thus, solving and playing parity games with costs optimally is harder than just winning them.
Moreover, we show that the tradeoff between the memory size and the realized bound is gradual in general.
All these results hold true for both a unary and binary encoding of costs.

Moreover, we investigate Streett games with costs.
Here, playing optimally is as hard as winning, both in terms of complexity and memory.
 \end{abstract}

\section{Introduction}

Recently, the focus of research into infinite games for the synthesis of reactive systems moved from studying qualitative winning conditions to quantitative ones. This paradigm shift entails novel research questions, as quantitative conditions induce a (partial) ordering of winning strategies. In particular, there is a notion of semantic optimality for strategies which does not appear in the qualitative setting. Thus, in the quantitative setting, one can ask whether computing optimal strategies is harder than computing arbitrary ones, whether optimal strategies are necessarily larger than arbitrary ones, and whether there are tradeoffs between different quality measures for strategies, e.g., between the size of the strategy and its semantic quality (in terms of satisfaction of the winning condition).

As an introductory example consider the classical (max)-parity condition, which is defined for an infinite sequence drawn from a finite subset of the natural numbers, so-called colors. The parity condition is satisfied if almost all occurrences of an odd color are \emph{answered} by a later occurrence of a larger even color, e.g., the sequence 
\[ \pi = 1\, 0 \, 2 \,\, 1\, 0\, 0 \, 2 \,\, 1\, 0\, 0\, 0 \, 2 \,\, 1\, 0\, 0\, 0\, 0 \, 2 \,\, 1\, 0\, 0\, 0\, 0\, 0 \, 2 \,\, 1\, 0\, 0\, 0\, 0\, 0\, 0 \, 2 \,\, 1\, 0\, 0\, 0\, 0\, 0\, 0\, 0 \, 2 \, \cdots \]
satisfies the parity condition, as every $1$ is eventually answered by a $2$.

The finitary parity condition~\cite{ChatterjeeHenzingerHorn09} is obtained by additionally requiring the existence of a bound~$b$ such that almost every odd color is answered within at most $b$ steps, i.e., $\pi$ does not satisfy the finitary parity condition, as the length of the zero-blocks is unbounded. Thus, solving a finitary parity game is a boundedness problem: In order to satisfy the condition, an arbitrary, but fixed bound has to be met. 
In particular, winning strategies for finitary parity  games are naturally ordered by the minimal bound they realize along all consistent plays. Thus, finitary parity games induce an optimization problem: Compute an optimal winning strategy, i.e., one that  guarantees the smallest possible bound.

Other examples for such quantitative winning conditions include mean payoff~\cite{EhrenfeuchtMycielski79, ZwickPaterson95} and energy~\cite{BouyerFLMS08,ChakrabartiAHS03} conditions and their combinations and extensions, request-response conditions~\cite{HornThomasWallmeierZimmermann15, Zimmermann09}, parity with costs~\cite{FZ14}, and parameterized extensions of Linear Temporal Logic (LTL)~\cite{AlurEtessamiLaTorrePeled01,FaymonvilleZimmermann14,KupfermanPitermanVardi09,Zimmermann13,Zimmermann15}. Often, these conditions are obtained by interpreting a classical qualitative winning condition quantitatively, as exemplified by the finitary parity condition introduced above. 

Often, the best algorithms for solving such boundedness conditions are as fast as the best ones for their respective qualitative variant, while the fastest algorithms for the optimization problem are worse. For example, solving games with winning conditions in  Prompt-LTL, a quantitative variant of LTL, is $\twoexp$-complete~\cite{KupfermanPitermanVardi09} (i.e., as hard as solving classical LTL games~\cite{PnueliRosner89a}), while computing optimal strategies is only known to be in $\threeexp$~\cite{Zimmermann13}. The same is true for the bounds on the size of winning strategies, which jump from tight doubly-exponential bounds to triply-exponential upper bounds. The situation is similar for other winning conditions as well, e.g., request-response conditions~\cite{HornThomasWallmeierZimmermann15}. These examples all have in common that there are no known lower bounds on the complexity and the memory requirements in the optimization variant, except for the trivial ones for the qualitative case. A notable exception are finitary parity games, which are solvable in polynomial time~\cite{ChatterjeeHenzingerHorn09} and thus simpler than parity games (according to the state-of-the-art).

In this work, we study optimal strategies in parity games with costs, a generalization of finitary parity games. In this setting, we are able to show that computing optimal strategies is indeed harder than computing arbitrary strategies, and that optimal strategies have exponentially larger memory requirements in general. A parity game with costs is played in a finite directed graph whose vertices are partitioned into those of Player~$0$ and those of Player~$1$. Starting at an initial vertex, the players move a token through the graph: If it is placed at a vertex of Player~$i$, then this player has to move it to some successor. Thus, after $\omega$~rounds, the players have produced an infinite path through the graph, a so-called play. The vertices of the graph are colored by natural numbers and the edges are labeled by (non-negative) costs. These two labelings induce the parity condition with costs: There has to be a bound~$b$ such that almost all odd colors are followed by a larger even color such that the cost incurred between these two positions is at most $b$. Thus, the sequence~$\pi$ from above satisfies the parity condition with costs, if the cost of the zero-blocks is bounded. Note that the finitary parity condition is the special case where every edge has cost one and the parity condition is the special case where every edge has cost zero. 

Thus, to win a parity game with costs, Player~$0$ has to bound the costs between requests and their responses along all plays. If Player~$0$ has any such strategy, then she has a positional strategy~\cite{FZ14}, i.e., a strategy that determines the next move based only on the vertex the token is currently at, oblivious to the history of the play. Let $n$ denote the number of vertices of the graph the game is played in and let $W$ denote its largest edge cost. Then, a positional winning strategy uniformly bounds the costs to some bound~$b \le nW$, which we refer to as the cost of the strategy. Furthermore, Mogavero et al.\ showed that the winner of a parity game with costs can be determined in $\up \cap \coup$~\cite{MogaveroMS15}. All previous work on parity games with costs was concerned with the boundedness variant, i.e., the problems ask to find some bound, but not necessarily the best one. Here, in contrast, we study optimal strategies in parity games with costs.

When considering parity games with costs as a boundedness problem, the actual edge costs can be abstracted away: It is only relevant whether an edge has cost zero or not. Thus, it suffices to consider only costs~$0$ and $1$ (typically denoted as $\eps$ and $\inc$). We call this setting one of \emph{abstract costs}. 

For the optimization variant, abstracting away the actual costs is no longer valid and encoding larger costs  by subdividing them into edges of cost~$\inc$, i.e., of cost~$1$, comes at the price of an exponential blowup in the graph's size. Thus, we also consider the case of costs in~$\nats$, given in binary encoding. Here, the upper bound~$nW$ on the cost of an optimal strategy might be exponential in the size of the game.

Furthermore, we also study Streett conditions, which generalize parity conditions by relaxing the hierarchical structure of the requests and responses: In a parity condition, a large even color answers requests for \emph{all} smaller odd colors. In contrast, in a Streett condition, any two kinds of responses are potentially independent of each other. It is known that solving the boundedness problem for finitary Streett games and for Streett games with costs is $\exptime$-complete~\cite{FZ14}, which has to be compared to the $\conp$-completeness of solving classical Streett games~\cite{Horn05}. Furthermore, finite-state strategies of exponential size suffice for Player~$0$ to implement a winning strategy in a Streett game with costs~\cite{FZ14}. As above, one obtains $mnW$ as upper bound on the cost of an optimal strategy in a Streett game with costs with $n$ vertices and largest cost $W$, where $m$ is the size of a finite-state winning strategy.

\subsection{Our Contribution}

Our first four results are concerned with the special case of abstract costs, i.e., costs~$0$ and $1$ only; the remaining ones are about games with costs in ~$\nats$.

The first result shows that determining whether Player~$0$ has a strategy in a parity game with costs whose cost is smaller than a given bound~$b$ is $\pspace$-complete. Thus, computing the bound of an optimal strategy is strictly harder than just deciding whether or not some bound exists (unless $\pspace \subseteq \up \cap \coup$). The hardness result is shown by a reduction from $\qbf$ and uses the bound~$b$ to require Player~$0$'s strategy to implement a satisfying Skolem function for the formula, where picking truth values is encoded by requests of odd colors. The lower bound is complemented by a polynomial space algorithm that is obtained from an alternating polynomial time Turing machine that simulates a finite-duration variant of parity games with costs that is won by Player~$0$ if and only if she can enforce a cost of at most $b$ in the original game. To obtain the necessary polynomial upper bound on the play length we rely on the upper bound $n$ on the optimal bound and on a first-cycle variant of parity games (cf.\ \cite{AminofRubin14}) tailored to our setting.

Our second result concerns memory requirements of optimal strategies. A corollary of the correctness of the finite-duration game yields exponential upper bounds: If Player~$0$ has a strategy of cost~$b$, then she also has one of cost~$b$ and of size~$(b+2)^d = 2^{d \log (b+2)}$, where $d$ is the number of odd colors in the game. A similar result holds true for Player~$1$ as well: If he can exceed a given bound~$b$, then he can also do so with a strategy of size~$n(b+2)^d$.

Furthermore, we show that the exponential upper bounds are asymptotically tight: We present a family~$\game_d$ of parity games with costs such that $\game_d$ has $d$ odd colors and Player~$0$ requires strategies of size~$2^{d-1}$ to play optimally in each $\game_d$. This result is  based on using the bound~$b$ to require Player~$0$ to store which odd colors have an open request and in which order they were posed. Our result improves a linear bound presented by Chatterjee and Fijalkow~\cite{CF13arxiv}. Dually, we present an exponential lower bound on the memory size necessary for Player~$1$ to exceed a given bound~$b$. 

Moreover, we study the tradeoff between memory size and cost of a strategy witnessed by the results above: Arbitrary winning strategies are as small as possible, i.e., positional, but in general have cost~$n$. In contrast, optimal strategies realize a smaller bound, but might have exponential size. Hence, one can trade cost for memory and vice versa.
We show that this tradeoff is gradual in the games~$\game_d$: There are strategies~$\sigma_1, \sigma_2, \ldots, \sigma_d$ such that $1 = \size{\sigma_1} < \size{\sigma_2} < \cdots < \size{\sigma_d} = 2^{d-1}$ and $b_1 > b_2 > \cdots > b_d$, where $b_j$ is the cost of $\sigma_j$. Furthermore, we show that the strategy~$\sigma_j$ has minimal size among all strategies of cost~$b_j$. Equivalently, the strategy~$\sigma_j$ has minimal cost among all strategies whose size is not larger than $\sigma_j$'s size. 

Both lower bounds we prove and the tradeoff result already hold for the special case of finitary parity games, which can even be solved in polynomial time~\cite{ChatterjeeHenzingerHorn09}. Hence, in this case, the gap between just winning and playing optimally is even larger.

After the results for the special case of abstract costs (i.e., $0$ and $1$ only), we consider the general case of arbitrary non-negative costs given in binary encoding. We show that determining whether Player~$0$ has a strategy in a parity game with costs whose cost is smaller than a given bound~$b$ is still $\pspace$-complete, i.e., having \emph{larger} costs does not influence the complexity of the problem. The lower bound on the complexity carries over from the special case of abstract costs but the proof of the upper bound is affected by this generalization: The upper bound on the cost of an optimal strategy is now exponential, which implies that the finite-duration variant has exponentially long plays as well. We devise a shortcut criterion to skip parts of a play and prove that this yields the desired alternating polynomial-time algorithm, which places to problem in $\pspace$. 

As before, this reduction also yields exponential upper bounds on the necessary memory to implement a winning strategy: The memory requirements do not increase asymptotically when considering arbitrary costs, and they still match the lower bounds. 

Finally, we consider quantitative Streett conditions. We show that, given Streett game with costs and a bound~$b$, determining whether Player~$0$ has a strategy with cost at most~$b$ is $\exptime$-complete. Thus, playing quantitative Streett games optimally is not harder than just winning them. This is due to the fact that just winning them is already very hard. Furthermore, we present tight exponential bounds on the memory necessary to implement a winning strategy for Player~$0$ in such a game. All lower bounds already hold for the special case of finitary Streett games while the upper bounds hold for arbitrary costs encoded in binary.

\subsection{Related Work}

Tradeoffs in infinite games have been studied before, e.g., in stochastic and timed games, one can trade memory for randomness, i.e., randomized strategies are smaller than deterministic ones~\cite{ChatterjeeAH04,ChatterjeeHP08}. A detailed overview of more recent results in this direction and of tradeoffs in multi-dimensional winning conditions is given in the thesis of Randour~\cite{RandourThesis}. The nature of these results is quite different from ours.

Lang investigated optimal strategies in the resource reachability problem on pushdown graphs~\cite{Lang14}, where there exists a finite number of counters, which may be increased and reset, but not read during a play.
He shows that in order to keep the values of the counters minimal during the play, exponential memory in the number of counters is both necessary and sufficient for Player~$0$.
While the author shows the corresponding decision problem to be decidable, he does not provide a complexity analysis of the problem.
Furthermore, the setting of the problem is quite different from the model considered in this work: He considers infinite graphs and multiple counters, but only reachability conditions, while we consider finite graphs and implicit counters tied to the acceptance condition, which is a general parity~condition.

Also, Fijalkow et al.\ proved the non-existence of a certain tradeoff between size and quality of strategies in boundedness games~\cite{FijalkowHKS15}, which refuted a conjecture with important implications for automata theory and logics. Such games are similar to those considered by Lang in that they are played in potentially infinite arenas and have multiple counters. 

Recently, Bruy{\'e}re et al. introduced window-parity games~\cite{BruyereHautemRandour16}, another quantitative variant of parity games, and proved tight complexity bounds for the scenario with multiple colorings of the arena.
They were also able to show a tight connection between window-parity and finitary parity games.

Finally, the winning conditions considered here have also been studied in the setting of delay games. In such games, Player~$0$ may delay her moves to obtain a lookahead on her opponent's moves, thereby gaining an advantage that allows her to win games she loses without delay. Now, there are potential tradeoffs between quality, size, and amount of delay. Most importantly, one can trade delay for quality and vice versa~\cite{Zimmermann17} which allows Player~$0$ to improve the quality of her strategies by taking advantage of the delay. 

\subsection{Organization of the Paper}

In Section~\ref{sec:defs}, we introduce basic definitions about infinite games; in Section~\ref{sec:costparity}, we introduce parity games with costs. First, we study the variant with abstract costs: we prove the $\pspace$-completeness result (Section~\ref{sec:complexity}), the exponential bounds on the memory requirements of optimal strategies (Section~\ref{sec:memory}), and the gradual tradeoff between cost and size of winning strategies (Section~\ref{sec:tradeoffs}). Then, we turn our attention to the setting of integer-valued costs in Section~\ref{sec:concrete} and to Streett games with costs in Section~\ref{sec:streett}. Finally, we conclude in Section~\ref{sec:conc} by discussing further research.

\section{Preliminaries}\label{sec:defs}

We denote the non-negative integers by $\nats$ and define $[n] = \set{0, 1, \ldots, n-1}$ for every $n \ge 1$.

An \textit{arena}~$\arena=(V, V_0, V_1, E, \vinit)$ consists of a finite, directed graph~$(V, E)$, a partition~$\{V_0, V_1\}$ of $V$ into the positions of Player~$0$ (drawn as circles) and Player~$1$ (drawn as rectangles), and an initial vertex~$\vinit \in V$. The size of $\arena$, denoted by $\size{\arena}$, is defined as $\size{V}$.

A \textit{play} in $\arena$ is an infinite path~$\rho = v_0 v_1 v_2 \cdots$ through $(V, E)$ starting in $\vinit$.
To rule out finite plays, we require every vertex to be non-terminal.
A \textit{game}~$\game = ( \arena, \wincond )$ consists of an arena $\arena$ with vertex set~$V$ and a set~$\wincond \subseteq V^\omega$ of winning plays for Player~$0$.
The set of winning plays for Player~$1$ is $V^\omega \setminus \wincond$.

A \textit{strategy} for Player~$i$ is a mapping $\sigma \colon V^*V_i \rightarrow V$ where $(v, \sigma(wv)) \in E$ for all $wv \in V^* V_i$.
We say that $\sigma$ is \textit{positional} if $\sigma(wv) = \sigma(v)$ for every $wv \in V^*V_i$.
We often view positional strategies as a mapping~$\sigma \colon V_i \rightarrow V$.
A play $v_0 v_1 v_2 \cdots$ is \textit{consistent} with a strategy~$\sigma$ for Player~$i$, if $v_{j+1} = \sigma( v_0 \cdots v_j)$ for every~$j$ with $v_j \in V_i$. A strategy~$\sigma$ for Player~$i$ is a \textit{winning strategy} for $\game$ if every play that is consistent with $\sigma$ is won by Player~$i$.
If Player~$i$ has a winning strategy, then we say she wins $\game$.
\textit{Solving} a game amounts to determining its winner.

A \textit{memory structure}~$\mem = (M, \init, \update)$ for an arena $(V, V_0, V_1, E, \vinit)$ consists of a finite set~$M$ of memory states, an initial memory state $\init \in M$, and an update function~$\update\colon M \times E \rightarrow M$.
The update function can be extended to finite play prefixes in the usual way: $\update^+(m, v) = m$ and $\update^+(m, w v v') = \update(\update^+(m, w v), (v,v'))$ for $w \in V^*$ and $(v,v') \in E$.
A next-move function $\nxt \colon V_i \times M \rightarrow V$ for Player~$i$ has to satisfy $(v, \nxt(v, m)) \in E$ for all $v \in V_i$ and all $m \in M$.
It induces a strategy~$\sigma$ for Player~$i$ with memory~$\mem$ via $\sigma(v_0\cdots v_j) = \nxt(v_j, \update^+(\init, v_0 \cdots v_j))$.
A strategy is called \textit{finite-state} if it can be implemented by a memory structure.
We define $\card{\mem} = \card{M}$.
The size of a finite-state strategy is the size of a smallest memory structure implementing it.

An arena $\arena = (V, V_0, V_1, E, \vinit)$ and a memory structure $\mem = (M, \init, \update)$ for $\arena$ induce the expanded arena $\arena\times\mem = (V \times M, V_0 \times M, V_1 \times M, E', (\vinit, \init))$ where~$E'$ is defined via $((v,m), (v',m')) \in E'$ if and only if $(v,v') \in E$ and $\update(m, (v,v') ) = m'$.
Every play $\rho = v_0 v_1 v_2\cdots$ in $\arena$ has a unique extended play $\ext(\rho) = (v_0, m_0) (v_1, m_1)
(v_2, m_2) \cdots$ in $\arena \times \mem$ defined by $m_0 = \init$ and $m_{j+1} = \update(m_j, (v_j, v_{j+1}))$, i.e., $m_j = \update^+(\init, v_0 \cdots v_j)$. The extended play of a finite play prefix in $\arena$ is defined similarly.

\section{Parity Games with Costs}\label{sec:costparity}

In this section, we introduce the parity condition with costs~\cite{FZ14}. Fix an arena~$\arena = (V, V_0, V_1, E, \vinit)$. A cost function for $\arena$ is an edge-labeling~$\cost \colon E \rightarrow \set{\eps,\inc}$.\footnote{Note that using the abstract costs~$\eps$ and $\inc$ essentially entails a unary encoding of costs. We discuss the case of a binary encoding of arbitrary costs in Section~\ref{sec:concrete}.}
Edges labeled with $\inc$ are called increment-edges while edges labeled with
$\eps$ are called $\eps$-edges. We extend the
edge-labeling to a cost function over plays obtained by
counting the number of increment-edges traversed during the play, i.e., $\cost(\rho) \in \nats \cup \set{\infty}$ for any play~$\rho$.
The cost of a finite play infix is defined analogously. Also, fix a coloring~$\col \colon V \rightarrow
\nats$ of $\arena$'s vertices.
The classical parity condition requires almost all occurrences of odd colors to be answered by a later occurrence of a larger even color.
Hence, let $\answer{c} = \set{c' \in \nats \mid c' \ge c \text{ and $c'$ is even}}$ be the
set of colors that answer a request of color~$c$.

Let $\rho = v_0 v_1 v_2 \cdots$ be a play.
We define the cost-of-response at position~$j \in \nats$ of $\rho$ by 
\[
\paritydist(\rho, j) = \min \set{ \cost (v_j \cdots v_{j'}) \mid  j' \ge j \text{ and } \col(v_{j'}) \in \answer{\col(v_j)} }\enspace, 
\]
where we use $\min \emptyset = \infty$, i.e., $\paritydist(\rho, j)$ is the cost of the infix of $\rho$ from position~$j$ to its first answer, and $\infty$ if there is no answer.

We say that a request at position~$j$ is answered with cost~$b$, if $\paritydist(\rho, j) = b$.
Consequently, a request with an even color is answered with cost zero.
The cost-of-response of an unanswered request is infinite, even if it only incurs finite cost during the remainder of the play, i.e., if there are only finitely many increment-edges succeeding the request.

The parity condition with costs is defined as
\[\cp(\col, \cost) = \set{ \rho \in V^\omega \mid \limsup\nolimits_{j\rightarrow \infty} \paritydist(\rho, j) < \infty } \enspace,\]
i.e., $\rho$ satisfies the condition, if there exists a bound~$b \in \nats$ such that all but finitely many requests are answered with cost less than $b$.
In particular, only finitely many requests may be unanswered, even with finite cost.
Note that the bound~$b$ may depend on the play $\rho$.

A game~$\game = (\arena, \cp(\col, \cost))$ is called a parity game with costs and its size is defined to be $\size{\arena}$.
If $\cost$ assigns $\eps$ to every edge, then $\cp(\col, \cost)$ is a classical (max-) parity condition, denoted by $\parity(\col)$.
Dually, if $\cost$ assigns $\inc$ to every edge, 
then $\cp(\col, \cost)$ is equal to the
finitary parity condition over $\col$, as introduced by Chatterjee et al.~\cite{ChatterjeeHenzingerHorn09} and denoted by $\finparity(\col)$. In these cases, we refer to $\game$ as a parity or a finitary parity game, respectively.

As most of our examples are finitary parity games, we omit the edge-labeling when drawing them for the sake of readability.
For the same reason, we sometimes use non-negative integer costs on the edges of finitary parity games.
Such games can be transformed into finitary parity games as defined above by subdividing these edges and coloring the newly added vertices with color~$0$.
In all cases in this work, this only incurs a polynomial blowup.

\begin{figure}
  \centering
  \hfill
  \begin{tikzpicture}[thick]
	\node[p0,assign] (0-0) {$1$};
	\node[p1,right=of 0-0,assign] (0-1) {$0$};
	\node[p0,right=of 0-1,assign] (0-2) {$2$};

	\path
	  ($(0-0) - (1cm,0)$) edge (0-0)
	  (0-0) edge node[anchor=south] {$\inc$} (0-1)
	  (0-1) edge[loop above] node[anchor=south] {$\inc$} (0-1)
	  (0-1) edge node[anchor=south] {$\inc$} (0-2)
	  (0-2) edge[bend left] node[anchor=north] {$\inc$} (0-0);
  \end{tikzpicture}
  \hfill
  \begin{tikzpicture}[thick]
	\node[p0,assign] (0-0) {$1$};
	\node[p1,right=of 0-0,assign] (0-1) {$0$};
	\node[p0,right=of 0-1,assign] (0-2) {$2$};

	\path
	  ($(0-0) - (1cm,0)$) edge (0-0)
	  (0-0) edge node[anchor=south] {$\inc$} (0-1)
	  (0-1) edge[loop above] node[anchor=south] {$\eps$} (0-1)
	  (0-1) edge node[anchor=south] {$\inc$} (0-2)
	  (0-2) edge[bend left] node[anchor=north] {$\inc$} (0-0);
  \end{tikzpicture}
  \hspace*{\fill}
  \caption{Two parity games with costs. Player~$1$ only wins the left game.}
  \label{fig:parity-cost-example}
\end{figure}
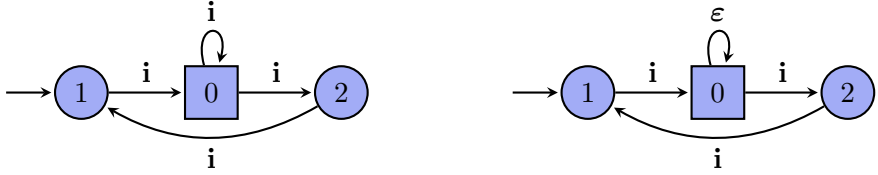

Player~$1$ has two ways of winning a parity game with costs: Either he violates the classical parity condition, or he delays answers to requests arbitrarily.
Consider the two parity games with costs shown in Figure~\ref{fig:parity-cost-example}.
In the game on the left-hand side, Player~$1$ has a winning strategy, by taking the self-loop of the middle vertex~$j$ times upon the~$j$-th visit to it via the increment-edge from the leftmost vertex.
Thus, he delays answers to the request for~$1$ arbitrarily and wins by the second condition.
In the game on the right-hand side, however, Player~$1$ does not have a winning strategy.
If he eventually remains in the vertex labeled with $0$, then there are only finitely many requests, only one of which is unanswered.
Thus, the cost of the play is~$0$, i.e., it is won by Player~$0$.
If he, on the other hand, always leaves the middle vertex eventually, then each request is answered with cost~$2$, hence Player~$0$ wins as well.

\begin{thm}
\label{thm:previouswork}\leavevmode
	\begin{enumerate}
		\item\label{thm:previouswork:parity}
		 Parity games can be solved in quasi-polynomial time and the problem is in $\up \cap \coup$. The winner has a positional winning strategy~\cite{CaludeJainKhoussainovLiStephan17,Jurdzinski98,EmersonJutla91,Mostowski91}.
		\item\label{thm:previouswork:finitary}
		Solving finitary parity games is in $\ptime$. If Player~$0$ wins, then she has a positional winning strategy, but Player~$1$ has in general no finite-state winning strategy~\cite{ChatterjeeHenzingerHorn09}.
		\item\label{thm:previouswork:cost}
		Parity games with costs can be solved in quasi-polynomial time and the problem is in $\up \cap \coup$.  If Player~$0$ wins, then she has a positional winning strategy, but Player~$1$ has in general no finite-state winning strategy~\cite{MogaveroMS15,CaludeJainKhoussainovLiStephan17,FZ14}.
	\end{enumerate}
\end{thm}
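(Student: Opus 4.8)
Since Theorem~\ref{thm:previouswork} is a compendium of results from prior work, the plan is not to produce a new argument but to recall the constructions underlying each of the three items and point to the literature; I would organize the discussion along the three parts.

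For item~\ref{thm:previouswork:parity}, positional determinacy of parity games is the Emerson--Jutla / Mostowski theorem, which I would recall via the standard induction on the number of colors (equivalently, via the $\mu$-calculus fixed-point characterization). Membership in $\up \cap \coup$ follows from Jurdzi\'nski's argument: a $\up$-machine guesses a parity progress measure of polynomial size and verifies it in polynomial time, and the canonical (least) progress measure makes the dual guess work as well, giving $\coup$. The quasi-polynomial running time is the recent result of Calude et al., obtained by tracking, in succinct form, statistics about the dominating colors seen along a play; I would simply cite it.

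For item~\ref{thm:previouswork:finitary}, I would recall Chatterjee--Henzinger--Horn's fixed-point algorithm (equivalently, their reduction to a polynomial number of B\"uchi-flavoured subgames), which yields both the $\ptime$ bound and a positional winning strategy for Player~$0$. For the negative part, the left arena of Figure~\ref{fig:parity-cost-example} already serves as the witness that Player~$1$ has in general no finite-state winning strategy: he wins only by taking the self-loop strictly more often on each successive visit, which no finite-memory strategy can do, and conversely any finite-state strategy of Player~$1$ bounds the loop length, hence answers the request for color~$1$ with uniformly bounded cost and loses.

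For item~\ref{thm:previouswork:cost}, which generalizes item~\ref{thm:previouswork:finitary}, the positional determinacy for Player~$0$ and the finite-state impossibility for Player~$1$ are from FZ14 --- again the left game of Figure~\ref{fig:parity-cost-example} witnesses the latter, now read under abstract costs. The $\up \cap \coup$ bound is the progress-measure-style argument of Mogavero et al., and the quasi-polynomial bound follows by composing the reduction of cost-parity games to parity games from FZ14 with the Calude et al.\ algorithm. The one mildly delicate point, and the only step I would spell out with care, is that these last two bounds are not literally inherited from the parity case: a naive reduction that tracks the cost incurred since the oldest open request up to the positional bound $nW$ would blow up the arena (exponentially in the binary encoding, and by a factor $n$ even in the abstract-cost case), so one must invoke the reduction of FZ14 (resp.\ the direct argument of Mogavero et al.) that keeps the instance polynomial in $n$ and the number of colors. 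Everything else is textbook and needs only pointers to the cited papers.
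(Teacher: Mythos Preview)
The paper does not prove Theorem~\ref{thm:previouswork} at all: it is stated purely as a summary of prior work, with citations attached to each item, and the text moves on immediately to Corollary~\ref{corollary_costupperbound}. Your proposal is therefore not so much a different proof as a more detailed exposition than the paper provides. The attributions you give are accurate and match the paper's citations (Emerson--Jutla/Mostowski for positional determinacy, Jurdzi\'nski for $\up\cap\coup$, Calude et al.\ for quasi-polynomial time, Chatterjee--Henzinger--Horn for finitary parity, and FZ14/Mogavero et al.\ for the cost-parity case); your observation that Figure~\ref{fig:parity-cost-example} witnesses the infinite-memory lower bound for Player~$1$ is also how the paper uses that example, though only informally in the surrounding discussion rather than as part of a proof. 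If anything, your sketch would be a welcome expansion, but for the paper's purposes a bare citation list is all that is offered.
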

A winning strategy for Player~$0$ in a parity game with costs does not have to realize a uniform bound~$b$ on the value~$\limsup_{j\rightarrow \infty} \paritydist(\rho , j)$ among all plays~$\rho$ that are consistent with $\sigma$, but the bound may depend on the play.
To capture the cost of a strategy, we first define the cost of a play~$\rho$ as $\cost(\rho) = \limsup_{j\rightarrow \infty} \paritydist(\rho , j)$ and the cost of a strategy~$\sigma$ as $\cost(\sigma) = \sup_{\rho} \cost(\rho)$, where the supremum ranges over all plays~$\rho$ that are consistent with~$\sigma$. A strategy is optimal for $\game$ if it has minimal cost among all strategies for $\game$.
Analogously, for a strategy~$\tau$ for Player~$1$, we define $\cost(\tau) = \inf_\rho \cost(\rho)$, where $\rho$ again ranges over all plays consistent with~$\tau$.

A corollary of Theorem~\ref{thm:previouswork}(\ref{thm:previouswork:cost}) yields an upper bound on the cost of an optimal strategy: A straightforward pumping argument shows that a positional winning strategy, which always exists if there exists any winning strategy, realizes a uniform bound~$b \le n$ for every play, where~$n$ is the number of vertices of the game.

\begin{cor}
\label{corollary_costupperbound}
Let $\game$ be a parity game with costs with $n$ vertices. If Player~$0$ wins $\game$, then she has a strategy~$\sigma$ with $\cost(\sigma) \le n$, i.e., an optimal strategy has cost at most~$n$.
\end{cor}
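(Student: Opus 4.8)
The plan is to apply Theorem~\ref{thm:previouswork}(\ref{thm:previouswork:cost}): as Player~$0$ wins~$\game$, she has a positional winning strategy~$\sigma$, and it then suffices to prove $\cost(\sigma) \le n$ (an optimal strategy is then at most as costly, so it too has cost at most~$n$). Fix such a~$\sigma$ and let $G_\sigma$ be the subgraph of~$\arena$ in which, at each vertex of Player~$0$, only the edge selected by~$\sigma$ is kept; every play consistent with~$\sigma$ is an infinite path in~$G_\sigma$. Suppose, for contradiction, that $\cost(\sigma) \ge n+1$. Then some play $\rho = v_0 v_1 v_2 \cdots$ consistent with~$\sigma$ satisfies $\limsup_{j \to \infty} \paritydist(\rho, j) \ge n+1$, so there are infinitely many positions~$j$ with $\paritydist(\rho,j) \ge n+1$; and since~$\sigma$ is winning, only finitely many requests along~$\rho$ go unanswered, hence infinitely many of these~$j$ carry a request of an odd colour whose first answer occurs at some position~$j'$ with $\cost(v_j \cdots v_{j'}) \ge n+1$.

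The first, elementary step is a per-infix pumping argument. Fix such a~$j$, put $c = \col(v_j)$, and look at the infix $v_j \cdots v_{j'}$, which traverses at least $n+1$ increment-edges. Among the sources of its first $n+1$ increment-edges --- these are $n+1$ positions, all in $\{j, \dots, j'-1\}$ --- two, say positions $k_1 < k_2$, carry the same vertex, so $v_{k_1} \cdots v_{k_2}$ is a cycle of~$G_\sigma$ that contains an increment-edge (the one leaving~$k_1$) and, since $k_1, k_2 < j'$ and $j'$ is the \emph{first} answer to the request at~$j$, contains no vertex of colour in~$\answer c$. As~$\sigma$ is positional, Player~$0$'s moves depend only on the current vertex, so from position~$k_1$ Player~$1$ can drive the token around this cycle any number of extra times and then resume; he can thus answer the request at~$j$ with arbitrarily large finite cost, or never answer it. This is \emph{not yet} a contradiction --- and here lies the real obstacle --- because $\cost$ is defined through a~$\limsup$ and is therefore blind to finitely many expensive, or even unanswered, requests.

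The second step removes this obstacle by turning one delay-cycle into infinitely many ever-costlier answers. Because infinitely many positions along~$\rho$ are expensive, a pigeonhole over the finitely many odd colours, over the vertices of~$\arena$, and over its finitely many simple cycles yields a single odd colour~$c$, a single vertex~$u$ with $\col(u) = c$, a single simple cycle~$C^\ast$ of~$G_\sigma$ that contains an increment-edge and avoids~$\answer c$, and a single vertex of colour in~$\answer c$, all of which~$\rho$ visits infinitely often --- hence all lying in the unique strongly connected set~$S$ of vertices that~$\rho$ visits infinitely often (a set of vertices of~$G_\sigma$). Player~$1$ now follows~$\rho$ until the token is permanently inside~$S$ and then repeats, for $\ell = 1, 2, 3, \dots$, the pattern: route to~$u$ (opening a request for~$c$), continue into~$C^\ast$ along a sub-path of~$\rho$ that avoids~$\answer c$, loop~$C^\ast$ exactly~$\ell$ times, route to the chosen vertex of colour in~$\answer c$ (answering that request with cost at least~$\ell$), and route back to~$u$; every routing step exists by strong connectivity of~$S$, and positionality of~$\sigma$ makes the whole play consistent with~$\sigma$. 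This play answers infinitely many requests for~$c$ with unbounded cost, so $\limsup_{j \to \infty} \paritydist(\rho,j) = \infty$, contradicting that~$\sigma$ wins. Hence $\cost(\sigma) \le n$, as claimed. I expect the crux to be exactly this passage from \myquot{one expensive answer can be forced} to \myquot{a reusable delay-cycle exists}: the first step is just pigeonhole on increment-edges versus vertices, whereas the second must exploit that an expensive answer occurring infinitely often recurs inside a single strongly connected component, which is what lets Player~$1$ amplify it.
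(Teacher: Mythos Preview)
Your proposal is correct and follows the same approach the paper sketches: take a positional winning strategy (Theorem~\ref{thm:previouswork}(\ref{thm:previouswork:cost})) and apply a pumping argument. The paper dismisses the argument as ``straightforward'' and gives no details; your two-step elaboration --- first extracting a delay-cycle from an expensive infix, then using recurrence inside the infinitely-visited set~$S$ to amplify it into unbounded cost --- is exactly the care that the $\limsup$-based definition demands and that the paper omits. One very small imprecision: the closed walk~$v_{k_1}\cdots v_{k_2}$ found in step one need not be simple, but a simple sub-cycle retaining an increment-edge and still avoiding~$\answer c$ can be extracted, which is what your pigeonhole over ``finitely many simple cycles'' in step two requires.
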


This bound is tight, as it trivial to construct finitary parity games~$\game_n$ with~$n+1$ vertices and a unique play such that Player~$0$ wins~$\game_n$ with respect to bound~$b = n$, but not with respect to any bound~$b' < n$.

\section{The Complexity of Solving Parity Games with Costs Optimally}\label{sec:complexity}

In this section we study the complexity of determining the cost of an optimal strategy for a parity game with costs. Recall that solving such games is in $\up \cap \coup$ (and therefore unlikely to be $\np$-complete or $\conp$-complete) while solving the special case of finitary parity games is in $\ptime$. Our main result of this section shows that checking whether a strategy of cost at most $b$ exists is $\pspace$-complete, where hardness already holds for finitary parity games. Therefore, this decision problem is harder than just solving the game (unless $\pspace \subseteq \up \cap \coup$, respectively $\pspace \subseteq \ptime$). 

\begin{thm}
The following problem is $\pspace$-complete: \myquot{Given a parity game with costs~$\game$ and a bound~$b \in \nats$, does Player~$0$ have a strategy~$\sigma$ for $\game$ with $\cost(\sigma) \le b$?}
\end{thm}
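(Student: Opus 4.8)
The plan is to establish membership in $\pspace$ and $\pspace$-hardness separately, with hardness carried out first since it drives the intuition for the membership argument.

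For the lower bound, I would reduce from $\qbf$ (truth of a quantified Boolean formula $\Phi = \exists x_1 \forall x_2 \exists x_3 \cdots Q x_k\, \varphi$). The idea is to build a finitary parity game $\game$ together with a bound $b$ so that Player~$0$ has a strategy of cost at most $b$ iff $\Phi$ is true. The arena proceeds in $k$ phases, one per quantifier: in the $i$-th phase the player owning variable $x_i$ (Player~$0$ for existential, Player~$1$ for universal) chooses a truth value. The crucial encoding trick, as hinted in the introduction, is that \emph{choosing} a truth value for $x_i$ is recorded by opening a request of some odd color; since the parity condition with costs counts the cost incurred before a larger even color appears, and since a positional strategy would be forced to answer requests quickly, the tight bound $b$ forces Player~$0$ to ``remember'' enough of the quantifier assignment to navigate the subsequent verification of $\varphi$ correctly. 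After all $k$ phases, control passes to a gadget where Player~$1$ picks a clause (or literal) of $\varphi$ and the play is routed so that a request remains open with large cost precisely when the chosen clause is falsified by the assignment encoded in the open requests. Choosing $b$ just below the cost of any ``cheating'' play makes a cost-$\le b$ strategy for Player~$0$ exactly a winning Skolem function. The main delicacy here is designing the colors and the $\eps$/$\inc$-labeling so that the cost of a play is small iff the encoded assignment is consistent across phases and satisfies $\varphi$; getting the alternation of $\exists$ and $\forall$ right while keeping the whole construction polynomial is where the work lies.

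For membership in $\pspace$, I would exploit Corollary~\ref{corollary_costupperbound}: if Player~$0$ has any winning strategy she has one of cost at most $n$, so we may assume $b \le n$ (if the given $b$ exceeds $n$, the question reduces to solving the game, which is in $\up \cap \coup \subseteq \pspace$). The plan is to define a \emph{finite-duration} variant of the parity game with costs in which a play is stopped after polynomially many steps and a winner is declared according to whether Player~$0$ managed to keep all requests answered with cost at most $b$ along the truncated play; one must prove this finite-duration game has the same winner as the ``cost $\le b$'' question in the original game. To bound the play length polynomially I would appeal to a first-cycle-game style argument (in the spirit of \cite{AminofRubin14}) adapted to the cost setting: roughly, once a play in $\arena$ revisits a vertex without having opened an unanswerable-within-$b$ request in between, the loop can be excised, so it suffices to play for a number of steps polynomial in $n$ and $b$ (hence polynomial in $n$). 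Since the truncated game has polynomially long plays and the winning condition on a truncated play is checkable in polynomial time, it is solved by an alternating polynomial-time Turing machine — the two players simply alternate announcing moves, the machine tracks the current vertex and the cost-of-response accounting for each currently open request in polynomial space, and accepts iff Player~$0$ wins. By $\aptime = \pspace$ this yields the upper bound.

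The main obstacle I anticipate is the correctness proof of the finite-duration game, specifically the combinatorial argument that truncating after polynomially many steps does not change the winner: one direction (a strategy good in the infinite game restricts to one good in the finite game) is easy, but the converse — lifting a winning finite-duration strategy to an infinite strategy of cost $\le b$ — requires showing that cycles can be safely repeated without ever letting $\limsup_{j\to\infty}\paritydist(\rho,j)$ exceed $b$, and this is exactly where the first-cycle analysis and the bound $b \le n$ must be combined carefully. I would structure that lemma around: (i) defining when a finite play infix is ``safe'' (all requests opened and answered within it cost $\le b$, or are still open but the accumulated cost is still $\le b$), (ii) showing a safe simple-cycle-free segment has length $\bigo(n \cdot b)$, and (iii) concatenating safe segments to build the infinite play.
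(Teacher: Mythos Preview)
Your hardness reduction is essentially the paper's: a $\qbf$ encoding in which truth-value choices open requests of distinct odd colors, Player~$1$ picks a clause, Player~$0$ picks a literal, and a tight bound~$b$ separates faithful answers from cheating ones. The paper implements this with colors $4j+1$ and $4j+3$ for the two truth values of~$x_j$, a one-step delay between them, and matching even answers $4j+2$, $4j+4$; the bound is $3n+5$. Your sketch captures exactly this idea.

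For membership you also have the right global architecture (assume $b\le n$, pass to a finite-duration game, simulate on an alternating polynomial-time machine), but the cycle argument you propose is too coarse and this is a real gap, not just a detail to fill in. A vertex repetition in $\arena$ alone does \emph{not} let you declare a winner: whether a cycle in $\arena$ is good for Player~$0$ depends on the vector of open-request costs at the moment the cycle is entered, so your ``safe simple-cycle-free segment has length $\bigo(n\cdot b)$'' has no clear meaning---simple-cycle-free in $\arena$ gives length $\le n$, while simple-cycle-free in the product with the request-cost state is exponential. The paper resolves this by working in an expanded parity game $\extgame=\arena\times\mem$ where $\mem$ records, for each odd color, its accumulated cost in $\{\bot\}\cup[b+1]$ together with an overflow counter in $[n+1]$, and then introducing a \emph{preorder} $\sqsubseteq$ on these memory states (``domination'' via relevant requests). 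A play is \emph{settled} once it contains a \emph{dominating cycle}: a segment whose $\arena$-endpoints coincide, whose overflow counters coincide, and whose endpoint memory states are $\sqsubseteq$-comparable in the direction matching the parity of the maximal color on the segment. The polynomial bound on unsettled prefixes (the paper gets $(n+1)^6$) comes from a layered decomposition---overflow positions, then debt-free positions, then request-adding positions, then relevance-reducing positions, then increment edges, then $\eps$-runs---each layer contributing a factor of at most $n$. Your ``first-cycle'' intuition is the right starting point, but without the domination preorder and this decomposition you will not get a polynomial bound, and without it the lifting of a finite-duration winning strategy to an infinite one (the direction you flag as hard) cannot be carried out: removing a cycle is only sound when the memory state at the start dominates the one at the end in the appropriate direction.
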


Note that we do not specify how $b$ is encoded: We will argue at the beginning of Section~\ref{sec_pspacemembership} that the problem is trivial for bounds~$b > n$, i.e., the complexity of the problem is independent of the encoding of $b$.

The proof of the theorem is split into two lemmas, Lemma~\ref{lemma_pspacemembership} showing membership and Lemma~\ref{lemma_pspacehardness} showing hardness, which are presented in Section~\ref{sec_pspacemembership} and Section~\ref{sec_pspacehardness}, respectively.

\subsection{Solving Parity Games with Costs Optimally is in Polynomial Space}
\label{sec_pspacemembership}

The remainder of this section is dedicated to showing that parity games with costs with respect to a given bound can be solved in polynomial space.
To this end, we fix a parity game with costs $\game = (\arena, \cp(\col, \cost))$ with $\arena = (V, V_0, V_1, E, \vinit)$ and a bound~$b$.
Let $n = \card{V}$.
First, let us remark that we can assume w.l.o.g.\ $b < n$:
If $b \geq n$, then, due to Corollary~\ref{corollary_costupperbound}, we just have to check whether Player~$0$ wins $\game$. 
This is possible in polynomial space due to Theorem~\ref{thm:previouswork}(\ref{thm:previouswork:cost}).

To obtain a polynomial space algorithm, we first turn  the quantitative game $\game$ into a qualitative parity game $\extgame$ in which the cost of open requests is explicitly tracked up to the bound $b$.
To this end, we use functions~$r$ mapping odd colors to $\set{\bot} \cup [b+1] = \set{\bot} \cup \set{0,\dots,b}$, where $\bot$ denotes that no open request of this color is pending.
Additionally, whenever the bound~$b$ is exceeded for some request, all open requests are reset and a so-called overflow counter is increased, up to value~$n$. This accounts for a bounded number of unanswered requests, which are allowed by the parity condition with costs. 
Intuitively, Player~$1$ wins $\extgame$ if he either exceeds the upper bound $b$ at least $n$ times, or if he enforces an infinite play of finite cost with infinitely many unanswered requests.
If he wins by the former condition, then he can also enforce infinitely many excesses of $b$ via a pumping argument.
The latter condition accounts for plays in which Player~$1$ wins without violating the bound~$b$ repeatedly, but by violating the classical parity condition.
We show that Player~$0$ has a strategy $\sigma$ in $\game$ with $\cost(\sigma) \leq b$ if and only if she wins $\extgame$ from its initial vertex.

The resulting game~$\extgame$ is of exponential size in the number of odd colors~$d$ and can therefore in general not be solved in polynomial space in~$n$.
Thus, in a second step, we construct a finite-duration variant~$\finitegame$ of $\extgame$, which is played on the same arena as $\extgame$, but the winner of a play is determined after a polynomial number of moves.
We show that Player~$0$ wins $\extgame$ if and only if she wins $\finitegame$.
To conclude, we show how to simulate $\finitegame$ on the fly on an alternating Turing machine in polynomial time in~$n$, which yields a polynomial space algorithm by removing the alternation~\cite{ChandraKS81}.

We begin by defining $\extgame$.
Let $R =( \set{\bot} \cup[b+1])^D$ be the set of request functions, where~$D$ is the set of odd colors occurring in $\game$.
Here, $r(c) = \bot$ denotes that there is no open request for the color $c$, while $r(c) \neq \bot$ encodes that the oldest open request of $c$ has incurred cost~$r(c)$. 
Using these functions, we define the memory structure $\mem = ([n+1] \times R, \init, \update)$, where the first component~$[n+1]$ implements the previously mentioned overflow counter.
It suffices to bound this counter by $n$, since, as we will show, if Player~$1$ can enforce~$n$ overflows in $\game$, then he can also enforce infinitely many by a pumping argument.
If this counter reaches~$n$, we say that it is saturated.

The initial memory state $\init$ is the pair $(0, r_{v_\initmark})$, where, for an arbitrary~$v \in V$, $r_v$ is the function mapping all odd colors to $\bot$, if $\col(v)$ is even.
If $\col(v)$ is odd, however, $r_v$ maps~$\col(v)$ to $0$, and all other odd colors to $\bot$.
The update function $\update(m, e)$ is defined such that traversing an edge $e = (v, v')$ updates the memory state $(o,r)$ to the memory state $(o',r')$ by performing the following steps in order:
\begin{itemize}
	\item If $e$ is an increment-edge, then for each $c$ with $r(c) \neq \bot$, set $r'(c) = r(c) + 1$.
		  For all other~$c$, set~$r'(c) = r(c) = \bot$.
		  If~$e$ is an~$\eps$-edge, however, then set~$r'(c) = r(c)$ for all~$c$.
	\item Now, if there exists a color $c$ such that $r'(c) > b$, then set~$r'(c) = \bot$ for all~$c$ and set $o'$ to the minimum of $o+1$ and $n$.
		Otherwise, set~$o'$ to~$o$.
	\item If $\col(v')$ is even, set~$r(c')$ to~$\bot$ for every $c' \leq \col(v')$.
	\item If $\col(v')$ is odd, then set $r'(\col(v'))$ to the maximum of the previous value of $r'(\col(v'))$ and $0$, where $\max\set{\bot, 0} = 0$.
\end{itemize}
The resulting $o'$ is at most $n$ and the resulting function~$r'$ is an element of $R$.
We show an example of the evolution of the memory states on a play prefix in Figure~\ref{fig:pspace-membership:example-requests}.
In particular, the move from the fifth vertex to the sixth one causes an overflow that resets all requests, but also increments the overflow counter.
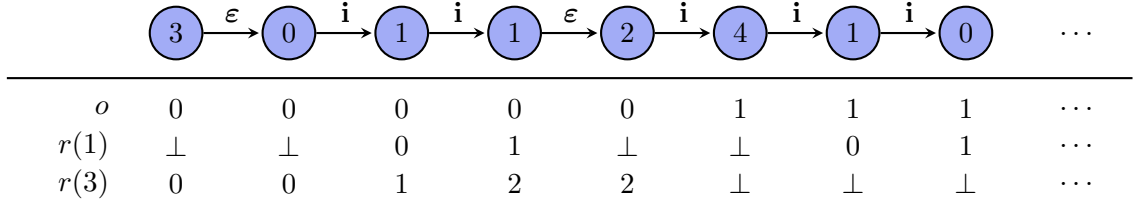
\begin{figure}
\centering
	\begin{tikzpicture}[thick,xscale=1.5]
		\begin{scope}
			\foreach \color [count=\index from 0] in {3,0,1,1,2,4,1,0} {
				\node[p0,assign] (node-\index) at (\index,0) {\color};
			}
			\node (node-8) at (8,0) {$\dots$};
			
			\foreach \label [count=\from from 0,count=\to from 1] in {\eps,\inc,\inc,\eps,\inc,\inc,\inc} {
				\path (node-\from) edge node [anchor=south] {$\label$} (node-\to);
			}
		\end{scope}
		
		\draw[] (-1.5,-.6) -- (8.5,-.6);
		
		\begin{scope}[shift={(0,-1)}]
			\node[anchor=east] at (-.5,0) {$o$};
			\foreach \overflowcounter [count=\xpos from 0] in {0,0,0,0,0,1,1,1,\dots} {
				\node at (\xpos,0) {$\overflowcounter$};
			}
		\end{scope}
		
		\begin{scope}[shift={(0,-1.5)}]
			\node[anchor=east] at (-.5,0) {$r(1)$};
			\foreach \requestone [count=\xpos from 0] in {\bot,\bot,0,1,\bot,\bot,0,1,\dots} {
				\node at (\xpos,0) {$\requestone$};
			}
		\end{scope}
		
		\begin{scope}[shift={(0,-2)}]
			\node[anchor=east] at (-.5,0) {$r(3)$};
			\foreach \requestthree [count=\xpos from 0] in {0,0,1,2,2,\bot,\bot,\bot,\dots} {
				\node at (\xpos,0) {$\requestthree$};
			}
		\end{scope}

	\end{tikzpicture}
	\caption{Example of the evolution of the request-functions during a play for~$b =2$.}
	\label{fig:pspace-membership:example-requests}
\end{figure}

We define the parity game $\extgame = (\arena \times \mem, \parity(\col'))$, with $\col'(v, o, r) = \col(v)$ for $o < n$ and $\col'(v, n, r) = 1$.
Note that every play that encounters a vertex of the form $(v, n, r)$ at some point, for some~$v \in V$ and some~$r \in R$, is winning for Player~$1$, since these vertices form a winning sink component for him.
Intuitively, reaching a vertex of this form means that Player~$1$ is able to often open~$n$ requests that are not answered with cost at most~$b$.
We show that this implies him being able to open infinitely many such requests.
However, there is another way of winning the original parity game with costs~$\game$ for him, i.e., by violating the underlying parity condition of~$\game$.
This is possible even if no request incurs a cost greater than~$b$.
Hence,~$\extgame$ is a parity game.

Let $\arena' = \arena \times \mem = (V', V_0', V_1', E', \vinit')$, in particular $V' = V \times M $ and $V_i' = V_i \times M$.
Even though $\extgame$ has no cost function, we say that an edge $((v,m),(v',m'))$ of $\extgame$ is an increment-edge if $(v,v')$ is an increment-edge in $\game$, otherwise we call it an $\eps$-edge.

It suffices to solve $\extgame$ to determine whether Player~$0$ can bound the cost in $\game$ by $b$.

\begin{lem}
\label{lem:cost-parity-to-parity}
Player~$0$ has a strategy $\sigma$ in $\game$ with $\cost(\sigma) \leq b$ if and only if Player~$0$ wins~$\extgame$.
\end{lem}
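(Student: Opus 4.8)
The plan is to prove the two implications separately, translating strategies between $\game$ and $\extgame$ via the memory structure $\mem$. The key observation linking the two games is that for any play $\rho = v_0 v_1 v_2 \cdots$ in $\game$ with extended play $\ext(\rho) = (v_0, o_0, r_0)(v_1, o_1, r_1) \cdots$ in $\arena \times \mem$, the memory component $r_j$ faithfully records the accrued cost of every currently open request, capped at $b+1$, and the component $o_j$ counts how often some request has exceeded cost $b$ (capped at $n$). Concretely, I would first establish the following invariant by induction on $j$: if $\col(v_k)$ is odd for some $k \le j$ and the request at position $k$ is not answered in $v_k \cdots v_j$ and no overflow-reset has occurred in between, then $r_j(\col(v_k)) \ge \cost(v_k \cdots v_j)$; conversely, $r_j(c) \neq \bot$ implies there is an unanswered request of color $c$ whose incurred cost is exactly $r_j(c)$ when $r_j(c) \le b$. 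This bookkeeping lemma is the technical heart and I expect it to be the main obstacle, since one must carefully track the interaction of the four update steps (increment/$\eps$ adjustment, overflow reset, even-color clearing, odd-color registration) and argue that the ``oldest open request'' semantics is preserved.

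For the direction ``$\cost(\sigma) \le b$ in $\game$ $\implies$ Player~$0$ wins $\extgame$'': given such a $\sigma$, lift it to a strategy $\sigma'$ in $\extgame = (\arena \times \mem, \parity(\col'))$ by playing in the $V$-component exactly as $\sigma$ would on the projected history (this is well-defined since $\arena \times \mem$ is the expansion of $\arena$ by $\mem$). Let $\rho'$ be any play consistent with $\sigma'$ and $\rho$ its projection to $\arena$, which is consistent with $\sigma$, hence $\cost(\rho) = \limsup_j \paritydist(\rho, j) \le b$. I claim $\rho'$ satisfies $\parity(\col')$. First, the overflow counter cannot reach $n$: each increment of $o$ corresponds to a request accruing cost $> b$ before being answered, and by the invariant such a request is eventually \emph{not} answered with cost $\le b$; but $\cost(\rho) \le b$ means only finitely many requests are answered with cost exceeding $b$ (and only finitely many are unanswered), so only finitely many overflow-resets occur — in particular $o_j < n$ eventually, so $\col'$ agrees with $\col$ cofinitely. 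Then one checks that $\rho$ satisfies the underlying parity condition: since $\cost(\rho) < \infty$, almost all odd-color requests are answered (with finite, in fact bounded, cost), which is exactly the max-parity condition on $\col$. Hence $\rho'$ satisfies $\parity(\col')$.

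For the converse ``Player~$0$ wins $\extgame$ $\implies$ she has $\sigma$ in $\game$ with $\cost(\sigma) \le b$'': by Theorem~\ref{thm:previouswork}(\ref{thm:previouswork:parity}), take a \emph{positional} winning strategy $\sigma'$ for Player~$0$ in $\extgame$; it induces a finite-state strategy $\sigma$ in $\game$ with memory $\mem$. Let $\rho$ be any play consistent with $\sigma$ and $\rho' = \ext(\rho)$ its extended play, which is consistent with $\sigma'$, hence satisfies $\parity(\col')$. Then $\rho'$ never visits a vertex with $o = n$ (such vertices form a sink with color $1$, losing for Player~$0$), so at most $n-1$ overflow-resets occur along $\rho$; after the last one, the invariant gives that every open request's incurred cost is tracked exactly in $r$, and since no further reset happens, $r_j(c)$ never exceeds $b$ — so from that point on every request is answered with cost at most $b$ (an unanswered request of ever-growing cost would eventually trigger a reset, contradiction; an unanswered request of bounded cost would make $\col(v_j) = \col'(v_j)$ odd infinitely often with no larger even answer, violating $\parity(\col')$). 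Therefore $\limsup_j \paritydist(\rho, j) \le b$, i.e. $\cost(\rho) \le b$; taking the supremum over all $\rho$ consistent with $\sigma$ gives $\cost(\sigma) \le b$. A minor point to address is that requests posed \emph{before} the last overflow-reset but answered after it are handled correctly — but these are finitely many and do not affect the $\limsup$, so they can be ignored.
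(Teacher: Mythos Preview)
Your converse direction (Player~$0$ wins $\extgame$ $\Rightarrow$ she has a strategy of cost at most $b$ in $\game$) is essentially the paper's argument and is fine.

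The forward direction, however, has a real gap. You lift an arbitrary strategy $\sigma$ with $\cost(\sigma)\le b$ to a strategy $\sigma'$ in $\extgame$ and then argue that along any consistent play the overflow counter cannot reach $n$, because ``only finitely many overflow-resets occur --- in particular $o_j<n$ eventually.'' But ``finitely many'' does not imply ``fewer than $n$.'' The condition $\cost(\rho)\le b$ only says that the number of positions $j$ with $\paritydist(\rho,j)>b$ is finite; that finite number can easily exceed $n$, in which case the overflow counter saturates and $\sigma'$ loses. A concrete obstruction: take $\arena$ with three vertices $u$ (colour~$1$, owned by Player~$0$), $v$ (colour~$2$), $w$ (colour~$2$), edges $u\to v$ (increment), $v\to u$ ($\eps$), $u\to w$ ($\eps$), $w\to w$ ($\eps$), and $b=0$. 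The strategy ``loop $u\to v\to u$ ten times, then go to $w$'' has cost $0$ (the $\limsup$ is taken over the tail in $w$), yet its lift causes ten overflows and hence saturates the counter in $\extgame$, so the lifted strategy is \emph{not} winning. Player~$0$ does win $\extgame$ here, but via a \emph{different} strategy (go to $w$ immediately). So the implication is true, but not for the reason you give.

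The paper avoids this by proving the forward direction via the contrapositive: assuming Player~$1$ wins $\extgame$ (by determinacy), one takes a positional winning strategy $\tau'$ for him and turns it into a strategy $\tau$ for Player~$1$ in $\game$ with $\cost(\tau)>b$. The key device is a modified memory structure that, whenever the overflow counter would increment, instead \emph{resets} it to the smallest value $o_v$ for which $\tau'$ is still winning from $(v,o_v,r_v)$. A pigeonhole argument over the $n$ vertices then shows that this reset never needs value $n$, so $\tau$ always mimics meaningful moves of $\tau'$; and either the simulated play violates parity (so $\rho$ does too), or infinitely many resets occur, each witnessing a request of cost exceeding $b$ in $\rho$. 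This pumping step --- turning ``$n$ overflows'' into ``infinitely many overflows'' --- is exactly the missing idea in your argument, and it is why the overflow counter is bounded by $n$ in the first place.
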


\begin{proof}
	We first introduce some notation.
	Let $(v_0, o_0, r_0) (v_1, o_1, r_1) (v_2, o_2, r_2) \cdots$ be a play or a play prefix in $\arena'$.
	An \emph{overflow position} is a $j$ such that either $j = 0$ or $o_j = o_{j-1} +1$.
	Note that we have $r_j = r_{v_j}$ for every overflow position, i.e., the request function is reset at each such position.

	For the direction from right to left, assume that Player~$0$ wins~$\extgame$ and let $\sigma' \colon V_0' \rightarrow V'$ be a positional winning strategy for her from $\vinit'$ in $\extgame$.
	We define the finite-state strategy~$\sigma$ for Player~$0$ in $\game$ using the memory structure~$\mem$ and the next-move function~$\nxt$ with $\nxt(v,m) = v'$, if $\sigma'(v,m) = (v',m')$ for some $m'$.
	Let $\rho = v_0 v_1 v_2 \cdots$ be a play that is consistent with $\sigma$. 
	A straightforward induction shows that the unique extended play $\ext(\rho) = (v_0, o_0, r_0)(v_1, o_1, r_1)(v_2, o_2, r_2)\cdots$ in $\extgame$ is consistent with $\sigma'$ and therefore winning for Player~$0$.
	This in particular implies~$o_j < n$ for all~$j \in \nats$, as the vertices~$(v, n, r)$ form a rejecting sink component.
	
	First, assume that~$\rho$ traverses infinitely many increment-edges. 
	The overflow-counter in $\ext(\rho)$ stabilizes at some value less than~$n$ at some point, i.e., there is a position~$j$ such that $o_{j'} = o_j < n$ for every $j' >j$. 
	We claim $\paritydist(\rho, j') \le b$ for every $j' >j$, which finishes this direction of the proof. 
	Assume towards a contradiction that a request at some position after $j$ of $\rho$ is unanswered for $b+1$ increment-edges.
	During every traversal of one of these increment-edges, its associated counter in $\ext(\rho)$ is increased by one and not reset until $b+1$ increment-edges are traversed, which implies encountering an overflow position. 
	This contradicts the choice of the position~$j$.
	Thus, if~$\rho$ traverses infinitely many increment-edges, then almost every request is answered with cost at most~$b$, i.e., $\cost(\rho) \le b$.
	
	Now, consider the case where $\rho$ contains only finitely many increment-edges. 
	Such a play satisfies the parity condition with costs if and only if it satisfies the parity condition. Thus, it suffices to note that $\rho$ and $\ext(\rho)$ coincide on their color sequences, due to~$o_j < n$ for all~$j \in \nats$, and that $\ext(\rho)$ satisfies the parity condition, as it is winning for Player~$0$.

	For the other direction, we prove the contrapositive.
	Assume that Player~$0$ does not win~$\extgame$.
	Then, due to determinacy of parity games, Player~$1$ wins $\extgame$, say using the positional strategy $\tau' \colon V_1' \rightarrow V'$.
	This strategy is only useful as long as the overflow counter is not yet saturated, as a play is trivially winning for Player~$1$ as soon as the sink component is reached. 
	Thus, whenever the overflow counter is increased, we reset it to the smallest possible value for which $\tau'$ is still able to enforce a winning play for Player~$1$.
	
	We introduce the set~$\reach$ that contains all vertices~$(v, o, r)$ that are visited by some play that is consistent with~$\tau'$.
	Then, given a vertex~$v$, let~$o_v = \min\set{o \mid (v,o,r_v) \in \reach}$ with~$\min\emptyset = n$.
	Note that we have $o_{v_\initmark} = 0$ and that the strategy~$\tau'$ is winning from~$(v,o_v,r_v)$ in $\extgame$ for all~$v \in V$.
	Now we define a new memory structure~$\mem' = (M', m_\initmark', \update')$ with $M' = M = [n+1] \times R$, $m_\initmark' = m_\initmark = (0,r_{v_\initmark})$, and  
	\[
	\update'((o,r),(v,v')) = \begin{cases}
		(o,r') &\text{if } \update((o,r),(v,v')) = (o,r')\\
		(o_{v'},r') &\text{if } \update((o,r),(v,v')) = (o+1,r')\\
	\end{cases}\enspace.
	\]
	Note that we have $r' = r_{v'}$ in the second case.
	Now, let $\tau$ be the finite-state strategy implemented by $\mem'$ and the next-move function~$\nxt$ given by $\nxt(v,m) = v'$, if $\tau'(v,m) = (v',m')$ for some $m' \in M$.

	Let $\rho = v_0 v_1 v_2 \cdots $ be some play in~$\game$ that is consistent with $\tau$ and, moreover, let $\rho' = (v_0, o_0, r_0) (v_1, o_1, r_1) (v_2, o_2, r_2) \cdots$ be the extended play of~$\rho$ with respect to $\mem'$. 
	We say that $j$ is a \emph{reset position} if $j=0$ or if $\update((o_{j-1},r_{j-1}),(v_{j-1},v_{j})) = (o_{j-1}+1,r_{j})$, i.e., the second case in the definition of $\update'$ is applied.
	Note that~$\rho'$ is not necessarily a play in $\extgame$, but every maximal infix of~$\rho'$ between two reset positions is a play infix in that game that is consistent with~$\tau'$.
	Furthermore, at every reset position, instead of incrementing the overflow counter, we set it to the minimal value~$o_v$.
	As a reset position in~$\rho'$ is only reached when incurring an overflow, for every reset position but the first one there exists at least one request in~$\rho$ that is open for at least~$b+1$ increment-edges.
	
	We now prove that the play~$\rho$ in~$\game$ is winning for Player~$1$.
	Recall that the play $\rho' = (v_0,o_0,r_0)(v_1,o_1,r_1)(v_2,o_2,r_2)\cdots$ is the extension of~$\rho$ with respect to~$\mem'$.
	In order to show~$\rho$ to be winning, we proceed in two steps.
	First, we show that we have~$o_j < n$ for all~$j$, i.e., the strategy~$\tau$ always uses meaningful moves of~$\tau'$ for its choice of move.
	This allows us to argue that~$\tau$ is indeed winning for Player~$1$.
	
	First, note that even though~$\rho'$ is not a play in~$\extgame$, every vertex~$(v_j, o_j, r_j)$ of $\rho'$ is in~$\reach$.
	We show this claim by induction over~$j$.
	For $j = 0$, we obtain~$(v_0,o_0,r_0) = v'_\initmark \in \reach$.
	For $j > 0$, we obtain~$v'_{j-1} = (v_{j-1},o_{j-1},r_{j-1}) \in \reach$ by induction hypothesis.
	Hence, let~$\pi$ be a play prefix ending in~$v'_{j-1}$ that is consistent with~$\tau'$.
	If $v'_{j-1} \in V'_0$, then $\pi \cdot (v_j, o_j, r_j)$ is consistent with~$\tau'$.
	Otherwise, i.e., if $v'_{j-1} \in V'_1$, then we obtain $\tau(\pi) = (v_j, o, r_j)$ for some~$o \in [n+1]$.
	In case $o = o_j$, $(v_j, o_j, r_j) \in \reach$ follows directly.
	If $o \neq o_j$, however, then $o_j = o_{v_j}$ and $r_j = r_{v_j}$, i.e., $(v_j, o_j, r_j) \in \reach$ by definition of~$o_{v_j}$.
		
 	Next, we show $o_{j+1} \leq o_j+1$ for all $j \in \nats$.
	Assume we have $o_{j+1} > o_j$ for some~$j$.
	As argued above, the vertex~$(v_j, o_j, r_j)$ is in~$\reach$.
	As $o_j \neq o_{j+1}$, there is an edge from $(v_j, o_j, r_j)$ to $(v_{j+1}, o_j+1, r_{j+1})$ in $\arena'$, where $r_{j+1} = r_{v_{j+1}}$. 
	By construction of $\tau$, we obtain~$(v_{j+1}, o_j+1, r_{j+1}) \in \reach$ as argued above.
	Hence, we indeed have $o_{j+1} = o_{v_{j+1}} \le o_j+1$.
	
	Now, we argue $o_j < n$ for all $j \in \nats$ 
by proving the following property by induction over~$j$:
	\begin{quote}
		If $o_j = k$, then for every $k' \le k$ there is a reset position~$j_{k'} \le j$ with $o_{j_{k'}} = k'$.
	\end{quote}
	Let us first argue that this indeed implies $o_j < n$. 
	Towards a contradiction, assume $o_j = n$. 
	Then, there are $n+1$ reset positions, one for each value~$k$ in the range~$0 \le k \le n$ for the overflow counter. 
	Thus, two such positions~$j', j''$ share the same vertex~$v_{j'} = v_{j''}$, which implies that they also share the same overflow counter value $o_{j'} = o_{v_{j'}} = o_{v_{j''}} = o_{j''}$. 
	This yields the desired contradiction to $o_{j'}$ and $o_{j''}$ being distinct. 
	
	For the induction start, we  have $o_0 = 0$ and pick $j_0 = 0$, which is a reset position. 
	Now, let $j >0$ and~$o_j = k$.
	If $k \le o_{j-1}$, then the induction hypothesis yields the necessary positions. 
	Hence, assume we have $k > o_{j-1}$, which implies $k = o_{j-1}+1$ as shown above. Then, $j$ is a reset position and we can define $j_{k} = j$ and obtain the remaining $j_{k'}$ for $k' < k$ from the induction hypothesis.
	
	It remains to show that~$\rho$ is indeed winning for Player~$1$.
	First assume that the overflow counter of~$\rho'$ stabilizes, i.e., there exists some~$j \in \nats$ such that~$o_{j'} = o_j$ for all~$j' > j$.
	Then, there exists a suffix of~$\rho'$ that is consistent with~$\tau'$, which therefore violates the parity condition.
	Hence, it suffices to note that the colors of~$\rho'$ and~$\rho$ coincide, i.e.,~$\rho$ violates the parity condition and thus also the parity condition with costs with respect to any bound.
	
	Now assume that the overflow counter of~$\rho'$ does not stabilize.
	Then, there are infinitely many reset positions in~$\rho'$.
	Between any two adjacent such positions, by construction, there exists a request that remains unanswered for at least~$b+1$ steps in~$\rho$.
	Hence,~$\rho$ violates the parity condition with costs with respect to bound~$b$ and is winning for Player~$1$.
\end{proof}

As the parity game $\extgame$ is of exponential size and can therefore not be constructed and solved in polynomial space in $\size{\game}$, we now construct a finite-duration variant $\finitegame$ of $\extgame$.
One such variant is obtained by playing the parity game up to the first vertex repetition and declaring the winner according to the maximal color on the induced cycle~\cite{AminofRubin14}.
However, one can show that such a play in $\extgame$ is still of exponential length in the worst case. 
In the following, we exploit the structure of the arena to proclaim a winner after a polynomial number of moves.
In particular, we define a preorder on the memory elements and stop a play as soon as a pseudo-cycle is reached, i.e., an infix whose projection to $V$ is a cycle in $\arena$ and whose memory states at the start and at the end are in the order relation.

We first introduce the preorder on memory states. 
To this end, note that not all open requests are \myquot{relevant}.
In fact, a small request that is opened while a larger one is already open is irrelevant.
Answering the larger request is more urgent, as it has already incurred at least as much cost as the newly opened request and answering the larger one answers the smaller one as well.
Following this intuition, we define the relevant requests of a request function~$r$ as follows:~$c$ is relevant in~$r$ if and only if~$r(c) \neq \bot$ and if there does not exist a color~$c' > c$ such that~$r(c') \geq r(c)$.
Then, for each open request~$c$ in~$r$, there exists some~$c' \geq c$ such that the request for~$c'$ is relevant and~$r(c') \geq r(c)$.
Also, the largest open request and the one with the highest cost are always relevant.
We denote the set of relevant requests of~$r$ by~$\relreq(r)$.

Using the relevant requests, we define a preorder~$\dominatedby$ on request functions: For two request functions~$r$ and $r'$, we say that $r'$ dominates~$r$, if for each color~$c$ that is relevant in~$r$, there exists a relevant request~$c'$ in~$r'$ that is \myquot{more urgent} than that for color~$c$.
Formally, we write~$r \dominatedby r'$ if and only if for each~$c$ that is relevant in~$r$, there exists a~$c'$ with $c \leq c'$ that is relevant in~$r'$ with~$r(c) \leq r'(c')$.
The relation~$\dominatedby$ is reflexive and transitive.
Moreover, if~$r \dominatedby r'$ and~$r \dominates r'$ both hold true, we write~$r \dominationequivalent r'$.
Finally, we have that~$r \dominationequivalent r'$ implies~$\relreq(r) = \relreq(r')$ and~$r(c) = r'(c)$ for all~$c \in \relreq(r)$.

We extend the preorder~$\dominatedby$ to memory elements.
Since the overflow counter is non-decreasing and every one of its increments brings a play closer towards the winning sink vertices for Player~$1$, we value this component of the memory state more strongly.
Following this intuition, we say that the memory state~$(o,r)$ is dominated by the memory state~$(o',r')$, written $(o,r) \dominatedby (o',r')$, if either~$o < o'$, or if both~$o = o'$ and~$r \dominatedby r'$ hold true.
Similarly, we extend the notation of~$\dominationequivalent$ such that~$(o,r) \dominationequivalent (o',r')$, if and only if both~$(o,r) \dominatedby (o',r')$ and~$(o,r) \dominates (o',r')$ hold true.

The preorder on memory elements is preserved under concatenation.

\begin{lem}
\label{lem:dominating-memory:stable-concatenation}
Let~$(v,o_1,r_1),(v,o_2,r_2)$ be vertices in~$\extgame$, let~$(v,v') \in E$, and, for~$j \in \set{1,2}$, let $\update((o_j,r_j),(v,v')) = (o'_j,r'_j)$.
If~$(o_1,r_1) \dominatedby (o_2,r_2)$ and~$o_2 < n$, then~$(o'_1,r'_1) \dominatedby (o'_2,r'_2)$.
\end{lem}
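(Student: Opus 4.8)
The plan is to prove the monotonicity claim by tracking the preorder through the four steps in the definition of $\update$, handling the overflow-counter component and the request-function component separately. First I would reduce to the case $o_1 = o_2$: if $o_1 < o_2$, then after the update we still have $o_1' \le o_1 + 1 \le o_2$ and $o_2' \ge o_2$, hence $o_1' \le o_2'$; if $o_1' < o_2'$ we are done immediately, and if $o_1' = o_2'$ this forces $o_1' = o_2 = o_2'$, so $o_1$ incremented to $o_2$ and, crucially, the increment means $r_1'$ was reset to $r_{v'}$, while $r_2' \dominates r_{v'}$ always holds (every request function dominates the freshly-reset one, since $r_{v'}$ has at most one relevant request, namely $\col(v')$ with cost $0$, and any $r$ with $\col(v')$ odd has $r'(\col(v')) \ge 0$ after step four, while if $\col(v')$ is even then $r_{v'}$ has no relevant requests and is dominated by everything). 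So the substantive case is $o_1 = o_2 =: o$ with $o < n$ and $r_1 \dominatedby r_2$, where we must show $r_1' \dominatedby r_2'$ and the overflow counters stay equal after step two.

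For this case I would argue step by step. \emph{Step one} (increment/$\eps$): both functions are treated identically — either all non-$\bot$ values increase by one, or all stay fixed — and I claim this preserves $\dominatedby$ and also the set of relevant requests is transformed predictably. If $c$ is relevant in $r_1$ after step one, it was relevant before (incrementing all open requests uniformly does not change which ones are dominated, since the order of the values $r(c')$ among open colors is preserved, as is the $\bot$/non-$\bot$ status), so there is $c' \ge c$ relevant in $r_2$ with $r_1(c) \le r_2(c')$ before step one, and after incrementing both sides by the same amount ($0$ or $1$) the inequality persists and $c'$ is still relevant in the updated $r_2$. \emph{Step two} (overflow check): here is the only place the counter can change. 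If neither function overflows, counters stay at $o$ and nothing changes. If $r_1$ (after step one) has some $c$ with value $> b$: then $c$ is in particular an open request that incurred cost $> b$; since the highest-cost open request is always relevant, there is a relevant $c$ in $r_1$ with value $> b$, hence by $\dominatedby$ a relevant $c' \ge c$ in $r_2$ with $r_2(c') \ge r_1(c) > b$, so $r_2$ overflows too; both counters go to $o+1$ and both request functions are reset to the same function, giving $\dominationequivalent$, hence $\dominatedby$. So after step two, either both counters are $o$ and $r_1' \dominatedby r_2'$ still, or both are $o+1$ with identical (reset) request functions. \emph{Steps three and four} (answering even colors; registering the new request $\col(v')$): these act identically on both request functions — setting $r(c') = \bot$ for all $c' \le \col(v')$ if $\col(v')$ even, or setting $r(\col(v'))$ to $\max$ of its old value and $0$ if odd. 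I would verify these preserve $\dominatedby$: deleting a fixed initial segment of colors can only remove relevant requests and cannot create a relevant $c$ in $r_1'$ without a matching one in $r_2'$ (if $c > \col(v')$ is relevant in $r_1'$ it was relevant in $r_1$ with the same value, giving a witness $c' \ge c > \col(v')$ relevant in $r_2$, which survives the deletion); inserting $\col(v')$ with value $0$ (or leaving a larger value) on both sides likewise preserves the witness structure, as a request of cost $0$ is the least urgent possible and is relevant only if it is the largest open color, in which case $r_2'$ has it as a relevant request of value $\ge 0$ too. Chaining the preservation through all four steps gives $(o_1', r_1') \dominatedby (o_2', r_2')$.

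The main obstacle I expect is step two, or more precisely the interaction of the overflow with the preorder: I need the implication ``$r_1$ overflows $\Rightarrow$ $r_2$ overflows,'' which relies on the fact that the \emph{highest-cost} open request is always relevant (so domination transfers the witness of the overflow), together with the careful bookkeeping that when exactly one of the hypotheses $o_2 < n$ fails we have no claim to prove, and when $o_2 < n$ but $o_2' = n$ could occur we are saved because $o_2' = o_2 + 1 \le n$ only if $o_2 \le n-1$, consistent with the hypothesis, and a reset to $r_{v'}$ makes the request components trivially comparable. A secondary subtlety is checking that ``relevant'' is stable under the uniform increment of step one and under the truncation of step three — this requires noting that these operations are monotone in the partial order on open colors by their $r$-values, which I would state as a small auxiliary observation and then invoke. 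None of these steps involves more than routine case analysis once the auxiliary facts about relevant requests (the top request and the costliest request are always relevant; $\dominationequivalent$ pins down relevant requests and their values) are in hand, all of which are recorded just before the lemma statement.
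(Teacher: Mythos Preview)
Your proposal is correct and follows essentially the same approach as the paper: both first dispose of the case $o_1 < o_2$ by noting that an increment of $o_1$ resets $r_1'$ to $r_{v'}$, and in the main case $o_1 = o_2$ both establish the key implication ``$r_1$ overflows $\Rightarrow$ $r_2$ overflows'' via the fact that the costliest open request is relevant. The only difference is organisational: the paper analyses a relevant color $c_1$ in $r_1'$ directly and splits on whether the request for $c_1$ was already open in $r_1$, whereas you track the preorder through the four steps of $\update$ one at a time---this is a matter of presentation rather than substance.
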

\begin{proof}
First assume~$o_1 < o_2$.
Due to the construction of~$\extgame$, this implies~$o'_1 \leq o'_2$.
If~$o'_1 < o'_2$, the given statement holds true.
If~$o'_1 = o'_2$, however, then~$r'_1 = r_{v'}$ and thus~$(o'_1,r'_1) \dominatedby (o'_2,r'_2)$, since for every~$(v,o,r)$ with incoming edges in $\arena'$, we have~$r \dominates r_v$.
Thus, assume~$o_1 = o_2$ and~$r_1 \dominatedby r_2$ for the remainder of this proof.

First, assume that~$o'_1 = o_1 + 1$
We show the following claim: If the move to~$v'$ causes an overflow when starting from~$(v,o_1,r_1)$, then the same move causes an overflow when starting from~$(v,o_2,r_2)$.
This then implies~$r'_1 = r'_2 = r_{v'}$ and hence, $(o'_1,r'_1) \dominatedby (o'_2,r'_2)$.
Formally, we claim that if~$o'_1 = o_1 + 1$, then~$o'_2 = o_2 + 1$.
It remains to show~$o'_2 = o_2 + 1$.
Since the move from~$v$ to~$v'$ causes an overflow when starting in~$(v,o_1,r_1)$, we have~$\cost(v, v') = \inc$.
Let~$c_1$ be some color that causes the overflow in~$r_1$, i.e.,~$c_1 \in \set{c \mid r_1(c) = b}$.
Since~$r_2 \dominates r_1$, there exists a color~$c_2 \geq c_1$ with~$r_2(c_2) \geq r_1(c_1)$.
Moreover, since the range of~$r_2$ is upwards bounded by~$b$, this implies~$r_2(c_2) = b$.
Hence, the move from~$v$ to~$v'$ also causes an overflow in~$r_2$, which implies~$o'_2 = o_2 + 1$.
This completes the proof in the case~$o'_1 = o_1 + 1$.

If, however,~$o'_1 = o_1$, we again distinguish two cases:
If~$o'_2 = o_2 + 1$, then~$o'_1 < o'_2$ and hence,~$(o'_1,r'_1) \dominatedby (o'_2,r'_2)$.
On the other hand, if~$o'_2 = o_2$, then let~$c_1$ be relevant in~$r'_1$.
We show that there exists a color~$c_2 \geq c_1$ with~$r'_2(c_2) \geq r'_1(c_1)$.
Should~$c_2$ be non-relevant, then there exists a larger one in~$r'_2$ that dominates~$c_2$.

If a request for~$c_1$ was already open in~$r_1$, then let~$c_2 \geq c_1$ with~$r_2(c_2) \geq r_1(c_1)$.
Such a color~$c_2$ exists due to~$r_2 \dominates r_1$.
Since the request for~$c_1$ was not answered during the move to~$v'$, and since~$o_2 = o_2' < n$, neither was the request for~$c_2$ during the same move.
Hence, we have~$r'_2(c_2) \geq r'_1(c_1)$.
If, however, a request for~$c_1$ was not already open in~$r_1$, then the request for~$c_1$ must have been opened by moving to~$v'$, i.e.,~$\col(v') = c_1$.
Thus, we directly obtain~$r'_1(c_1) = 0$ and $r'_2(c_1) \geq 0$.
Picking~$c_2 = c_1$ concludes the proof in this case.
\end{proof}

We now define the winning condition in the finite game~$\finitegame$.
To this end, let~$\pi = (v_0,o_0,r_0)\cdots(v_j,o_j,r_j)$ be a play prefix in~$\extgame$ and let~$\pi' = (v_k, o_k, r_k)\cdots(v_{k'},o_{k'},r_{k'})$ be an infix of~$\pi$.
We say~$\pi'$ is a dominating cycle\footnote{Note that a dominating cycle is only a cycle when projected to a play in $\arena$.} if~$v_k = v_{k'}$,~$o_k = o_{k'} < n$, and either
\begin{itemize}
	\item the maximal color occurring on~$\pi'$ is even and~$r_k \dominates r_{k'}$, or
	\item the maximal color occurring on~$\pi'$ is odd and~$r_k \dominatedby r_{k'}$.
\end{itemize}
We call the former and latter type of dominating cycles even and odd, respectively.
Moreover, we say that a play prefix~$\pi = (v_0,o_0,r_0)\cdots(v_j,o_j,r_j)$ is \emph{settled} if either~$o_j = n$, or if~$\pi$ contains a dominating cycle.\footnote{This definition differs from the one presented in the conference version~\cite{WeinertZimmermann16}. The definition here is easily amenable to the case of integer-valued cost functions in Section~\ref{sec:concrete}, simplifies the proofs of Lemma~\ref{lem:pspace-mem:unary:unsettled-bound} and Lemma~\ref{lem:unary:infinite-to-finite}, and fixes a bug in the conference version, which caused plays to be settled too early in favor of Player~$1$.}
Fix~$\ell = (n+1)^6$.

\begin{lem}
\label{lem:pspace-mem:unary:unsettled-bound}
Let~$\pi$ be a play prefix of~$\extgame$.
If~$\card{\pi} > \ell$, then~$\pi$ is settled.	
\end{lem}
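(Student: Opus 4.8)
The plan is to bound the length of an unsettled play prefix by a counting argument over the possible "states" that matter for the settledness condition. Along a play prefix $\pi = (v_0,o_0,r_0)\cdots(v_j,o_j,r_j)$ with $o_j < n$ (if $o_j = n$ the prefix is already settled), I want to identify a quantity that can take only polynomially many values and that cannot repeat without creating a dominating cycle. The key observation is that whether a past position $k$ together with the current position forms a dominating cycle depends only on: the vertex $v_k$, the overflow value $o_k$, the equivalence class of $r_k$ under $\dominationequivalent$ restricted to the information the preorder $\dominatedby$ sees (i.e.\ the relevant requests and their costs), and the parity of the maximal color seen since $k$. The first point of the argument is therefore to count how many pairwise $\dominatedby$-incomparable (more precisely, how long an antichain-with-respect-to-the-relevant-data) request functions can occur; since a request function is determined up to $\dominationequivalent$ by its set of relevant requests and their values in $[b+1]$, and $b < n$ and there are fewer than $n$ odd colors, the number of $\dominationequivalent$-classes is at most $(n+1)^n$ — too large. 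So a cruder bound won't do, and I instead track a smarter invariant.

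The better route, which gives the stated $\ell = (n+1)^6$, is to exploit that a dominating cycle only needs $v_k = v_{k'}$, $o_k = o_{k'}$, and a one-sided comparison of $r_k$ and $r_{k'}$ in the direction dictated by the maximal color's parity. Concretely, I would argue as follows. First, $o$ is non-decreasing and bounded by $n$, so $\pi$ decomposes into at most $n$ blocks of constant $o$-value; it suffices to bound the length of each such block by roughly $(n+1)^5$, since $n \cdot (n+1)^5 < (n+1)^6$. Within a block the overflow value is fixed, say equal to $o$. Now within a single block, consider the maximal even color $e$ and maximal odd color $d$ seen so far; these each take at most $n$ values and are non-decreasing within the block. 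This further subdivides the block into at most $O(n)$ sub-blocks on which the maximal-even-color-so-far and maximal-odd-color-so-far are each constant — wait, more carefully: I track, between any two occurrences of the same vertex, which colors have appeared. The clean statement is: if the same vertex $v$ occurs at positions $k < k'$ within a block, and no dominating cycle is formed, then the relevant-request data at $k'$ must differ from that at $k$ in the "wrong" direction for both the even and odd cases, which is impossible to sustain. Making this precise, I'd partition the block further by the pair (maximal color seen since the block's start, a small amount of relevant-request data) and show each piece has length at most $(n+1)^2$ or so using the pigeonhole principle on vertices.

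The main obstacle — and the step I'd spend the most care on — is getting the exponent right: showing that unsettledness forces the length to be at most $(n+1)^6$ rather than merely polynomial. The honest accounting is: one factor of $(n+1)$ for the overflow counter $o$; two more factors for the maximal even and maximal odd colors seen since the last "reset point" of the monotone quantities; one factor for the vertex $v$; and the remaining factors come from the fact that between repetitions of $(v, o, \text{max-color-bucket})$ the relevant-request vector must strictly "improve" in a way controlled by $b+1 \le n$ possible cost values, contributing at most one or two further factors of $(n+1)$ before a dominating cycle is unavoidable. I would set up the formal argument as: assume $\card\pi > \ell$; by pigeonhole among the $\le (n+1)^5$ possible tuples (overflow, vertex, parity-relevant summary of the maximal color, coarse relevant-request data) some tuple repeats at positions $k < k'$ with $v_k = v_{k'}$, $o_k = o_{k'} < n$; then check that the repetition together with monotonicity of $o$ and of the maximal-color-so-far forces either $r_k \dominates r_{k'}$ (if the max color on the infix is even) or $r_k \dominatedby r_{k'}$ (if it is odd), using Lemma~\ref{lem:dominating-memory:stable-concatenation} and the fact that within the infix the relevant requests can only have been refreshed in a controlled way — which is exactly a dominating cycle, so $\pi$ is settled.
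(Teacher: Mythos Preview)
Your proposal has the right high-level shape (decompose by monotone quantities, then pigeonhole), but there is a genuine gap at the point you yourself flag as ``the step I'd spend the most care on.'' The undefined ``coarse relevant-request data'' is exactly where the argument must do real work, and you never specify it. The difficulty is that two request functions at positions $k<k'$ with $v_k=v_{k'}$ and $o_k=o_{k'}$ can be $\dominatedby$-incomparable (e.g.\ one has a large color with small cost and the other a smaller color with larger cost), so a vertex-and-overflow repetition alone does \emph{not} force a dominating cycle. Your appeal to Lemma~\ref{lem:dominating-memory:stable-concatenation} does not help here: that lemma preserves an existing $\dominatedby$-relation under a common edge, it does not create one. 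And tracking only the maximal even/odd color seen so far does not control comparability of the $r_k$ either. So as written, your pigeonhole does not close.

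The paper's argument avoids pigeonholing on request-function data altogether. It introduces four nested kinds of special positions---\emph{overflow}, \emph{debt-free} ($r_k=r_{v_k}$), \emph{request-adding} (a new maximal odd request opens), and \emph{relevance-reducing} ($\relreq(r_{k-1})\supsetneq\relreq(r_k)$)---and bounds each: at most $n$ overflow positions; between them at most $n$ debt-free positions; between those at most $d$ request-adding positions; between those at most $d$ relevance-reducing positions. The payoff is that in a type~$4$ infix the set of relevant requests is \emph{constant} and some request stays open throughout, so at most $b$ increment-edges fit before an overflow; and in a type~$5$ infix (no increments, fixed relevant requests) the costs of relevant requests are frozen, so any vertex repetition gives $r_k\dominationequivalent r_{k'}$---a dominating cycle---forcing length at most $n$. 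Multiplying the six bounds (with $d,b\le n$) gives $(n+1)^6$. The crucial idea you are missing is the reduction to an innermost segment where the request functions are trivially $\dominationequivalent$-equivalent upon any $V$-vertex repetition; the ``debt-free'' and ``relevance-reducing'' concepts are precisely what make that reduction possible.
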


\begin{proof}
Let $\pi = (v_0, o_0, r_0) \cdots (v_j, o_j, r_j)$ be an unsettled play prefix of~$\extgame$.
We show~$\card{\pi} \leq \ell$, which implies the given statement.
Note that, since~$\pi$ is not settled, it does not contain a vertex repetition, since such a repetition induces a dominating cycle.

The structure of our argument is sketched in Figure~\ref{fig:pspace-membership:play-analysis}: We recall the definition of overflow positions, define debt-free, request-adding, and relevance-reducing positions, and show
\begin{enumerate}
	\item that there are at most $n$ overflow positions in $\pi$,
	\item that there are at most $n$ debt-free positions between any two adjacent overflow positions,
	\item that there are at most $d$ request-adding positions between any two adjacent debt-free positions, where $d$ is the number of odd colors,
	\item that there are at most~$d$ relevance-reducing positions between any two adjacent request-adding positions,
	\item that there are at most $b+1$ increment-edges between any two adjacent relevance-reducing positions, and
	\item that there are at most~$n$ vertices between two such increment-edges.
\end{enumerate}
Aggregating these bounds then yields the desired result.

\begin{figure}
	\centering
	\begin{tikzpicture}[thick,
		vertex/.style={draw,shape=circle,fill=black,minimum size=.15cm,inner sep=0,line width=.3mm},
		specvertex/.style={draw,shape=circle,fill=mydarkyellow,minimum size=.15cm,inner sep=0, line width=.15mm}
		]
	
		\begin{scope}
			\node[anchor=east] at (-.2, 0) {$\pi$};
			
			\foreach \col [evaluate=\col as \xpos using \col] in {0,2,4,6,8}
				\node[specvertex] (\col-0) at (\xpos, 0) {};
				
			\node[vertex] (10-0) at (10, 0) {};
					
			\foreach \col [evaluate=\col as \xpos using \col] in {1,3,5,7,9}
				\node[] (\col-0) at (\xpos, 0) {$\dots$};
					
			\foreach \i [evaluate=\i as \nexti using int(\i+1)] in {0,...,9}
				\path (\i-0) edge (\nexti-0);
			
			\node[anchor=west,align=left] (overflow-label) at (10.5,0) {$\le n$ overflow \\ positions};
		\end{scope}
		
		\begin{scope}[shift={(0,-1)}]
			\node[anchor=east] at (-.2,0) {$\pi_1$};
			
			\foreach \col [evaluate=\col as \xpos using \col] in {0,1,2,4,6,7,8,10}
				\node[specvertex] (\col-1) at (\xpos,0) {};
					
			\foreach \col [evaluate=\col as \xpos using \col] in {3,5,9}
				\node[] (\col-1) at (\xpos,0) {$\dots$};
					
			\foreach \i [evaluate=\i as \nexti using int(\i+1)] in {0,...,9}
				\path (\i-1) edge (\nexti-1);	
				
			\node[anchor=west,align=left] (debt-free-label) at (10.5,0) {$\le n$ debt-free \\ positions};
		\end{scope}

		\begin{scope}[shift={(0,-2)}]
			\node[anchor=east] at (-.2,0) {$\pi_2$};
			
			\foreach \col [evaluate=\col as \xpos using \col] in {0,1,2,4,6,8,10}
				\node[specvertex] (\col-2) at (\xpos,0) {};
				
			\node[vertex] (8-2) at (8,0) {};
			\node[vertex] (10-2) at (10,0) {};
					
			\foreach \col [evaluate=\col as \xpos using \col] in {3,5,7,9}
				\node[] (\col-2) at (\xpos,0) {$\dots$};
				
			\foreach \i [evaluate=\i as \nexti using int(\i+1)] in {0,...,9}
				\path (\i-2) edge (\nexti-2);
				
			\node[anchor=west,align=left] (request-add-label) at (10.5,0) {$\leq d$ request-\\adding positions};
		\end{scope}
		
		\begin{scope}[shift={(0,-3)}]
			\node[anchor=east] at (-.2,0) {$\pi_3$};
			
			\foreach \col in {2,3,5,7,9}
				\node[specvertex] (\col-3) at (\col,0) {};
				
			\node[vertex] (0-3) at (0,0) {};
			\node[vertex] (10-3) at (10,0) {};
					
			\foreach \col in {1,4,6,8}
				\node[] (\col-3) at (\col,0) {$\dots$};
				
			\foreach \i [evaluate=\i as \nexti using int(\i+1)] in {0,...,9}
				\path (\i-3) edge (\nexti-3);
				
			\node[anchor=west,align=left] (rel-req-answer-label) at (10.5,0) {$\leq d$ relevance-\\ reducing positions};
		\end{scope}
		
		\begin{scope}[shift={(0,-4)}]
			\node[anchor=east] at (-.2,0) {$\pi_4$};
			
			\foreach \col in {0,1,3,4,5,7,8,10}
				\node[vertex] (\col-4) at (\col,0) {};
					
			\foreach \col in {2,6,9}
				\node[] (\col-4) at (\col,0) {$\dots$};
				
			\foreach \i [evaluate=\i as \nexti using int(\i+1)] in {0,3,4,7}
				\path (\i-4) edge node[anchor=south] (increment-\i) {$\inc$} (\nexti-4);
			
			\foreach \i [evaluate=\i as \nexti using int(\i+1)] in {1,2,5,6,8,9}
				\path (\i-4) edge node[anchor=south] {$\eps$} (\nexti-4);
				
			\node[anchor=west,align=left] (increment-label) at (10.5,0) {$\leq b+1$ \\ in\-crement edges};

			\draw[decorate,decoration={brace,amplitude=3pt}]
				($(2-4.south east) + (.1cm,0)$) --
					node[anchor=north,yshift=-.1cm,align=center] {$\leq n$ \\ vertices}
					($(2-4.south west) - (.1cm,0)$);
			\draw[decorate,decoration={brace,amplitude=3pt}]
				($(6-4.south east) + (.1cm,0)$) --
					node[anchor=north,yshift=-.1cm,align=center] {$\leq n$ \\ vertices}
					($(6-4.south west) - (.1cm,0)$);
			\draw[decorate,decoration={brace,amplitude=3pt}]
				($(9-4.south east) + (.1cm,0)$) --
					node[anchor=north,yshift=-.1cm,align=center] {$\leq n$ \\ vertices}
					($(9-4.south west) - (.1cm,0)$);
		\end{scope}

		\draw[dashed,gray]
			(3-0.south west) ..
			controls (2-1.north west) and (1-0.south west) ..
			(0-1.north west);
		\draw[dashed,gray]
			(3-0.south east) ..
			controls (3-1.north east) and (10-0.south east) ..
			(10-1.north east);
					
		\draw[dashed,gray]
			(5-1.south west) ..
			controls (5-2.north west) and (0-1.south west) ..
			(0-2.north west);
		\draw[dashed,gray]
			(5-1.south east) ..
			controls (5-2.north east) and (10-1.south east) ..
			(10-2.north east);	
		
		\draw[dashed,gray]
			(5-2.south west) ..
			controls (4-3.north west) and (0-2.south west) ..
			(0-3.north west);
		\draw[dashed,gray]
			(5-2.south east) ..
			controls (6-3.north east) and (10-2.south east) ..
			(10-3.north east);
		
		\draw[dashed,gray]
			(4-3.south west) ..
			controls (3-4.north west) and (0-3.south west) ..
			(0-4.north west);
		\draw[dashed,gray]
			(4-3.south east) ..
			controls (5-4.north east) and (10-3.south east) ..
			(10-4.north east);
	\end{tikzpicture}

	\caption{Bounding the length of unsettled play prefixes. The relevant special vertices are marked in yellow.}
	\label{fig:pspace-membership:play-analysis}	
\end{figure}
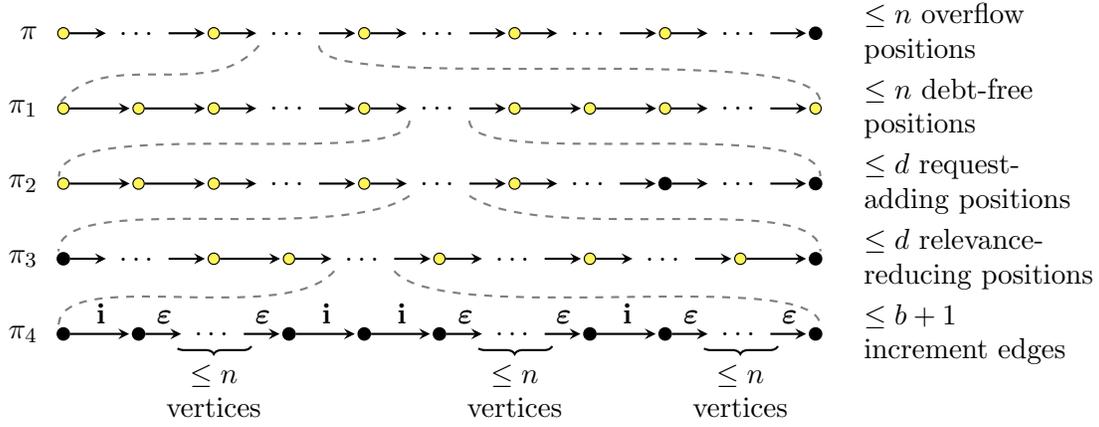
Recall that an overflow position of $\pi$ is a $k$ with $k = 0$ or with $o_k = o_{k-1}+1$.
As~$\pi$ is unsettled and the $o_k$ are non-decreasing, $\pi$ has at most $n$ overflow positions, $n-1$ real increments and the initial position.
Hence, by splitting $\pi$ at the overflow positions we obtain at most $n$ non-empty infixes of $\pi$, each without overflow positions.
We say such an infix has type~$1$.

Fix a non-empty type~$1$ infix $\pi_1$.
A \emph{debt-free position} of $\pi$ is a $k$ with $r_k = r_{v_k}$, i.e., a position that has no other costs than those incurred by visiting $v_k$.
As all vertices of $\pi_1$ share the same overflow counter value, there are at most $n$ debt-free positions in $\pi_1$: $n+1$ such positions would induce a vertex repetition, which we have ruled out above.
Hence, by splitting $\pi_1$ at the debt-free positions we obtain at most $n+1$ non-empty infixes of $\pi_1$, each without debt-free and overflow positions.
We say such an infix has type~$2$.

Fix a non-empty type~$2$ infix $\pi_2$.
A \emph{request-adding position} of $\pi$ is a $k$ with odd $\col(v_k)$ such that $r_{k-1}(c) = \bot$ for all $c \geq \col(v_k)$.
We define $d$ as the number of odd colors assigned by $\col$ and claim that there are at most $d$ request-adding positions in $\pi_2$.
Assume there are $d+1$. Then, two request-adding positions~$k<k'$ share a color, call it $c$.
As $k'$ is request-adding, only requests strictly smaller than $c$ are open at position~$k'-1$, i.e., $c$ and all larger requests have to be answered in between $k$ and $k'$.
Hence, there is a debt-free position between $k$ and $k'$, which contradicts $\pi_2$ being of type 2.
Hence, by splitting $\pi_2$ at the request-adding positions we obtain at most $d+1$ non-empty infixes of $\pi_2$, each without request-adding, debt-free, and overflow positions.
We say such an infix has type~$3$.

Fix a non-empty type~$3$ infix~$\pi_3$.
A \emph{relevance-reducing position} of~$\pi$ is a~$k$ such that $\relreq(r_{k-1}) \supsetneq \relreq(r_k)$.
We show that~$\pi_3$ contains at most~$d$ relevance-reducing positions.
To this end, we first argue that there is at least one request that is open throughout~$\pi_3$.
First note that some request must be open at the beginning of~$\pi_3$, as otherwise the first vertex would be at a debt-free position, which do not occur in~$\pi_3$.
Let~$c^*$ be the maximal open request at the beginning of~$\pi_3$.
Due to~$\pi_3$ not containing debt-free nor request-adding positions, all colors~$c$ visited during~$\pi_4$ must satisfy~$c \leq c^*$.
Hence, the request for~$c^*$ remains open throughout~$\pi_3$.
We now show that the sets of relevant requests along~$\pi_3$ form a descending chain in the subset-relation.
Assume towards a contradiction that an infix $(v,o,r)(v',o',r')$ of~$\pi_3$ and a color~$c$ exist such that~$c \notin \relreq(r)$, but~$c \in \relreq(r')$, i.e, $\col(v') = c$.
Then,~$c > c^*$, i.e., there exists a request-adding position in~$\pi_3$, a contradiction.
Hence, the sets of relevant requests indeed form a descending chain.
As at the beginning of~$\pi_3$ at most~$d$ requests are relevant, there are at most~$d$ relevance-reducing positions.
Thus, by removing relevance-reducing positions, we obtain at most~$d+1$ non-empty infixes, each without relevance-reducing, request-adding, debt-free, and overflow positions.
We say such an infix has type~$4$.

Fix a non-empty type~$4$ infix $\pi_4$.
We show that $\pi_4$ contains at most $b$ increment-edges.
As argued above, there exists some color~$c$ for which a request is open at the beginning of~$\pi_4$, which is not answered throughout this infix.
Thus, $b+1$ increment-edges in $\pi_4$ would lead to an overflow position.
However, $\pi_4$ has no overflow positions by construction.
Thus, there are at most $b$ increment-edges in $\pi_4$.
Hence, by splitting $\pi_4$ at the increment-edges, we obtain a decomposition of $\pi_4$ into at most $b+1$ infixes, each without increment-edges and without relevance-reducing, request-adding, debt-free, and overflow positions.
We say such an infix has type~$5$.

Fix a non-empty type~$5$ infix~$\pi_5$.
We show that~$\pi_5$ is of length at most~$n$.
Assume towards a contradiction that~$\pi_5$ contains at least~$n+1$ vertices.
Then there exists an infix~$\pi' = (v,o,r)\cdots(v,o,r')$ of~$\pi_5$, since~$\pi_5$ does not contain overflow positions.
As argued above, we have~$\relreq(r) = \relreq(r')$.
Moreover, as~$\pi_5$ contains no increment-edges, we furthermore obtain~$r \dominationequivalent r'$.
Thus,~$\pi'$ is a dominating cycle, which contradicts~$\pi$ being unsettled.
Hence,~$\pi_5$ is of length at most~$n$.

Aggregating all these bounds yields an upper bound of~$(n+1)^6$ on the length of the unsettled play prefix~$\pi$, as we have $d \le n$ and $b \le n$.
\end{proof}

Using the notion of settled plays, we now define the finite game~$\finitegame = (\arena', \wincond_\finitemark)$, in which both players try to settle the play in their favor.
Due to Lemma~\ref{lem:pspace-mem:unary:unsettled-bound}, every play in~$\arena'$ is settled.
Thus, let~$\rho$ be an infinite play in~$\arena'$ and let~$\pi$ be the minimal settled prefix of~$\rho$, which can be settled due to three mutually exclusive criteria.
If~$\pi$ is settled due to containing an even dominating cycle, then~$\rho$ is winning for Player~$0$.
Otherwise, i.e., if~$\pi$ is settled due to saturating the overflow counter or due to containing an odd dominating cycle, $\rho$ is winning for Player~$1$.
The game $\finitegame$ is indeed a game of finite duration, as the winner is certain after $\ell$ moves, due to Lemma~\ref{lem:pspace-mem:unary:unsettled-bound}.
Hence, $\finitegame$ is determined~\cite{Zermelo13}.
Moreover, in order to solve~$\extgame$, it suffices to solve~$\finitegame$.

\begin{lem}
\label{lem:unary:infinite-to-finite}
Player~$0$ wins~$\extgame$ if and only if she wins~$\finitegame$.	
\end{lem}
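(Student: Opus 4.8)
The proof should establish the equivalence between winning the infinite parity game $\extgame$ and winning the finite-duration game $\finitegame$ by showing that strategies transfer in both directions. The key tool is Lemma~\ref{lem:dominating-memory:stable-concatenation}, which says the preorder $\dominatedby$ on memory states is preserved under appending a fixed arena edge. First I would handle the direction from $\extgame$ to $\finitegame$: take a positional winning strategy $\sigma'$ for Player~$0$ in $\extgame$ and argue it already wins $\finitegame$. Every $\finitegame$-play consistent with $\sigma'$ is a prefix of an infinite $\extgame$-play $\rho$ consistent with $\sigma'$, hence won by Player~$0$. If the minimal settled prefix $\pi$ is settled by an even dominating cycle, Player~$0$ wins $\finitegame$ by definition, so the work is to rule out the two bad cases. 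Saturating the overflow counter is impossible since $\sigma'$ is winning (those vertices form a rejecting sink). The remaining danger is an odd dominating cycle $(v_k,o_k,r_k)\cdots(v_{k'},o_{k'},r_{k'})$ with maximal color odd and $r_k \dominatedby r_{k'}$. Here I would argue that pumping this cycle — repeatedly traversing the underlying cycle in $\arena$ — produces an infinite play consistent with $\sigma'$ in which, by iterating Lemma~\ref{lem:dominating-memory:stable-concatenation}, the memory states stay in the $\dominatedby$-chain, the overflow counter stays below $n$, and the largest color seen infinitely often is odd; so some request is never answered with larger even color, contradicting that $\rho$ satisfies $\parity(\col')$. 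The subtlety is that the request with the dominating odd color might be repeatedly re-opened and only answered by \emph{smaller} even colors, so one has to use the $\dominatedby$-relation to track that a request of cost at least that of the odd color persists across each cycle iteration.

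**The converse direction.** For the direction from $\finitegame$ to $\extgame$, take a winning strategy for Player~$0$ in $\finitegame$ — by Lemma~\ref{lem:pspace-mem:unary:unsettled-bound} this is realized within $\ell$ moves — and build a winning strategy $\hat\sigma$ for Player~$0$ in $\extgame$ that plays a sequence of $\finitegame$-phases: play the $\finitegame$-strategy until the play is settled by an even dominating cycle, then "restart" the $\finitegame$-strategy from the vertex $(v_{k'},o_{k'},r_{k'})$ at the end of that cycle, and repeat forever. I would need to check this is well-defined: a play consistent with $\hat\sigma$ that reaches a settled prefix must be settled by an even cycle (not by saturation or an odd cycle), because otherwise that finite prefix would be losing for Player~$0$ in $\finitegame$, contradicting that the $\finitegame$-strategy is winning. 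So $\hat\sigma$ produces an infinite concatenation of even dominating cycles $C_1 C_2 C_3 \cdots$. Now I must show the resulting infinite $\extgame$-play $\rho$ satisfies $\parity(\col')$. The overflow counter never reaches $n$ along $\rho$ (each phase ends before saturation, and the counter only increments), so $\col'$ agrees with $\col$. For the parity condition: the maximal color in each $C_i$ is even, and I would argue that the maximal color occurring infinitely often on $\rho$ is even — if an odd color $c$ occurred infinitely often, it occurs in infinitely many $C_i$, each of which contains a strictly larger even color (the maximal color of $C_i$ dominates $c$), so $c$ is answered infinitely often; combined with the fact that within each even cycle the request functions at endpoints satisfy $r_k \dominates r_{k'}$ (so no open request's cost grows across a full cycle), one concludes every request is eventually answered, so $\rho$ satisfies the parity condition. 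Again the care needed is that an open request at the start of a phase is not lost: the even cycle answers all requests up to its maximal even color, and $r_k \dominates r_{k'}$ ensures the remaining requests are no more urgent afterward.

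**Main obstacle.** The main difficulty is the pumping argument in the $\extgame \Rightarrow \finitegame$ direction: turning a single odd dominating cycle into a genuine violation of the parity condition. One cannot simply say "the maximal color on the cycle is odd, so repeating it violates parity," because the max color on the cycle might be answered \emph{within the same iteration} by a smaller-indexed even color — the point of the $\dominatedby$ ordering is precisely that such answers don't prevent the \emph{costs} of older requests from growing. The right statement is: if $r_k \dominatedby r_{k'}$ and the max color is odd, then for the infinite pumped play the function $\paritydist$ does not drop below a growing threshold, or more cleanly, the extended play genuinely fails $\parity(\col')$ because some relevant request present at $r_k$ — one with cost witnessing $r_k \not\dominates r_{k'}$ in the relevant coordinates, or the large odd color itself — survives each iteration with non-decreasing urgency and is answered only by colors that are again re-opened. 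Making this airtight likely requires first reducing (via Lemma~\ref{lem:dominating-memory:stable-concatenation} applied edge-by-edge along the cycle) to the memory-state level: pumping gives memory states $m_0 \dominatedby m_1 \dominatedby m_2 \dominatedby \cdots$ at the cycle-boundary positions, and one shows such a $\dominatedby$-increasing sequence at positions where the max intervening color is odd forces some odd color to be the largest color seen infinitely often — hence a parity violation.
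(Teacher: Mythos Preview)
Your approach for the direction $\extgame \Rightarrow \finitegame$ has a genuine gap. You propose to take a positional winning strategy $\sigma'$ for Player~$0$ in $\extgame$ and argue that the \emph{same} strategy wins $\finitegame$, ruling out odd dominating cycles by pumping. But pumping the underlying $\arena$-cycle does \emph{not} produce a play consistent with $\sigma'$: the strategy $\sigma'$ is positional on $V \times M$, not on $V$, and the memory component changes between iterations. Concretely, if the cycle passes through a Player~$0$ vertex $(v,o,r)$ where $\sigma'$ moves to $v'$, then in the second iteration the play is at some $(v,o,r'')$ with $r \dominatedby r''$, and $\sigma'(v,o,r'')$ may pick a different successor. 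So the pumped play need not be consistent with $\sigma'$, and your contradiction evaporates. In fact the claim ``$\sigma'$ winning in $\extgame$ implies $\sigma'$ winning in $\finitegame$'' is false: take three vertices $v_0$ (Player~$1$, color~$3$), $v_1$ (Player~$0$, color~$0$), $v_2$ (Player~$1$, color~$4$), with edges $v_0\to v_1$, $v_1\to v_0$, $v_1\to v_2$, $v_2\to v_0$, all increment, and $b$ large. The strategy $\sigma'$ that sends $(v_1,o,r)$ to $v_0$ when $r(3)\le 3$ and to $v_2$ otherwise wins $\extgame$ (the play visits $v_2$ periodically), but in $\finitegame$ the very first settled prefix is $(v_0,\cdot)(v_1,\cdot)(v_0,\cdot)$, an odd dominating cycle. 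Player~$0$ still wins $\finitegame$ here, but with a \emph{different} strategy.

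The paper sidesteps this entirely: it proves both implications in the direction $\finitegame \Rightarrow \extgame$, once for Player~$0$ and once for Player~$1$, and then invokes determinacy. In each case it maintains a simulation $h$ from $\extgame$-prefixes to unsettled $\finitegame$-prefixes with the invariant that the last memory state of $h(\pi)$ dominates (for Player~$0$) or is dominated by (for Player~$1$) that of $\pi$; when the simulated play would become settled by a dominating cycle of the right parity, the cycle is \emph{removed} from the simulation, and Lemma~\ref{lem:dominating-memory:stable-concatenation} together with transitivity of $\dominatedby$ restores the invariant. The final argument is that if the $\extgame$-play had the wrong dominant color, vertices of that color would eventually be appended to the simulation and never removed, contradicting Lemma~\ref{lem:pspace-mem:unary:unsettled-bound}. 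Your direction $\finitegame \Rightarrow \extgame$ gestures at this simulation idea (``restart the $\finitegame$-strategy''), but restarting from the \emph{end} of the cycle $(v_{k'},o_{k'},r_{k'})$ is not what works --- the $\finitegame$-strategy is only known to be winning from $v'_\initmark$. The right move is to roll the simulated history back to the \emph{start} $(v_k,o_k,r_k)$ of the even cycle, which is legitimate precisely because $r_k \dominates r_{k'}$.
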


\begin{proof}
We first show that Player~$0$ wins~$\extgame$ if she wins~$\finitegame$.
Let $\sigma_\finitemark'$ be a winning strategy for Player~$0$ in $\game_\finitemark'$.
We construct a winning strategy~$\sigma'$ for Player~$0$ in $\extgame$ by simulating a play in $\game_\finitemark'$ that is consistent with $\sigma_\finitemark'$.
As this strategy is only useful as long as the simulating play is not settled, we have to keep the simulating play short by removing settling dominating cycles.
We define the simulation~$h\colon (V \times M)^+ \rightarrow (V \times M)^+$ and the strategy~$\sigma'$ simultaneously.
The function~$h$ satisfies the following invariant:
\begin{quote}
	Let $\pi$ be consistent with $\sigma'$ and end in $(v,o,r)$.
	Then, $h(\pi)$ is consistent with $\sigma'_\finitemark$, is unsettled, and ends in $(v,o',r')$ with $(o',r') \dominates (o,r)$. 
\end{quote}
Since~$h(\pi)$ is consistent with the winning strategy~$\sigma'_f$ for Player~$0$ and unsettled, this implies that neither the overflow counter of~$h(\pi)$ nor that of~$\pi$ reaches the value~$n$.

To begin, let $h(\vinit') = \vinit'$, which satisfies the invariant.
Now, assume we have a play prefix~$\pi$ consistent with $\sigma'$ ending in $(v, o, r)$ and let $h(\pi) = (v_0, o_0, r_0) \cdots (v_j, o_j, r_j)$.
We consider two cases, depending on whose turn it is at the last vertex $(v,o,r)$ of $\pi$.

If $(v,o,r) \in V'_1$, Player~$1$ moves to some successor of $(v, o, r)$ in~$\extgame$, say $(v^*, o^*, r^*)$.
Furthermore, define~$(o^*_\finitemark, r^*_\finitemark) = \update((o_j, r_j),(v_j, v^*))$, which is the corresponding memory update in~$\finitegame$.
If $(v,o,r) \in V'_0$, let $\sigma'_\finitemark(h(\pi)) = ( v^*,  o^*_\finitemark, r^*_\finitemark)$ in~$\finitegame$.
We mimic the move to $(v^*, o^*_f, r^*_f)$ in~$\extgame$ by defining $\sigma'(\pi) = (v^*, o^*, r^*)$ with $(o^*,r^*) = \update((o,r), (v, v^*) ))$.
This is well-defined due to the invariant~$v_j = v$.

In both cases, for a given successor~$v^*$ of~$v$ in~$\arena$, we have $(o^*_\finitemark, r^*_\finitemark) = \update((o_j, r_j),(v, v^*))$ and $(o^*,r^*) = \update((o,r), (v, v^*) ))$.
It remains to define $h(\pi \cdot (v^*, o^*, r^*))$.
We distinguish two cases: If $ \pi^*_f = h(\pi) \cdot (v^*, o^*_\finitemark, r^*_\finitemark)$ is not settled, we pick $h(\pi \cdot (v^*, o^*, r^*)) = \pi^*_f$.
This satisfies the invariant due to Lemma~\ref{lem:dominating-memory:stable-concatenation}.

Now assume $\pi^*_f$ is settled.
Since~$\pi^*_f$ is consistent with the winning strategy~$\sigma'_f$ for Player~$0$,~$\pi^*_f$ is settled due to containing an even dominating cycle.
Moreover, since~$h(\pi)$ is not settled, the dominating cycle is a suffix of~$\pi^*_f$.
Thus, the cycle starts in a vertex $(v_{j'}, o_{j'}, r_{j'})$ with $v_{j'} = v^*$ and~$r_{j'} \dominates r^*_f$.
Removing the settling cycle, we define $h(\pi \cdot (v^*, o^*, r^*)) =  (v_0, o_0, r_0) \cdots (v_{j'}, o_{j'}, r_{j'})$, which satisfies the invariant due to transitivity of~$\dominatedby$.

Now, consider a play~$\rho$ consistent with $\sigma'$ and let~$\pi_j$ be the prefix of length~$j$ of~$\rho$.
As argued before, neither the overflow counter of the~$\pi_j$ nor that of the~$h(\pi_j)$ reaches~$n$.
Hence, the colors of the last vertices of~$\pi_j$ and~$h(\pi_j)$ coincide for all~$j$.

Towards a contradiction, assume that the maximal color occurring infinitely often along~$\rho$ is odd, call it~$c$.
After some finite prefix,~$c$ cannot occur on even dominating cycles in the $h(\pi_j)$ anymore, since each occurrence on such a cycle implies at least one occurrence of an even higher even color in $\rho$.
Hence, after this prefix, each time a vertex of color~$c$ is visited, say at the end of the prefix~$\pi_j$, a vertex of the same color is appended to the simulated play~$h(\pi_j)$.
Moreover, this vertex is never removed from the simulated play, since only vertices occurring on even dominating cycles are removed from the simulated play.
Hence, the simulated play becomes longer with each visit to a vertex of color~$c$ after a finite prefix.
This contradicts the~$h(\pi_j)$ being unsettled, as every play of length~$\ell + 1$ is settled due to Lemma~\ref{lem:pspace-mem:unary:unsettled-bound}.
Thus, the maximal color occurring infinitely often in~$\rho$ must be even, i.e.,~$\sigma'$ is winning for Player~$0$ in~$\extgame$.

For the other direction, we show that Player~$1$ wins~$\extgame$ if he wins~$\finitegame$.
Due to determinacy of~~$\extgame$ and $\finitegame$, this suffices to show the result.
Let $\tau_\finitemark'$ be a winning strategy for Player~$1$ in~$\game_\finitemark'$.
Similarly to the previous case, we simulate play prefixes~$\pi$ in~$\extgame$ by play prefixes~$h(\pi)$ in~$\finitegame$ and define a simulation function~$h$ and a winning strategy~$\tau'$ for Player~$1$ in~$\extgame$ simultaneously.
Again, we need to make sure that the simulating play prefixes remain short and unsettled, as long as the overflow counter of the play in~$\extgame$ is not saturated.
We do so by removing settling dominating cycles from the simulating play prefixes in~$\finitegame$.
Formally, $h$ satisfies the following invariant:
\begin{quote}
	Let $\pi$ be consistent with $\tau'$ and end in $(v,o,r)$ with $o <n$.
	Then, $h(\pi)$ is consistent with $\tau_\finitemark'$, is unsettled, and ends in $(v,o',r')$ with $(o',r') \dominatedby (o,r)$. 
\end{quote}
If~$o=n$, we can stop the simulation and let Player~$1$ pick arbitrary successor vertices in~$\extgame$, since the play has reached the winning sink component for Player~$1$ in $\extgame$.

At the beginning, we pick $h(\vinit') = \vinit'$, which satisfies the invariant.
Now, let~$\pi$ be a play prefix consistent with $\tau'$ ending in $(v, o, r)$ and let~$h(\pi)$ be defined.
We obtain the next vertex similarly to the previous case, i.e., as an arbitrary successor of~$(v,o,r)$ if~$(v,o,r) \in V'_0$, and by applying~$\tau'_f$ to~$h(\pi)$ if~$(v,o,r) \in V'_1$.
In the latter case, we moreover define~$\tau'$ as simulating the move of~$\tau'_f$ as previously.
In both cases, we obtain~$v^*$ as the first component of the next vertex in both~$\extgame$ and~$\finitegame$.
Again, we define~$(o^*,r^*)$ and~$(o^*_f,r^*_f)$ as previously.

Let~$\pi^* = \pi \cdot (v^*, o^*, r^*)$.
It remains to define~$h(\pi^*)$.
If $o^* = n$, Player~$1$ has already won and we can define $h(\pi^*)$ arbitrarily since the invariant contains an empty premise.
If~$o^* < n$, however, we define~$h(\pi^*)$ as in the previous case, i.e., as~$\pi^*_f$, if~$\pi^*_f$ is not settled, and by removing the settling odd dominating cycle otherwise.
No even dominating cycle may occur, since~$\pi^*_f$ is consistent with~$\tau'_f$.
This maintains the invariant due to the same argument as previously.

Now consider a play~$\rho$ that is consistent with $\tau'$.
If the overflow counter along $\rho$ reaches the value $n$, then $\rho$ is winning for Player~$1$.
Thus, we consider the case where the counter is always smaller than $n$.
In this case, however, the maximal color seen infinitely often in~$\rho$ is odd, due to the same argument as in the previous case:
If it were even, vertices of that color would be appended to the simulation infinitely often without being removed, contradicting the simulated play being unsettled.
Hence,~$\tau'$ is winning for Player~$1$ in~$\extgame$.
\end{proof}

The combination of Lemmas~\ref{lem:cost-parity-to-parity} and \ref{lem:unary:infinite-to-finite} shows that Player~$0$ wins~$\game$ with respect to the bound~$b$ if and only if she wins~$\finitegame$.
Thus it remains to show that we can simulate $\finitegame$ on an alternating Turing machine in polynomial time.

\begin{lem}
\label{lemma_pspacemembership}
The following problem is in $\pspace$: \myquot{Given a parity game with costs~$\game$ and a bound~$b \in \nats$, does Player~$0$ have a strategy~$\sigma$ for $\game$ with $\cost(\sigma) \le b$?}
\end{lem}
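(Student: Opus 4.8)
The plan is to assemble the three lemmas already proved into an alternating polynomial-time algorithm and then invoke the standard equality $\aptime = \pspace$~\cite{ChandraKS81}. By Lemma~\ref{lem:cost-parity-to-parity} and Lemma~\ref{lem:unary:infinite-to-finite}, Player~$0$ has a strategy of cost at most~$b$ in $\game$ if and only if she wins the finite-duration game $\finitegame$, so it suffices to decide the winner of $\finitegame$ in alternating polynomial time in $\size{\game}$. Recall that we may assume $b < n$, since otherwise Corollary~\ref{corollary_costupperbound} reduces the question to solving $\game$, which is in $\up\cap\coup \subseteq \pspace$ by Theorem~\ref{thm:previouswork}(\ref{thm:previouswork:cost}); so from now on every bound that appears (the overflow counter value, each request cost) is at most $n$.

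First I would describe the alternating machine. It stores a single configuration of $\finitegame$, namely a vertex $(v,o,r) \in V \times M$ together with a step counter; note that such a configuration has size polynomial in $n$, since $v$ needs $\log n$ bits, $o \in [n+1]$ needs $\log n$ bits, the request function $r$ assigns to each of the $d \le n$ odd colors a value in $\set{\bot}\cup[b+1]$ with $b < n$, hence $r$ needs only $\bigo(n\log n)$ bits, and the step counter ranges up to $\ell = (n+1)^6$, i.e.\ $\bigo(\log n)$ bits. Starting from $\vinit' = (\vinit,0,r_{\vinit})$, the machine repeatedly: reads the current vertex, branches existentially if it belongs to $V_0'$ and universally if it belongs to $V_1'$ to choose an outgoing edge of $\arena$, computes the memory update $\update$ along that edge (a polynomial-time operation, since each of the four update steps only manipulates the polynomially many entries of $r$ and compares them to $b$), increments the step counter, and checks whether the play prefix built so far is settled. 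The only subtlety in the settledness check is detecting a dominating cycle; but a dominating cycle is an infix $(v_k,\dots,v_{k'})$ with $v_k = v_{k'}$, $o_k = o_{k'} < n$, and the appropriate domination relation between $r_k$ and $r_{k'}$, together with a condition on the maximal color on the infix. Rather than storing the whole prefix, the machine nondeterministically guesses (via branching of the player who benefits, or simply by keeping a small bounded amount of bookkeeping) the start vertex of a candidate cycle and the maximal color seen since then; I would maintain, alongside the current configuration, a "checkpoint" triple $(v_k,o_k,r_k)$ and the running maximum color, updated at each step, so that settledness can be tested in polynomial time at every move. As soon as the prefix is settled, the machine halts and accepts iff the play was settled in Player~$0$'s favour (an even dominating cycle); if it is settled by a saturated overflow counter or an odd dominating cycle, it rejects. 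By Lemma~\ref{lem:pspace-mem:unary:unsettled-bound} the prefix is guaranteed to be settled after at most $\ell = (n+1)^6$ moves, so the machine always halts within a polynomial number of steps, each of which takes polynomial time; hence it runs in alternating polynomial time, and it accepts iff Player~$0$ wins $\finitegame$.

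Finally I would conclude: since $\finitegame$ is determined~\cite{Zermelo13} and the machine simulates the finite game faithfully, the machine accepts from $\vinit'$ iff Player~$0$ wins $\finitegame$, iff (by Lemmas~\ref{lem:cost-parity-to-parity} and~\ref{lem:unary:infinite-to-finite}) Player~$0$ has a strategy of cost at most~$b$ in $\game$. An alternating polynomial-time algorithm yields a polynomial-space algorithm by $\aptime = \pspace$~\cite{ChandraKS81}, which gives the claimed membership. The main obstacle I anticipate is not the high-level structure, which is immediate from the preceding lemmas, but getting the bookkeeping of the alternating machine exactly right: one must argue carefully that detecting a dominating cycle — which in principle refers to an entire infix of the play — can be done on the fly with only polynomially many bits of state (tracking a guessed checkpoint memory state and the running maximal color), and that the winner of a settled play can be declared correctly from this bounded information. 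Once that local simulation is in place, the polynomial bound on the number of moves from Lemma~\ref{lem:pspace-mem:unary:unsettled-bound} does the rest.
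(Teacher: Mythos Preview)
Your high-level plan --- reduce to $\finitegame$ via Lemmas~\ref{lem:cost-parity-to-parity} and~\ref{lem:unary:infinite-to-finite}, simulate $\finitegame$ on an alternating polynomial-time machine, and invoke $\aptime=\pspace$ --- is exactly what the paper does. Where you diverge, and where there is a real gap, is in the bookkeeping for detecting dominating cycles.

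You propose to avoid storing the whole prefix by keeping a single guessed checkpoint $(v_k,o_k,r_k)$ together with a running maximal color. This does not work as stated. A settled prefix may close an even dominating cycle (winning for Player~$0$) or an odd one (winning for Player~$1$), and the winner of $\finitegame$ is determined by the \emph{minimal} settled prefix, i.e., by which type closes first. If the existential player places the checkpoint, she can exhibit an even cycle but cannot be forced to reveal an earlier odd one; if the universal player places it, the dual problem arises. A single checkpoint chosen by ``the player who benefits'' therefore does not faithfully compute the minimal settled prefix, and acceptance of the machine need no longer coincide with the winner of $\finitegame$.

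The paper avoids this obstacle in the simplest possible way: it stores the \emph{entire} play prefix on the tape. This is legitimate precisely because of the bound you already cited --- each vertex of $\arena\times\mem$ is polynomially encodable, and by Lemma~\ref{lem:pspace-mem:unary:unsettled-bound} the prefix has length at most $\ell=(n+1)^6$, so the full history occupies polynomial space and can be scanned after every move to check, for each pair of positions, whether the corresponding infix is a dominating cycle. No guessing of checkpoints is needed, and the minimal settled prefix is identified deterministically. In short, the ``main obstacle'' you flagged disappears once you notice that the polynomial length bound lets the alternating machine keep the whole history explicitly.
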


\begin{proof}
Given~$\game$ and~$b$, we show how to simulate the finite-duration game $\finitegame$ on an alternating polynomial time Turing machine using the game semantics of such machines, i.e., two players construct a single path of a run of the machine.
The existential and universal player take the roles of Player~$0$ and Player~$1$, respectively.
The Turing machine keeps track of the complete prefix of the simulated play of $\game_\finitemark'$.
Since every vertex of~$\arena \times \mem$ can be represented in polynomial size and since the length of the play is bounded from above by~$(n+1)^6$ due to Lemma~\ref{lem:pspace-mem:unary:unsettled-bound}, the Turing machine can keep track of the history explicitly and check after each step whether a dominating cycle has occurred in polynomial time.
If the play is settled due to an even dominating cycle, the machine accepts, if it is settled otherwise, the machine rejects.
Note that this algorithm involves neither the explicit construction of $\extgame$ nor that of $\finitegame$.
The Turing machine accepts $\game$ and $b$ if and only if Player~$0$ wins~$\game_\finitemark'$.
Due to Lemma~\ref{lem:pspace-mem:unary:unsettled-bound}, this machine terminates after polynomially many steps.
Hence, $\aptime = \pspace$~\cite{ChandraKS81} completes the proof.
\end{proof}

\subsection{Playing Parity Games with Costs Optimally is PSPACE-hard}
\label{sec_pspacehardness}

Next, we turn our attention to proving a matching lower bound on the complexity, which already holds for finitary parity games, i.e., parity games with costs in which every edge is an increment-edge.
This result is proven by a reduction from the canonical $\pspace$-hard problem~$\qbf$:
Given a quantified boolean formula~$\varphi = Q_1 x_1 Q_2 x_2 \ldots Q_n x_n \psi$ with $Q_i \in \set{\exists, \forall}$ and where~$\psi$ is a boolean formula over the variables~$x_1, x_2, \ldots, x_n$, determine whether $\varphi$ evaluates to true.
We assume w.l.o.g.\ that $\psi$ is in conjunctive normal form such that every conjunct has exactly three literals, i.e., $\psi = \bigwedge_{j = 1}^m (\ell_{j,1} \vee \ell_{j,2} \vee \ell_{j,3} )$, where every~$\ell_{j, k}$ is either~$x$ or~$\overline{x}$ for some~$x \in \set{x_1,\dots,x_n}$.
We call each $\ell_{j,k}$ for $k \in \set{1,2,3}$ a literal and each conjunct of three literals a clause.
Furthermore, we assume w.l.o.g.\ that the quantifiers~$Q_j$ are alternating with $Q_1 = Q_n = \exists$.

This proof uses the standard framework for reducing~$\qbf$ to infinite two-player games of polynomial size:
Player~$0$ implements existential choices, i.e., existential quantification and disjunctions.
Dually, Player~$1$ implements universal quantification and disjunction.
Intuitively, the players pick truth values for the variables in the order as they appear in the quantifier prefix.
Then, Player~$1$ picks a clause and then Player~$0$ picks a literal from that clause.
She wins if and only if the literal evaluates to true under the assignment constructed earlier. 
This requirement has to be encoded by the winning condition, as the polynomial state space of the constructed game is insufficient to encode all possible assignments.
If this is the case, then Player~$0$ wins the game if and only if the formula evaluates to true. 

Here, we employ a finitary parity condition with respect to a given bound to achieve this.
To this end, we encode assigning a truth value to a variable by opening requests, e.g., setting $x_j$ to false requests color~$4j+1$ and setting it to true requests color~$4j+3$.
Crucially, only one color can be requested, but the latter one only after a delay of one step.
Now, Player~$0$ picking a literal~$\ell$ answers the corresponding request, e.g., if $\ell = \overline{x_j}$, then the color~$4j+2$ occurs, and if $\ell =  x_j$, then the color~$4j+4$ occurs.
Note that $4j+4$ answers both the request corresponding to setting $x_j$ to false and the one setting it to true.
But again, the answer~$4j+4$ comes only after a delay of one step, which allows to invalidate this answer in case $x_j$ has been assigned false, by requiring a well-chosen bound.
This is possible with a finitary parity condition with respect to a given bound, but neither with a classical parity condition nor with a finitary parity condition with an existentially quantified bound, which explains the increase in complexity. 

\begin{lem}
\label{lemma_pspacehardness}
The following problem is $\pspace$-hard: \myquot{Given a finitary parity game~$\game$ and a bound~$b \in \nats$, does Player~$0$ have a strategy~$\sigma$ for $\game$ with $\cost(\sigma) \le b$?}
\end{lem}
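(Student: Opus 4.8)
The plan is to reduce from $\qbf$, following the standard game-theoretic framework already sketched in the text before the lemma statement. Given $\varphi = Q_1 x_1 \cdots Q_n x_n\, \psi$ with $\psi = \bigwedge_{j=1}^m (\ell_{j,1} \vee \ell_{j,2} \vee \ell_{j,3})$, I would build a finitary parity game $\game_\varphi$ together with a bound $b$ (a small constant, say $b = 3$, or something linear in a fixed gadget width) such that Player~$0$ has a strategy of cost at most $b$ if and only if $\varphi$ is true. The arena consists of three phases played in sequence and then looped: first an \emph{assignment phase} where the players alternately pick truth values for $x_1, \ldots, x_n$ in quantifier order (Player~$0$ owns the vertices for existentially quantified variables, Player~$1$ the universally quantified ones); then a \emph{challenge phase} where Player~$1$ picks a clause index $j \in [m]$; then a \emph{response phase} where Player~$0$ picks one of the three literals $\ell_{j,1}, \ell_{j,2}, \ell_{j,3}$ of that clause. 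After the response phase the play returns to the start of the assignment phase, so the whole interaction repeats forever; this is what turns a one-shot satisfaction check into a $\limsup$-condition on costs.

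The heart of the construction is the coloring and the timing trick described in the excerpt. When a variable $x_j$ is set to \emph{false}, the arena routes through a vertex of color $4j+1$; when it is set to \emph{true}, it routes through a gadget that, after a one-step delay (an intermediate color-$0$ vertex), visits color $4j+3$. When Player~$0$ later picks the literal $\overline{x_j}$ in the response phase, color $4j+2$ occurs, which answers the request $4j+1$ (the ``false'' request) but not $4j+3$; when she picks $x_j$, color $4j+4$ occurs, again after a one-step delay, answering both $4j+1$ and $4j+3$. The bound $b$ is chosen so that: if the picked literal is consistent with the assignment, the request is answered within $b$ increments along every repetition of the loop; but if Player~$0$ tries to ``cheat'' — picking a literal that is false under the current assignment, or exploiting the delay to answer a request whose matching color was never actually requested — then on that repetition some request incurs cost strictly greater than $b$ before it is answered, and since the interaction repeats forever this forces $\limsup \paritydist = \infty$ along that play (one must be careful that the open request is genuinely not answered until the next loop iteration, using the delayed-answer gadget so that $4j+4$ cannot retroactively legitimize a bad $4j+2$ move). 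One also has to make sure that requests from one loop iteration are cleanly closed before the next, e.g.\ by a reset vertex of a large even color at the loop's seam, so that the only way to keep all costs bounded is to answer honestly on each and every pass.

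For correctness I would argue both directions. If $\varphi$ is true, fix a winning Skolem function $f$ for the existential variables; Player~$0$ plays the assignment phase according to $f$ applied to the prefix of Player~$1$'s universal choices, and in the response phase, after Player~$1$ names clause $j$, she picks a literal of clause $j$ that is satisfied by the current assignment — such a literal exists because $\psi$ evaluates to true under every full assignment extending the $f$-choices. Then every request is answered within the designed bound on every loop iteration, so $\cost(\sigma) \le b$. Conversely, suppose $\varphi$ is false; then for every Skolem function there is a choice of universal values making some clause unsatisfied. Player~$1$'s counter-strategy is: play those falsifying universal values in the assignment phase, then in the challenge phase pick the unsatisfied clause $j$; whatever literal Player~$0$ picks from clause $j$ is false under the assignment, so by the timing argument the corresponding request is not answered within $b$ increments on that iteration. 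Repeating this forever yields a consistent play of cost $> b$ against \emph{any} Player~$0$ strategy, so no strategy has cost $\le b$. Finally I would note the reduction is polynomial-time computable — the arena has $O(n+m)$ vertices and constant-size colors and the bound $b$ is written in $O(\log n)$ bits — and that the game is a genuine finitary parity game (every edge an increment-edge), so $\pspace$-hardness holds already in that restricted setting.

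The main obstacle I expect is getting the \emph{timing/delay gadget} exactly right: one must rule out all the ways Player~$0$ could abuse the gap between the ``false'' request $4j+1$ and the ``true'' request $4j+3$, and the gap between the literal being picked and its color appearing, so that the \emph{only} cost-$\le b$ behaviour is honest play. In particular the argument must ensure (i) that picking $x_j$ when the assignment is ``false'' does not secretly answer a phantom $4j+3$ request cheaply via color $4j+4$, (ii) that the one-step delays before $4j+3$ and $4j+4$ are counted as increments in a way that makes the dishonest answer strictly exceed $b$, and (iii) that requests do not leak across loop iterations in a way that lets a late honest answer rescue an earlier dishonest one. Pinning down the precise colors, the precise placement of increment-edges versus $\eps$-edges (here all edges are increments), and the precise value of $b$ so that all of (i)–(iii) hold simultaneously is the delicate bookkeeping that the proof will have to carry out carefully.
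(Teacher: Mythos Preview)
Your plan follows exactly the approach the paper takes, and you correctly identify the timing bookkeeping as the crux. Two of your concrete parameter claims, however, will not survive an attempt to carry them out.

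A constant bound $b$ (or any bound independent of $n$) is impossible here. During the assignment phase all $n$ requests are opened in sequence; the request corresponding to $x_1$ has therefore already incurred $\Omega(n)$ increments by the time the response phase even begins, so any bound Player~$0$ can meet must be $\Omega(n)$. The paper uses $b = 3n+5$, and the check gadgets are calibrated so that a \emph{matching} literal answers its variable's request with cost exactly $3n+5$, while a \emph{mismatching} literal answers it only with cost $3n+6$. For this arithmetic to come out the same for every variable~$x_j$ (whose request has been open for a $j$-dependent amount of time), each check gadget for~$x_j$ contains a padding path of length~$\Theta(j)$. This is what pushes the arena to $O(n^2+m)$ vertices rather than the $O(n+m)$ you estimate.

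A second, smaller point: a dishonest choice by Player~$0$ does not force $\limsup_j \paritydist(\rho,j) = \infty$. All requests are answered within the same loop iteration by the reset vertex of color~$4(n+1)$; the $\limsup$ along the resulting play is finite (namely $3n+6$), it just exceeds~$b$. You phrase this correctly later (``a consistent play of cost $> b$''), but the earlier sentence about $\limsup = \infty$ and the worry that the request must stay open ``until the next loop iteration'' are both off --- the request is answered in the same iteration, one increment too late.
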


\begin{proof}
	Let $\varphi = Q_1 x_1 Q_2 x_2 \ldots Q_n x_n \psi$ be a quantified boolean formula with $\psi = \bigwedge_{j = 1}^m C_j$ and $C_j = (\ell_{j,1} \vee \ell_{j,2} \vee \ell_{j,3} )$, where every $\ell_{j,k}$ is either $x$ or $\overline{x}$ for some $x \in \set{x_1,\dots,x_n}$.
	We construct a finitary parity game $\game_\varphi$ such that Player~$0$ has a strategy~$\sigma$ for $\game_\varphi$ with $\cost(\sigma) = 3n + 5$ if and only if the formula $\varphi$ evaluates to true.
	The arena consists of three parts: In the first part, which begins with the initial vertex $v_\initmark$, Player~$0$ and Player~$1$ determine an assignment for the variables $x_1$ through $x_n$, where Player~$0$ and Player~$1$ pick values for the existentially and universally quantified variables, respectively.
	Each choice of a truth value by either player incurs a request.
	In the second part, Player~$1$ first picks a clause, after which Player~$0$ picks a literal from that clause.
	In the last part, the play then proceeds without any choice by the players and checks whether or not the chosen literal was set to true in the first part of the arena.
	If it was set to true, then the corresponding request is answered with cost $3n + 5$.
	Otherwise, that request is answered with cost $3n + 6$.
	Furthermore, all other potentially open requests are answered with cost at most $3n + 5$ and the play returns to the initial vertex $v_\initmark$.
	Thus, all these gadgets are traversed infinitely often and the traversals are independent of each other.
	This idea indeed requires the target of the reduction to be a finitary parity game instead of a classical one, as Player~$0$ is able to answer all requests within at most $3n + 6$ steps independently of the truth value of~$\varphi$.
	Only the additional requirement for her to do so within at most~$3n + 5$ moves forces her to provide a witness for~$\varphi$ being true.
	
	If~$\varphi$ evaluates to true, then Player~$0$ can enforce that all requests are answered with cost at most $3n + 5$.
	Hence, there exists a strategy~$\sigma$ for Player~$0$ with $\cost(\sigma) \leq 3n + 5$.
	If~$\varphi$ evaluates to false, however, then Player~$1$ can enforce requests that remain unanswered for at least $3n + 6$ steps.
	Thus, there exists no strategy~$\sigma$ for Player~$0$ with $\cost(\sigma) \leq 3n + 5$.
	We begin by constructing the arena~$\arena$ together with its coloring~$\col$.
	
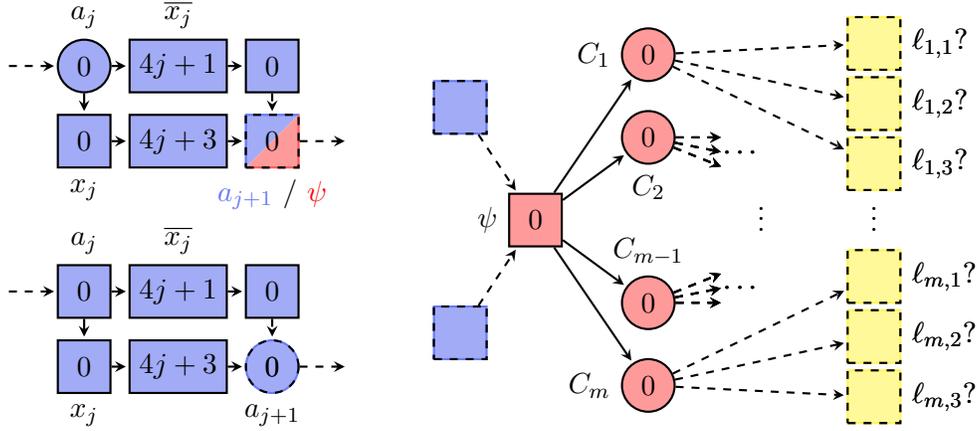
\begin{figure}
	\centering
		\begin{tikzpicture}[thick]
		
        
        \node[p0,assign] (p0-0-0) at (-7,2.05) {$0$};
        \node[p1,assign] (p0-1-0) at (-7,1.05) {$0$};
        \node[p1,assign] (p0-0-1) at (-5.75,2.05) {$4j + 1$};
        \node[p1,assign] (p0-1-1) at (-5.75,1.05) {$4j + 3$};
        \node[p1,assign] (p0-0-2) at (-4.5,2.05) {$0$};
       
        \node[p1,dashed] (p0-1-2) at (-4.5,1.05) {$0$};
        \begin{scope}
        	\clip (p0-1-2.south west) -- (p0-1-2.north east) -- (p0-1-2.north west) -- cycle;
        	\path[draw=mydarkblue,assign,ultra thick] (p0-1-2.south west) rectangle (p0-1-2.north east);
        \end{scope}
        \begin{scope}
        	\clip (p0-1-2.south west) -- (p0-1-2.north east) -- (p0-1-2.south east) -- cycle;
        	\path[draw=mydarkred,choice,ultra thick] (p0-1-2.south west) rectangle (p0-1-2.north east);
        \end{scope}
        \node[p1,dashed] (p0-1-2) at (-4.5,1.05) {$0$};

        \node[anchor=south] at (p0-0-0.north) {$a_j$};
        
        \node[anchor=north] at (p0-1-0.south) {$x_j$};
        \node[anchor=south] at (p0-0-1.north) {$\overline{x_j}$};
        
        \path[draw,dashed] ($(p0-0-0) - (1,0)$) edge (p0-0-0);
        \path[draw,dashed] (p0-1-2) edge ($(p0-1-2) + (1,0)$);
        
        \path[draw] (p0-0-0) edge (p0-0-1);
        \path[draw] (p0-0-0) edge (p0-1-0);
        \path[draw] (p0-0-1) edge (p0-0-2);
        \path[draw] (p0-1-0) edge (p0-1-1);
        \path[draw,dashed] (p0-1-1) edge (p0-1-2);
        \path[draw,dashed] (p0-0-2) edge (p0-1-2);
        
        \node[anchor=north] at (p0-1-2.south) {\textcolor{mydarkblue}{$a_{j + 1}$} / \textcolor{red}{$\psi$}};
        
        \node[p1,assign] (p1-0-0) at (-7,-.95) {$0$};
        \node[p1,assign] (p1-1-0) at (-7,-1.95) {$0$};
        \node[p1,assign] (p1-0-1) at (-5.75,-.95) {$4j + 1$};
        \node[p1,assign] (p1-1-1) at (-5.75,-1.95) {$4j + 3$};
        \node[p1,assign] (p1-0-2) at (-4.5,-.95) {$0$};
        \node[p0,draw=mydarkblue,assign] (p1-1-2) at (-4.5,-1.95) {$0$};
        \node[p0,dashed] (p1-1-2) at (-4.5,-1.95) {$0$};

        \node[anchor=south] at (p1-0-0.north) {$a_j$};
        
        \node[anchor=north] at (p1-1-0.south) {$x_j$};
        \node[anchor=south] at (p1-0-1.north) {$\overline{x_j}$};
        
        \path[draw,dashed] ($(p1-0-0) - (1,0)$) edge (p1-0-0);
        \path[draw,dashed] (p1-1-2) edge ($(p1-1-2) + (1,0)$);
        
        \path[draw] (p1-0-0) edge (p1-0-1);
        \path[draw] (p1-0-0) edge (p1-1-0);
        \path[draw] (p1-0-1) edge (p1-0-2);
        \path[draw] (p1-1-0) edge (p1-1-1);
        \path[draw,dashed] (p1-1-1) edge (p1-1-2);
        \path[draw,dashed] (p1-0-2) edge (p1-1-2);
        
        \node[anchor=north] at (p1-1-2.south) {$a_{j + 1}$};
        
			
		\node[p1,dashed] (lit-true) at (-2,-1.5) {};
		\node[p1,draw=mydarkblue,assign] at (lit-true) {};
		\node[p1,dashed] (lit-true) at (-2,-1.5) {};
		
		\node[p1,dashed] (lit-false) at (-2,1.5) {};
		\node[p1,draw=mydarkblue,assign] at (lit-false) {};
		\node[p1,dashed] (lit-false) at (-2,1.5) {};
		
		\node[p1,label=left:{$\psi$},choice] (pick-clause) at (-1,0) {$0$};
		
		\path[draw,dashed] (lit-true) edge (pick-clause);
		\path[draw,dashed] (lit-false) edge (pick-clause);
		
		\foreach \label/\labelpos/\ypos in {1/left/2.2,2/below/1.1,m-1/above/-1.1,m/left/-2.2}
			\node[p0,label=\labelpos:$C_{\label}$,choice] (c-\label) at (.5,\ypos) {$0$};
			
		\node (c-dots) at (2,0) {$\rvdots$};
				
		\foreach \x [count=\i from 1] in {2.35,1.55,0.75} {
			\node[p1,dashed,label=right:$\ell_{1,\i}?$] (l-1-\i) at (3.5,\x) {};
			\node[p1,draw=mydarkyellow,check] at (l-1-\i) {};
			\node[p1,dashed,label=right:$\ell_{1,\i}?$] (l-1-\i) at (3.5,\x) {};
		}
		
		\foreach \x [count=\i from 1] in {-0.75,-1.55,-2.35} {
			\node[p1,dashed,label=right:$\ell_{m,\i}?$] (l-m-\i) at (3.5,\x) {};
			\node[p1,draw=mydarkyellow,check] at (l-m-\i) {};
			\node[p1,dashed,label=right:$\ell_{m,\i}?$] (l-m-\i) at (3.5,\x) {};	
		}
			
		\node (x-dots) at (3.5,0) {$\rvdots$};
		
		\foreach \j in {1,2,m-1,m}
		  \path (pick-clause) edge (c-\j);
		
		\foreach \literal in {1,2,3} {
			\foreach \clause in {1,m}
				\path[draw,dashed] (c-\clause) edge (l-\clause-\literal);
			
			\foreach \clause/\offset in {2/0cm,m-1/.4cm} {
				\foreach \ypos in {0cm,-.2cm,-.4cm}
					\path[draw, dashed] (c-\clause) edge ($(c-\clause) + (1cm,\offset) + (0,\ypos)$);
				
				\node (c-\clause-dots) at ($(c-\clause) + (0,\offset) + (1.25cm,-.2cm)$) {$\cdots$};
			}
		}
		
		\end{tikzpicture}
		
		\caption{The gadget for existentially and universally quantified variables (left, from top to bottom), and the middle part of the arena (right).}
		\label{fig:pspace-hard:choose-literal}
	\end{figure}	
	
	The left-hand side of Figure~\ref{fig:pspace-hard:choose-literal} shows the gadgets that assign a truth value to variable~$x_j$.
	The vertex~$a_j$ belongs to Player~$0$ if~$x_j$ is existentially quantified, and to Player~$1$ if~$x_j$ is universally quantified.
	The dashed edges indicate the connections to the pre- and succeeding gadget, respectively.
	We construct the first part of~$\arena$ out of~$n$ copies of this gadget.
	Moreover, the vertex~$a_1$ has an incoming edge from the end of~$\arena$, in order to allow for infinite plays, and is the initial vertex~$v_\initmark$ of the arena.	
	In the remainder of this proof, let $c_{x_j} =4j + 3$ and $c_{\overline{x_j}} = 4j + 1$ be the colors associated with assigning true or false to $x_j$, respectively.

	The second part of the arena starts with a vertex~$\psi$ of Player~$1$, from which he picks a clause by moving to a vertex~$C_j$ of Player~$0$.
	Each vertex $C_j$ is connected to three gadgets, one for each of the three literals contained in $C_j$.
	We show this construction in the right-hand side of Figure~\ref{fig:pspace-hard:choose-literal}.
   Note that moving from the vertex of color~$c_{x_j}$ or of color~$c_{\overline{x_j}}$ to the vertex~$\psi$ takes~$3(n-j) + 1$ or~$3(n-j)+2$ steps, respectively..
    
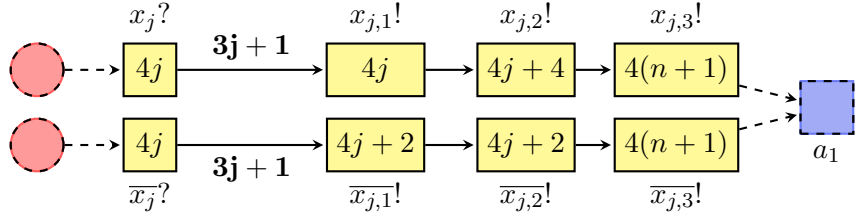
\begin{figure}
	\centering
			\begin{tikzpicture}[thick]
				\node[p1,dashed] (a1-true) at (9,0) {};
				\node[p1,assign,draw=mydarkblue] at (a1-true) {};
				\node[p1,dashed,label=south:$a_1$] (a1-true) at (9,0) {};
			
				\begin{scope}[shift={(0,.5)}]
					\node[p0,dashed] (c-j) at (-1.5,.0) {};
					\node[p0,draw=mydarkred,choice] (c-j) at (-1.5,0) {};
					\node[p0,dashed] (c-j) at (-1.5,0) {};
					
					\node[p1,check,label=north:$x_j?$] (equal-begin) at (0,0) {$4j$};
					\node[p1,check,label=north:$x_{j,1}!$,minimum width=1.3cm] (check-1) at (3,0) {$4j$};
					\node[p1,check,label=north:$x_{j,2}!$] (check-2) at (5,0) {$4j+4$};
					\node[p1,check,label=north:$x_{j,3}!$] (check-3) at (7,0) {$4(n+1)$};
					
					\path[draw] (equal-begin) edge node[anchor=south] {$\mathbf{3j+1}$} (check-1);
					\path[draw] (check-1) edge (check-2);
					\path[draw] (check-2) edge (check-3);
					\path[draw,dashed] (check-3) edge (a1-true);
					\path[draw,dashed] (c-j) edge (equal-begin);
				\end{scope}
				
				\begin{scope}[shift={(0,-.5)}]				
					\node[p0,dashed] (c-j) at (-1.5,0) {};
					\node[p0,draw=mydarkred,choice] at (c-j) {};
					\node[p0,dashed] at (c-j) {};
					
					\node[p1,check,label=south:$\overline{x_j}?$] (equal-begin) at (0,0) {$4j$};
					\node[p1,check,label=south:$\overline{x_{j,1}}!$] (check-1) at (3,0) {$4j+2$};
					\node[p1,check,label=south:$\overline{x_{j,2}}!$] (check-2) at (5,0) {$4j+2$};
					\node[p1,check,label=south:$\overline{x_{j,3}}!$] (check-3) at (7,0) {$4(n+1)$};
					
					\path[draw] (equal-begin) edge node[anchor=north] {$\mathbf{3j+1}$} (check-1);
					\path[draw] (check-1) edge (check-2);
					\path[draw] (check-2) edge (check-3);
					\path[draw,dashed] (check-3) edge (a1-true);
					\path[draw,dashed] (c-j) edge (equal-begin);
				\end{scope}
			
			\end{tikzpicture}
			
			\caption{Gadgets checking the assignment of true to~$x_j$ (top) or to~$\overline{x_j}$ (bottom).}
			\label{fig:pspace-hard:check-eval}

	\end{figure}

	The last part of the arena consists of one gadget for each literal $x_1$, $\overline{x_1}$ through $x_n$, $\overline{x_n}$ occurring in $\varphi$.
	These gadgets check whether or not the literal picked in the middle part was actually set to true in the first part of the arena.
	We show these gadgets in Figure~\ref{fig:pspace-hard:check-eval}.
	
	In these gadgets, neither player has a non-trivial choice.
	Thus, the play proceeds by first answering requests for all colors smaller than $c_{x_j}$ and $c_{\overline{x_j}}$.
	It then either grants the request for color~$c_{x_j}$ after $3j + 2$ steps, or the request for color~$c_{\overline{x_j}}$ after $3j + 1$ steps, both counted from the vertices $x_j?$ and $\overline{x_j}?$, respectively.
	Since traversing the middle part of the arena incurs a constant cost of $2$, a request for color~$c_{x_j}$ has incurred a cost of $3(n-j) + 3$ at~$x_j?$ and $\overline{x_j}?$, while a request for color~$c_{\overline{x_j}}$ has incurred a cost of $3(n-j) + 4$ at these vertices.
	Hence, the total cost incurred by the request for color~$c_{x_j}$ is $(3(n-j)+3) + (3j+2) = 3n + 5$ in the gadget corresponding to $x_j$, and $(3(n-j)+3) + (3j + 3) = 3n + 6$ in the gadget corresponding to $\overline{x_j}$.
	The dual reasoning holds true for requests for color~$c_{\overline{x_j}}$.
	Hence, the bound of $3n + 5$ is only achieved if the request corresponding to the chosen literal was posed in the initial part of the arena.
	
	After traversing this last gadget, all requests are answered in the vertex~$x_{j,3}!$ or~$\overline{x_{j,3}}!$ and the play resets to the initial vertex via an edge to $a_1$.
	
	The size of $\arena$ is polynomial in $\card{\varphi}$:
	The first part consists of one constant-size gadget per variable, while the second part is of linear size in the number of clauses in $\varphi$.
	The final part contains a gadget of size $\bigo(n)$ for each literal occurring in $\varphi$.
	Thus, the size of the arena is in $\mathcal O(n^2+m)$.
It remains to argue that Player~$0$ has a strategy $\sigma$ with $\cost(\sigma) = 3n+ 5$ if and only if $\varphi$ evaluates to true.
	For any quantifier-free boolean formula~$\psi$ that contains variables $x_1$ through $x_n$ and any partial assignment $\alpha\colon \set{x_1,\dots,x_j} \rightarrow \set{\ttrue, \ffalse}$ we denote by $\alpha(\psi)$ the formula resulting from replacing the variables in $\alpha$'s domain with their respective truth values.
	
	It suffices to argue about finite plays that begin and end in $a_1$, as all plays start in $a_1$, visit $a_1$ infinitely often, and all requests are answered before moving back to $a_1$.
	Hence, for the remainder of this proof, we only consider a finite play infix~$\pi$ starting and ending in $a_1$.
	
	First assume that $\varphi$ evaluates to true.
	We construct a strategy $\sigma$ for Player~$0$ with the properties described above.
	Pick $j$ as some index such that $x_j$ is existentially quantified and consider the prefix $\pi'$ of $\pi$ up to, but not including $a_j$.
	We associate $\pi'$ with an assignment $\alpha_{j-1}\colon\set{x_1,\dots,x_{j-1}}\rightarrow\set{\ttrue, \ffalse}$, where $\alpha_{j-1}(x_k) = \ttrue$ if~$\pi'$ visits~$x_k$, and $\alpha_{j-1}(x_k) = \ffalse$ if~$\pi'$ visits~$\overline{x_k}$.
	Due to the structure of the arena, exactly one of these cases holds true, hence $\alpha_{j-1}$ is well-defined.

	For $j=1$, $\exists x_j\cdots Q_n x_n \alpha_{j-1}(\psi)$ evaluates to true by assumption.
	Let $t \in \set{\ttrue, \ffalse}$ such that $\forall x_{j+1} \cdots Q_n x_n (\alpha_{j-1}[x_j \mapsto t])(\psi)$ evaluates to true as well, where $\alpha_{j-1}[x_j \mapsto t]$ denotes the mapping $\alpha_{j-1}$ augmented by the mapping $x_j \mapsto t$.
	Moreover, we define $\sigma(\cdots a_j) = x_j$ if $t = \ttrue$, and $\sigma(\cdots a_j) = \overline{x_j}$ otherwise.
	We proceed inductively, constructing $\sigma(\cdots a_j)$ for all existentially quantified variables $x_j$ according to the Boolean values that satisfy the formulas $\exists x_j Q_{j+1} x_{j+1} \ldots Q_n x_n \alpha_{j-1}(\psi)$, until we reach the vertex $\psi$.
	
	At this point, the analysis of the play prefix so far yields an assignment of truth values to variables $\alpha_n \colon \set{x_1,\dots,x_n} \rightarrow \set{\ttrue, \ffalse}$, such that $\alpha_n(\psi)$ evaluates to true.
	Let $\alpha = \alpha_n$.
	
	At vertex $\psi$ there exist $n$ open requests.
	As previously argued, if $\alpha(x_j) = \ttrue$, then there is an open request for $c_{x_j}$ with cost $3(n-j) + 1$.
	Otherwise, there is an open request for $c_{\overline{x_j}}$ with cost $3(n-j) + 2$.
	At vertex $\psi$, Player~$1$ picks a clause $C_j$ by moving to its vertex.
	Since $\alpha(\psi)$ evaluates to true, there exists a $k \in \set{1,2,3}$ with $\alpha(\ell_{j,k}) = \ttrue$.
	We pick $\sigma(\cdots C_j) = \ell_{j,k}?$.

	If $\ell_{j,k} = x_l$, then $\alpha(x_l) = \ttrue$ and hence, there is an open request for $c_{x_l}$.
	As argued previously, this request is then answered with cost~$3n+5$, since we picked the gadget corresponding to $x_l$.
	Similarly, if $\ell_{j,k} = \overline{x_l}$, then $\alpha(x_l) = \ffalse$ and thus there is an open request for $x_{\overline{x_l}}$, which is answered with cost~$3n+5$ as well.
	All other open requests are answered with cost at most $3n+5$, as argued previously.

	After this traversal of the final gadget, all requests are answered, and the play automatically moves to vertex $a_1$ to begin anew.
	The same reasoning then applies ad infinitum.
	Thus, Player~$0$ is able to answer all requests with a cost of at most $3n + 5$.
	
	Now assume that $\varphi$ evaluates to false.
	Then, irrespective of the choices made by Player~$0$ when constructing $\alpha$ in the first part of the arena, Player~$1$ can pick truth values for the universally quantified variables such that $\alpha(\psi)$ evaluates to false and then pick a clause $C_j$ such that $\alpha(C_j)$ evaluates to false.
	Hence, Player~$0$ has to pick some $\ell_{j,k}$ with $\alpha(\ell_{j,k}) = \ffalse$.
	If $\ell_{j,k} = x_l$, then there is an open request for $c_{\overline{x_j}}$ at $x_{l,1}!$, which is answered with cost $3n+6$.
	Similarly, if $\ell_{j,k} = \overline{x_l}$, then $\alpha(x_l) = \ttrue$, hence there is an open request for $c_{x_l}$, which is also answered with cost~$3n+6$.
	Thus, in each round Player~$1$ can open a request that is only answered with cost at least~$3n+6$, i.e., Player~$0$ has no strategy with cost~$3n+ 5$. 
\end{proof}

\section{Memory Requirements of Optimal Strategies in Parity Games with Costs}\label{sec:memory}

Next, we study the memory needed by both players to play optimally in parity games with costs.
Recall that Player~$0$ always has a positional winning strategy if she wins the game, while Player~$1$ requires infinite memory.
In contrast, our main result of this section shows that the memory requirements of optimal strategies are exponential for Player~$0$, i.e., playing optimally comes at a price in terms of memory, too.

Standard complexity theoretic assumptions already rule out the existence of small strategies:
If they existed, guessing and verifying such a strategy would contradict \pspace-completeness of solving finitary parity games with respect to a given bound.
However, here we explicitly present games in which either player requires exponential memory, which we later use to demonstrate gradual tradeoffs between memory and cost in Section~\ref{sec:tradeoffs}.
We obtain our lower bound by a generalization of a construction of Chatterjee and Fijalkow~\cite{CF13arxiv} which yielded a linear lower bound.

First, however, let us state a corollary of the construction of the parity game $\game'$ in the proof of Lemma~\ref{lemma_pspacemembership}, which gives an exponential upper bound on the necessary memory states for both players.
Recall that the memory structure used in that proof has one counter with a range of size $b+2$ for each odd color.
Furthermore, it has an additional counter that is bounded by $n$, which counts the number of times the bound $b$ is exceeded.
Using similar techniques to \cite{FZ14}, it is possible to remove the overflow counter for Player~$0$: She can play assuming the largest value for this latter counter that still allows her to win.

\begin{cor}
	\label{cor:parity:memory:upper-bound}
	Let $\game$ be a parity game with costs containing~$n$ vertices and~$d$ odd colors.
	\begin{itemize}
		\item If Player~$0$ has a strategy $\sigma$ for $\game$ with $\cost(\sigma) = b$, then she also has a strategy $\sigma'$ with $\cost(\sigma') \leq b$ and $\card{\sigma'} = (b+2)^d$.
		\item If Player~$1$ has a strategy~$\tau$ for~$\game$ with~$\cost(\tau) = b$, then he also has a strategy~$\tau'$ with~$\cost(\tau') \geq b$ and~$\card{\tau'} = n (b+2)^d$.
	\end{itemize}
\end{cor}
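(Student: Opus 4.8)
The plan is to derive both bounds as corollaries of the construction of the parity game $\extgame$ from the proof of Lemma~\ref{lemma_pspacemembership}, together with Lemma~\ref{lem:cost-parity-to-parity}. Recall that $\extgame = (\arena \times \mem, \parity(\col'))$ with memory structure $\mem = ([n+1] \times R, \init, \update)$, where $R = (\set{\bot} \cup [b+1])^D$ and $D$ is the set of odd colors occurring in $\game$. Thus $\card{\mem} = (n+1)\cdot(b+2)^d$, since $\card{R} = (b+2)^d$ (each of the $d$ coordinates ranges over $\set{\bot,0,\dots,b}$, i.e. $b+2$ values). By Lemma~\ref{lem:cost-parity-to-parity}, Player~$0$ has a strategy $\sigma$ in $\game$ with $\cost(\sigma)\le b$ if and only if she wins $\extgame$; and by Theorem~\ref{thm:previouswork}(\ref{thm:previouswork:parity}) the winner of a parity game has a positional winning strategy. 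So if $\cost(\sigma) = b$ for some strategy $\sigma$, then in particular $\cost(\sigma)\le b$, so Player~$0$ wins $\extgame$, hence has a positional winning strategy $\sigma'\colon V_0' \to V'$ there, which, read back through $\mem$ via the next-move function $\nxt(v,m)=v'$ when $\sigma'(v,m)=(v',m')$ (exactly as in the proof of Lemma~\ref{lem:cost-parity-to-parity}), induces a finite-state strategy in $\game$ of cost at most $b$ using memory $\mem$. This already gives $\card{\sigma'} \le (n+1)(b+2)^d$ for Player~$0$.

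To improve this to $(b+2)^d$ for Player~$0$, I would eliminate the overflow counter, following the technique used in \cite{FZ14}. The key observation is that the overflow counter component is non-decreasing along any play and, once it reaches $n$, the vertex becomes a sink winning for Player~$1$; moreover a larger overflow value is \emph{harder} for Player~$0$, in the sense that if Player~$0$ wins $\extgame$ from $(v,o,r)$ then she also wins from $(v,o',r)$ for every $o' \le o$ (the arena projects identically, only the coloring of the saturated layer differs, and a smaller counter gives her strictly more slack). Concretely, for each reachable $v \in V$ and request function $r$, let $o_{v,r}$ be the largest value in $[n]$ such that Player~$0$ wins $\extgame$ from $(v, o_{v,r}, r)$ (this is well-defined since she wins from the initial vertex, which has $o=0$). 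I would then define a memory structure whose state set is just $R$ (of size $(b+2)^d$): the update function simulates the $r$-component of $\update$ as before, but whenever $\update$ would increment the overflow counter from the value Player~$0$ is currently "assuming", she instead re-synchronizes to the largest admissible value $o_{v',r'}$ for the new vertex — which, crucially, only depends on $v'$ and $r' = r_{v'}$ (a reset always sets $r'$ to $r_{v'}$), and thus requires no additional stored information. Combined with the positional winning strategy on $\extgame$, composed with this re-synchronization map, this yields a finite-state strategy of cost at most $b$ with memory exactly $R$, so $\card{\sigma'} = (b+2)^d$. One can pad the state set up to exactly this size if the construction yields something smaller.

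For Player~$1$, I would argue by determinacy together with the "other direction" of the proof of Lemma~\ref{lem:cost-parity-to-parity}. If Player~$1$ has a strategy $\tau$ with $\cost(\tau) = b$, then no strategy of Player~$0$ achieves cost strictly below $b$ against $\tau$, hence Player~$0$ has no strategy in $\game$ with $\cost(\sigma) \le b-1$; by Lemma~\ref{lem:cost-parity-to-parity} (applied with bound $b-1$) and determinacy of parity games, Player~$1$ wins the corresponding game $\extgame$ built for bound $b-1$, whose memory structure has size $(n+1)(b+1)^d \le n(b+2)^d$ for the relevant parameter ranges. A positional winning strategy $\tau'$ for Player~$1$ on $\extgame$ then induces, via the same next-move translation as in the "for the other direction" part of the proof of Lemma~\ref{lem:cost-parity-to-parity}, a finite-state strategy in $\game$ using memory $\mem$, and one checks from that argument that the induced strategy forces $\cost \ge b$: it either violates the parity condition of $\game$ outright (infinite cost), or produces infinitely many overflows, each separated by a request open for $\ge b$ increment-edges, so $\cost = \infty \ge b$. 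Unlike for Player~$0$, we cannot drop the overflow counter here, since the dual resynchronization trick does not apply — Player~$1$ genuinely needs to know how many overflows have occurred — which is why the $n$-factor remains. This gives $\card{\tau'} \le n(b+2)^d$, and padding gives equality.

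\textbf{Main obstacle.} The delicate point is the overflow-counter-elimination for Player~$0$: one must check carefully that the re-synchronization to $o_{v',r'}$ is (i) well-defined, i.e. Player~$0$ indeed still wins from the resynchronized vertex — this relies on the monotonicity "winning from $o$ implies winning from $o' \le o$" together with the fact that at an overflow the new request function is forced to be $r_{v'}$, so the target of the resynchronization is determined by data the automaton already has; and (ii) cost-preserving, i.e. the resulting play in $\game$ still has cost at most $b$ — here one reruns the argument from the left-to-right direction of the proof of Lemma~\ref{lem:cost-parity-to-parity}, noting that between consecutive resynchronizations the $r$-component evolves exactly as the genuine $\update$ would, so any request open for more than $b$ increment-edges still triggers an overflow, contradicting stabilization of the (now virtual) overflow counter. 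Everything else is bookkeeping about the sizes $\card{R} = (b+2)^d$ and $\card{\mem} = (n+1)(b+2)^d$.
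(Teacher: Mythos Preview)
Your proposal is correct and follows essentially the same approach as the paper: the paper also derives the bounds as a corollary of the construction of~$\extgame$ (Lemma~\ref{lem:cost-parity-to-parity}), uses the technique from~\cite{FZ14} of playing ``assuming the largest overflow value still winning'' to drop the $[n+1]$-component for Player~$0$, and keeps the full memory structure for Player~$1$. Your write-up is in fact more detailed than the paper's one-sentence sketch; the only minor difference is that you build Player~$1$'s game with bound~$b-1$ rather than~$b$ (which is indeed the cleaner choice to guarantee Player~$1$ wins~$\extgame$), and your inequality $(n+1)(b+1)^d \le n(b+2)^d$ should be justified via $b \le n$ (Corollary~\ref{corollary_costupperbound}) rather than left as ``for the relevant parameter ranges''.
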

Again, our matching lower bounds already hold for finitary parity games, i.e., parity games with costs in which all edges are increment-edges.
These proofs reuse principles underlying the \pspace-hardness proof presented in Section~\ref{sec_pspacehardness}, e.g., having a fixed bound requires a player, in the worst case, to store all open requests in order to answer them timely. 

We begin by showing the lower bound for Player~$0$.

\begin{thm}
	\label{thm:memory-upper-p0}
	For every $d \geq 1$, there exists a finitary parity game~$\game_d$ such that
	\begin{itemize}
	
		\item $\game_d$ has $d$ odd colors and $\card{\game_d} \in \bigo(d^2)$, 
		
		\item Player~$0$ has a strategy~$\sigma$ in~$\game_d$ with~$\cost(\sigma) = d^2 + 2d$, but
	
		\item every strategy~$\sigma$ for Player~$0$ in~$\game_d$ with~$\cost(\sigma) \leq d^2 + 2d$ has size at least~$2^{d-1}$. 
	
	\end{itemize}
	
\end{thm}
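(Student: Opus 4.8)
The plan is to construct $\game_d$ as a cyclic arena in which Player~$0$ is repeatedly forced to reveal, through her moves, the entire contents and ordering of a set of pending requests, so that she needs $2^{d-1}$ memory states to answer all of them within the tight bound $d^2+2d$. Following the spirit of the \pspace-hardness construction in Section~\ref{sec_pspacehardness}, I would have the game split into three phases that are traversed infinitely often: (1) a \emph{request phase}, controlled by Player~$1$, in which he selects a subset $S \subseteq \{c_1,\dots,c_{d-1}\}$ of odd colors and an order in which to open the corresponding requests (each opening costing a controlled number of increment-edges, and the largest odd color $c_d$ being opened last and unconditionally so that every play visits color $c_d$); (2) an \emph{answer phase}, controlled by Player~$0$, in which she must emit a sequence of even colors that answers every open request, where the position in this sequence at which she emits the even color answering $c_i$ determines how much additional cost that request incurs; and (3) a fixed \emph{reset phase} with no choices that returns the token to the initial vertex and, in passing, answers any residual requests cheaply. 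The bound $d^2+2d$ is chosen so that the total cost of each request is exactly on the threshold precisely when Player~$0$ emits the answers in the \emph{correct order relative to when they were opened} — emitting them in any other order makes some request overshoot by at least one increment-edge.

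\textbf{Key steps, in order.} First I would give the arena explicitly: a chain of $d-1$ Player~$1$ gadgets (each letting him either open request $c_i$ now or skip it, with appropriately weighted edges so that the cost already accumulated on $c_i$ at the start of the answer phase encodes both membership in $S$ and position in the chosen order), followed by a vertex opening $c_d$, followed by a chain of $d-1$ Player~$0$ gadgets in which she picks, at step $k$, which still-open request to answer with an even color, followed by the constant-cost reset path; this makes $\card{\game_d}\in\bigo(d^2)$ and the number of odd colors $d$, establishing the first bullet. Second I would exhibit the cost-$(d^2+2d)$ strategy for Player~$0$: she tracks the set and order of open requests and answers them in the unique order that keeps every request at cost exactly $d^2+2d$; a routine arithmetic check over the edge weights establishes the second bullet. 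Third — the heart of the proof — I would show the lower bound on memory. I would argue that any strategy $\sigma$ of cost $\le d^2+2d$ must, at the first vertex of the answer phase, choose its next move as a function that effectively depends on the \emph{entire} information ``which subset was requested, and in which order'' that Player~$1$ generated in the request phase: if two distinct such histories $h \ne h'$ led to the same memory state, then because the arena from the answer phase onward is identical, $\sigma$ would play the same answer-sequence on both, and I would show a concrete history on which that sequence causes some request to incur cost $d^2+2d+1$ (an exchange-argument: the first place the two orders disagree forces a mismatch between the urgency of a request and the slot in which it is answered). Since there are at least $2^{d-1}$ such histories that must be pairwise distinguished (one per subset, say, or per ordering — $2^{d-1}$ is the count I would aim to match), $\sigma$ needs at least $2^{d-1}$ memory states. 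Finally I would note the game is a genuine finitary parity game (every edge an increment-edge, after subdividing weighted edges and coloring new vertices $0$, incurring only polynomial blowup, as remarked after the definition of $\cp$).

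\textbf{Main obstacle.} The delicate part is calibrating the edge weights so that the threshold $d^2+2d$ is \emph{simultaneously} achievable (there is at least one correct answering order) and \emph{fragile} (every incorrect order overshoots by a full increment), across all $2^{d-1}$ subsets of requested colors and all of Player~$1$'s orderings — the budget consumed in the request phase and in the answer phase must add up to the same constant regardless of which requests are present, which is the same balancing trick used in the \pspace-hardness proof (where moving to $\psi$ from a false-assignment vertex cost exactly one more step). Getting this bookkeeping right, and then turning ``wrong order $\Rightarrow$ overshoot'' into a clean fooling-set argument that two histories collapsing to one memory state produce a losing (cost $> d^2+2d$) play, is where the real work lies; the rest is routine.
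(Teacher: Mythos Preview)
Your plan is essentially the paper's: a cyclic arena with a Player-$1$ request phase followed by a Player-$0$ answer phase, the bound $d^2+2d$ tuned so that each relevant request must be answered in its matching answer-slot, and a pigeonhole argument on the $2^{d-1}$ distinguishable request-histories for the lower bound. The paper's gadgets differ from your sketch in one way that dissolves the calibration obstacle you flag: instead of ``open or skip $c_i$'' in gadget~$i$ (which fixes the order and makes your separate mention of an ordering redundant), each of Player~$1$'s $d$ identical gadgets lets him request \emph{any} odd color, every path through a gadget has uniform length $d+2$, and color $2c{-}1$ (resp.\ $2c$) sits $c$ steps into the gadget --- so a request posed in request-gadget~$j$ and answered in answer-gadget~$j$ costs exactly $d^2+2d$ regardless of the color, and the $2^{d-1}$ count arises because only the strictly-increasing prefix of relevant requests matters, and these biject with subsets of $\{1,3,\dots,2d-3\}$.
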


\begin{proof}
	Let $d \geq 1$.
	We construct a finitary parity game~$\game_d$ that has the stated properties.
	To this end, after defining~$\game_d$, we construct a strategy with cost~$d^2 + 2d$ for Player~$0$ and argue that it is optimal, followed by the proof that every optimal strategy has at least size~$2^{d-1}$.
	
	The game $\game_d$ is played in rounds.
	In each round, which starts at the initial vertex of~$\game_d$, Player~$1$ poses $d$ requests for odd colors in the range~$1$ through~$2d-1$.
	Subsequently, Player~$0$ gives~$d$ answers using colors in the range $2$ through $2d$.
	If she recalls the choices made by Player~$1$ in the first part of the round, she is able to answer each request optimally.
	Otherwise, we show that Player~$1$ can exploit her having not enough memory in order to force requests to go unanswered for more than~$d^2+2d$ steps.
	After each round, the play returns to the initial vertex in order to allow for infinite plays.
	
	The arena $\arena$ consists of gadgets that each allow exactly one request or response to be made.
	Moreover, each path through a gadget has the same length~$d+2$, including the edge connecting a gadget to its successor.
	However, low-priority requests and responses are made earlier than high-priority ones when traversing such a gadget, due to its structure.
	We show both gadgets in Figure~\ref{fig:memory-upper-bound:player-0:gadgets}.
	The dashed lines denote the edges to the pre- and succeeding gadget and the edge between the final and the initial gadget.
	As the owner of the succeeding vertex depends on the succeeding gadget's owner, we draw it as a diamond.
	
\begin{figure}
	\centering
			\begin{tikzpicture}[thick,yscale=.95]
			
			\newcommand{\rowOne}{0}
			\newcommand{\rowTwo}{-1}
			\newcommand{\rowThree}{-2}
				\foreach \offset/\player/\labelA/\labelB/\labelC/\labelD [ 
					evaluate = \offset as \colOne using (\offset + 1.5),
					evaluate = \offset as \colTwo using (\offset + 2.5),
					evaluate = \offset as \colThree using (\offset + 3.5),
					evaluate = \offset as \colFour using (\offset + 4.5),
					evaluate = \offset as \exitCol using (\offset + 6)
					]
				in {
					0/p1/1/3/2d-3/2d-1,
					6/p0/2/4/2d-2/2d
				} {
				
					\IfStrEq{\player}{p1}{\def\MyStyle{assign}}{\def\MyStyle{choice}}
					\foreach \x/\label/\xCount in {\colOne/\labelA/0,\colTwo/\labelB/1,\colFour/\labelD/3} {
						\node[\player,\MyStyle] (\xCount-0) at (\x,\rowOne) {$0$};
						\node[p1,\MyStyle] (\xCount-1) at (\x,\rowTwo) {$\label$};
						\node[p1,\MyStyle] (\xCount-2) at (\x,\rowThree) {$0$};
					}
					
					\node (2-0) at (\colThree,\rowOne) {$\dots$};
					\node (2-1) at (\colThree,\rowTwo) {$\dots$};
					\node (2-2) at (\colThree,\rowThree) {$\dots$};
					
					\foreach \x in {0,1,3} {
						\path (\x-0) edge (\x-1);
						\path (\x-1) edge (\x-2);
					}
					
					\foreach \y in {0,2} {
						\foreach \x [evaluate=\x as \nextX using int(\x + 1)] in {0,...,2} {
							\path (\x-\y) edge (\nextX-\y);
						}
					}

					\node[p1] (entry) at (\offset,\rowOne) {};
					\begin{scope}
						\clip (entry.south west) -- (entry.south east) -- (entry.north east) -- cycle;
						\path[fill=myblue] (entry.south west) -- (entry.south east) -- (entry.north east) -- cycle;
						\draw[mydarkblue] (entry.south west) -- (entry.south east) -- (entry.north east);
					\end{scope}
					\begin{scope}
						\clip (entry.south west) -- (entry.north west) -- (entry.north east) -- cycle;
						\path[fill=myred] (entry.south west) -- (entry.north west) -- (entry.north east) -- cycle;
						\draw[mydarkred] (entry.south west) -- (entry.north west) -- (entry.north east);
					\end{scope}
					\node[p1,dashed] (entry) at (\offset,\rowOne) {};
					
					\node[diamond,minimum size = .7cm] (exit) at (\exitCol,\rowThree) {};
					\begin{scope}
						\clip (exit.north) -- (exit.west) -- (exit.south) -- cycle;
						\path[fill=myblue] (exit.north) -- (exit.west) -- (exit.south) -- cycle;
						\draw[mydarkblue] (exit.north) -- (exit.west) -- (exit.south);
					\end{scope}
					\begin{scope}
						\clip (exit.north) -- (exit.east) -- (exit.south) -- cycle;
						\path[fill=myred] (exit.north) -- (exit.east) -- (exit.south) -- cycle;
						\draw[mydarkred] (exit.north) -- (exit.east) -- (exit.south);
					\end{scope}
					\node[diamond,minimum size = .7cm,draw,dashed] (exit) at (\exitCol,\rowThree) {};
					
					\path[dashed] (entry) edge (0-0);
					\path[dashed] (3-2) edge (exit);
				}

			\end{tikzpicture}
			\caption{The gadgets for Player~$1$ (left) and Player~$0$ (right).}
			\label{fig:memory-upper-bound:player-0:gadgets}

	\end{figure}
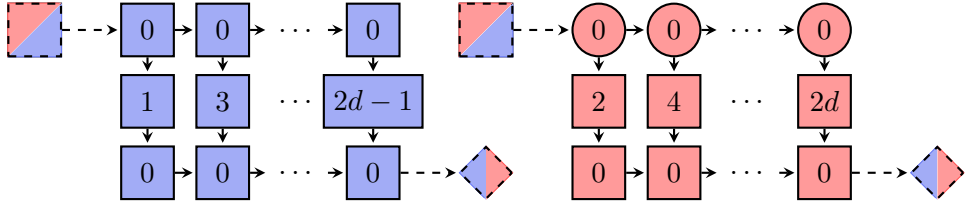
		
	More precisely, the arena~$\arena$, consists of~$d$ repetitions of the gadget for Player~$1$, followed by $d$ repetitions of the gadget for Player~$0$.
	The initial vertex $v_\initmark$ of the arena is the top-left vertex of the first gadget for Player~$1$.
	Moreover, the final gadget of Player~$0$ has a single back-edge to the initial vertex.
	Clearly, $\arena$ satisfies the first statement of the theorem.
			
	Similarly to the proof of Lemma~\ref{lemma_pspacehardness}, it suffices to consider finite plays infixes.
	Even though the requests are not necessarily all answered after each round, we argue that Player~$0$ can always do so while playing optimally.
	
	We now construct an optimal strategy from $v_\initmark$ for Player~$0$.
	In order to play optimally, Player~$0$ needs to track the requests made by Player~$1$ in the first part of each round.
	Instead of tracking each request precisely, however, it suffices to store the order in which the relevant requests were posed.
	In order to use memory efficiently, we do not initialize our memory structure with the empty sequence, but rather with the memory element encoding that all requests are relevant.
	Recall that relevant requests can only be opened by visiting some larger color than all currently open requests.
	Hence, we define the set of strictly increasing odd sequences 
	\begin{align*}
		\incseq_d = \set{ (c_1,\ldots,c_d) \mid &1 \leq c_1 \leq \cdots \leq c_d = 2d-1, \\
		&c_j \neq 2d-1 \text{ implies } c_j < c_{j+1}, \text{ all } c_j \text{ are odd}}
	\end{align*}
	and use them as the set of memory states $M_d = \incseq_d$.~\footnote{This definition differs from the one presented in the conference version~\cite{WeinertZimmermann16}. There, the padding to length~$d$ (here with requests for~$2d-1$) was kept implicit. We show that an optimal strategy for Player~$0$ can be implemented with~$\card{M_d} = 2^{d-1}$ many memory states. This invalidates the lower bound of~$2^d-2$ claimed in the conference version.}
	Note that $\card{M_d} = 2^{d-1}$, as each increasing sequence is isomorphic to a subset of~$\set{1,3,5,\dots,2d-3}$.
	In order to obtain~$\mem$, we define~$m_\initmark = (1,3,\dots,2d-3,2d-1)$.
	Moreover, we define the update function~$\update$ as follows:~$\update(m,e) = m_\initmark$ if~$e$ leads to the initial vertex of~$\game_d$, and as~$\update(m, e) = m$, if~$e$ leads to some other vertex of even color.
	If~$e$ leads to the (unique) vertex of odd color~$c$ in the~$j$-th gadget of Player~$1$, however, we differentiate two cases.
	Let~$m = (c_1,\dots,c_d)$.
	If~$c_j \geq c$, then~$\update(m,e) = m$.
	Otherwise, we define
	\[ \update(m,e) = (c_1,\dots,c_{j-1},c,c+2,c+4,\dots,2d-1,2d-1,\dots,2d-1) \enspace. \]
	Note that this definition of~$\update$ implies that the memory state is fixed once a partial play leaves the gadgets of Player~$1$, remains unchanged throughout the traversal of the gadgets of Player~$0$ and is only reset upon moving to the initial vertex of~$\game_d$.
	Finally, we define the next-move function~$\nxt$ such that, if the play leaves the gadgets of Player~$1$ with memory state~$m = (c_1,\dots,c_d)$, then Player~$0$ moves to color~$c_j + 1$ in her~$j$-th gadget.
	
	Now consider a play in which Player~$0$ plays according to this strategy and consider the request for color~$c$ made by Player~$1$ in his~$j$-th gadget.
	If the request for color~$c$ is the largest request made by Player~$1$ so far in the current round,, then Player~$0$ answers the $j$-th request from Player~$1$ in her $j$-th gadget.
	The cost of this request then consists of three components.
	First, the play has to leave Player~$1$'s $j$-th gadget, incurring a cost of $d - (c+1) / 2 + 2$.
	Then, the play passes through $d-1$ gadgets, incurring a cost of~$d+2$ in each.
	Finally, moving to color~$c+1$ in Player~$0$'s gadget incurs a cost of $(c+1)/2$. 
	Hence, answering Player~$1$'s request incurs a cost of $d^2 + 2d$.
	If there, however, exists some~$j' < j$ such that Player~$1$ has already requested some color~$c' > c$ in his~$j'$-th gadget, then the request for color~$c'$ will be answered in gadget~$j'$ of Player~$0$.
	Hence, Player~$0$ answers the request with cost at most $(d-1)(d+2) < d^2 + 2d$.
	
	This cost is indeed optimal.
	Consider the play in which Player~$1$ always requests $2d-1$.
	Even if Player~$0$ answers this request in her first gadget, it still incurs a cost of $d^2 + 2d$.
			
	It remains to show that no optimal strategy of size less than $\card{M_d}$ exists.
	To this end we show that Player~$1$ can exploit a strategy of Player~$0$ with less than~$\card{M_d}$ memory states and pose requests that will be answered suboptimally.
	
	We associate with each~$m \in M_d$ the partial play~$\req(m)$ which starts in the initial vertex, where Player~$1$ requests the colors occurring in~$m$ in order.
	Clearly,~$m \neq m' \in M_d$ implies~$\req(m) \neq \req(m')$.
	
	Let~$\sigma$ be a strategy for Player~$0$ that is implemented by $(M, \init, \update)$ with $\card{M} < \card{M_d}$ and let $m \in M$.
	Due to the pigeon-hole principle, there exist $m'_1 \neq m'_2 \in M_d$, such that $\update^+(m, \req(m'_1)) = \update^+(m, \req(m'_2))$, i.e., Player~$0$ answers both sequences of requests in the same way.
	Since $\req(m'_1) \neq \req(m'_2)$, there exists a gadget of Player~$1$ in which the requests posed during $\req(m'_1)$ and $\req(m'_2)$ differ.
	Pick $j$ as the minimal index of such a gadget and assume that in his $j$-th gadget, Player~$1$ requests color $c$ during $\req(m'_1)$, and color $c'$ during $\req(m'_2)$, where, w.l.o.g., $c < c'$.
	If Player~$0$ has already answered the request for $c'$ before her $j$-th gadget, then some earlier request is not answered optimally when reacting to~$\req(m'_2)$, as requests are posed in strictly increasing order.
	Thus, assume that the request for color~$c'$ has not been answered upon entering Player~$0$'s $j$-th gadget.
	If she visits some color $c'' < c'$ in this gadget, she will only answer $c'$ in some later gadget, thereby incurring a cost of more than $d (d+2) = d^2 + 2d$ when reacting to~$\req(m'_2)$.
	If she visits some color $c'' \geq c'$, then she does not answer the request for $c$ optimally, thus incurring a cost of at least $d^2 + 2d + (c' - c)/2 > d^2 + 2d$ when reacting to~$\req(m'_1)$.
	Hence, one of the sequences of requests $\req(m'_1)$ or $\req(m'_2)$ leads to Player~$0$ answering at least one request non-optimally.
	As such sequences $m'_1$ and $m'_2$ exist for each $m \in M$, Player~$1$ can force such an \myquot{expensive} request in each round.
	Thus, $\cost(\sigma) > d^2 + 2d$, i.e., $\sigma$ is not~optimal.
\end{proof}

After having shown that Player~$0$ requires exponential memory to keep the cost of the play below a given bound, we now show a similar result for Player~$1$:
In general, he also requires exponential memory to enforce costs above a given bound.

\begin{thm}
	\label{thm:memory-upper-p1}
	For every $d \geq 1$, there exists a finitary parity game~$\game_d$ such that
	\begin{itemize}
	
		\item $\game_d$ has $2d$ odd colors and $\card{\game_d} \in \bigo(d)$,
	
		\item Player~$1$ has a strategy~$\tau$ in $\game_d$ with $\cost(\tau) = 5(d-1)+7$, but
	
		\item every strategy $\tau$ for Player~$1$ in $\game_d$ with $\cost(\tau) \geq 5(d-1)+7$ has size at least $2^d$. 
	
	\end{itemize}
\end{thm}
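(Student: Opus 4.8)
The plan is to build $\game_d$ as a mirror image of the game from the proof of Theorem~\ref{thm:memory-upper-p0}, now forcing \emph{Player~$1$} to recall a $d$-bit string chosen by Player~$0$. The game is played in rounds from a fixed initial vertex. In the first phase of a round Player~$0$ traverses $d$ constant-size gadgets; in the $j$-th gadget she picks one of two odd colors, $c_j^0$ or $c_j^1$, committing to a bit $s_j$, so that afterwards exactly $d$ requests are open and encode a string $s\in\set{0,1}^d$; these $2d$ colors $c_j^0,c_j^1$ are precisely the $2d$ odd colors of $\game_d$. In the second phase Player~$1$ traverses $d$ constant-size gadgets and commits to a bit $g_j$ in the $j$-th one, i.e., he ``answers'' the bits of $s$. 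A final, choice-free phase checks $g$ against $s$ and is engineered so that: (i) if $g=s$, each of the $d$ open requests is answered with cost exactly $5(d-1)+7$ before the play returns to the initial vertex; and (ii) if $g$ and $s$ differ in even a single coordinate, the play is short-circuited so that every open request is answered with cost strictly below $5(d-1)+7$. Each gadget has constant size, so $\card{\game_d}\in\bigo(d)$, and since every round's requests are answered before returning to the initial vertex, the cost of a play is the limit superior over rounds of the largest cost of a request occurring in that round.

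For the value $5(d-1)+7$ I would first equip Player~$1$ with a memory structure whose state set is essentially $\set{0,1}^d$ (plus a bounded number of control states that fold in without exceeding $2^d$ states): it records the bits $s_1,\dots,s_d$ revealed in the first phase, and the next-move function then plays $g=s$ in the second phase and follows the forced path thereafter. By~(i) every consistent play has cost exactly $5(d-1)+7$, so $\cost(\tau)=5(d-1)+7$. A separate routine argument, in the spirit of the cost accounting in Lemma~\ref{lemma_pspacehardness}, exhibits a Player~$0$ strategy that bounds the cost of every play by $5(d-1)+7$; this is what makes the stated value exact rather than merely a lower bound.

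For the memory lower bound, let $\tau$ be any strategy for Player~$1$ implemented by a memory structure $(M,m_\initmark,\update)$ with $\card{M}<2^d$; I would produce a single play consistent with $\tau$ of cost strictly below $5(d-1)+7$, contradicting $\cost(\tau)\ge 5(d-1)+7$ since $\cost(\tau)=\inf_\rho\cost(\rho)$. For $s\in\set{0,1}^d$ write $\req(s)$ for the first-phase play prefix in which Player~$0$ commits to $s$. Player~$0$'s counter-strategy simulates the memory of $\tau$: whenever a round begins with $\tau$ in some memory state $m$, the map $s\mapsto\update^+(m,\req(s))$ sends the $2^d$ secrets into fewer than $2^d$ states, so two distinct secrets $s\ne s'$ reach the same state; from that state $\tau$'s entire second-phase behaviour, and in particular its guess $g$, is identical on the two runs (the second phase starts at a fixed vertex and the checking phase is choice-free). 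As $g$ can coincide with at most one of $s$ and $s'$, Player~$0$ commits to the one it misses, and by~(ii) every request of that round is answered with cost below $5(d-1)+7$. Doing this round after round yields one play consistent with $\tau$ on which every round is cheap; since only finitely many cost values occur, its cost is strictly below $5(d-1)+7$. Hence $\cost(\tau)<5(d-1)+7$, so every strategy achieving $\cost(\tau)\ge 5(d-1)+7$ needs at least $2^d$ memory states, which is exponential, in line with the exponential upper bound of Corollary~\ref{cor:parity:memory:upper-bound}.

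The main obstacle I anticipate is designing the checking phase so that the dichotomy between~(i) and~(ii) is sharp in \emph{both} directions simultaneously: a perfect match must drive \emph{all} $d$ requests to cost \emph{precisely} $5(d-1)+7$ — so that the claimed value is genuinely attained and optimal for Player~$1$ — while a \emph{single} mismatch in \emph{any} coordinate must, automatically since this phase has no choices, keep \emph{every} open request strictly below the bound, in particular routing around the coordinates on which $g$ and $s$ happen to agree. Choosing the gadget lengths and colors so that the three contributions to a request's cost — leaving its first-phase gadget, traversing the intermediate gadgets, and being answered in the checking phase — sum to exactly $5(d-1)+7$ when $g=s$, and verifying that the short-circuit triggered by the first mismatch really bounds every request away from $5(d-1)+7$, is the delicate part; it parallels the cost bookkeeping in Theorem~\ref{thm:memory-upper-p0} and Lemma~\ref{lemma_pspacehardness}, only with the two players' roles exchanged.
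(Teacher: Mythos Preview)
Your proposal has a genuine gap at exactly the point you flag as the main obstacle: property~(ii) cannot be realised by a choice-free checking phase in an arena of size~$\bigo(d)$. After Player~$1$'s $d$ gadgets, the only information carried by the play is the current vertex (one of $\bigo(d)$ many) and the set of open requests. The open requests encode~$s$; but Player~$1$'s choices~$g$ are not recorded anywhere the arena can read---his $j$-th choice can at best answer or fail to answer one of the two colors $c_j^0,c_j^1$, which affects only the cost of request~$j$. A choice-free checking phase is then just a fixed path visiting a fixed sequence of colors, so it answers request~$j$ at some fixed time independently of whether $g_k=s_k$ for $k\ne j$. Hence a mismatch at coordinate~$j$ lowers the cost of request~$j$ but leaves the costs of the other requests untouched; a single mismatch cannot short-circuit \emph{every} request. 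Consequently, if Player~$1$ with small memory happens to guess $d-1$ coordinates correctly, the resulting play still has cost $5(d-1)+7$, and your pigeonhole argument does not produce a cheap play.

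The paper's construction sidesteps this by making Player~$1$'s task local rather than global: after Player~$0$ reveals~$s$, she additionally picks a \emph{single} index~$j$ and moves into Player~$1$'s sole gadget~$G^1_j$, where he makes \emph{one} binary choice. The cost of request~$j$ is $5(d-1)+7$ if his choice matches~$s_j$ and $5(d-1)+6$ otherwise, while all other requests are answered with cost at most $5(d-1)+5$ regardless. Thus the cost of the round is decided entirely at the coordinate Player~$0$ selects. The pigeonhole argument then works: two secrets $s\ne s'$ reaching the same memory state differ at some~$j$; Player~$0$ picks that~$j$, Player~$1$'s (identical) response is wrong for one of them, and that round is cheap. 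This ``Player~$0$ chooses which coordinate to test'' mechanism is the key idea missing from your proposal.
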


\begin{proof}
	Fix some~$d \geq 1$.
	Similarly to the previous proof, we construct an arena~$\arena$ using two kinds of gadgets, one for each player, each of which is repeated~$d$ times.
	In~$\arena$, first Player~$0$ opens~$d$ requests and subsequently picks one of these requests to be answered.
	If Player~$1$ recalls the requests, then he can delay the answer to any chosen request up to~$5(d-1) + 7$ steps.
	Otherwise, Player~$0$ can find a sequence of requests that is answered optimally.
	
	We show the gadgets in Figure~\ref{fig:memory-upper-bound:player-1:gadgets} together with their coloring, and call them~$G^0_j$ and~$G^1_j$ for the $j$-th gadget of Player~$0$ and Player~$1$, respectively.
	The gadget~$G^0_j$ contains the colors~$0$,~$4j-3$, and~$4j-1$, while the gadget~$G^1_j$ contains the colors~$0$,~$4j-2$, and~$4j$.
	The complete arena~$\arena$ is shown in Figure~\ref{fig:memory-upper-bound:player-1:game}.
	We fix the initial vertex of $G^0_1$ to be the initial vertex~$v_\initmark$ of~$\game_d$.
	Clearly,~$\arena$ satisfies the first item of the theorem.
	
	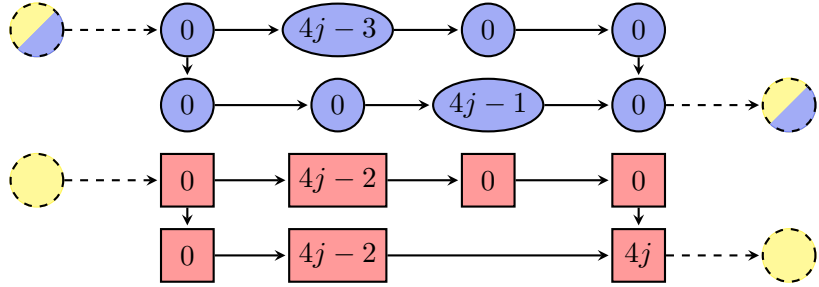
\begin{figure} \centering
		\begin{tikzpicture}[thick,xscale=2]
			\begin{scope}
				\node[p0,draw=mydarkyellow,check] (p0-entry) at (-1,0) {};
				\begin{scope}
					\clip ($(p0-entry) + (-1,-2)$) -- ($(p0-entry) + (-1,2)$) -- ($(p0-entry) + (1,2)$) -- cycle;
					\node[p0,draw=mydarkyellow,check] at (p0-entry) {};
					\node[p0,dashed,check] at (p0-entry) {};
				\end{scope}
				\begin{scope}
					\clip ($(p0-entry) + (-1,-2)$) -- ($(p0-entry) + (1,-2)$) -- ($(p0-entry) + (1,2)$) -- cycle;
					\node[p0,draw=mydarkblue,assign] at (p0-entry) {};
					\node[p0,dashed,assign] at (p0-entry) {};
				\end{scope}
			
				\node[p0,assign] (p0-0-2) at (0,0) {$0$};
				\node[p0,assign,ellipse, inner sep =0] (p0-1-2) at (1,0) {$4j-3$};
				\node[p0,assign] (p0-2-2) at (2,0) {$0$};
				\node[p0,assign] (p0-3-2) at (3,0) {$0$};
				\node[p0,assign] (p0-0-1) at (0,-1) {$0$};
				\node[p0,assign] (p0-0-0) at (1,-1) {$0$};
				\node[p0,assign,ellipse, inner sep =0] (p0-1-0) at (2,-1) {$4j-1$};
				\node[p0,assign] (p0-3-0) at (3,-1) {$0$};
				
				\path
					(p0-entry) edge[dashed] (p0-0-2)
					(p0-0-2) edge (p0-1-2) edge (p0-0-1)
					(p0-1-2) edge (p0-2-2)
					(p0-2-2) edge (p0-3-2)
					(p0-3-2) edge (p0-3-0)
					(p0-0-1) edge (p0-0-0)
					(p0-0-0) edge (p0-1-0)
					(p0-1-0) edge (p0-3-0);
					
				\node[p0,draw=mydarkyellow,check] (p0-exit) at (4,-1) {};
				\begin{scope}
					\clip ($(p0-exit) + (-1,-2)$) -- ($(p0-exit) + (-1,2)$) -- ($(p0-exit) + (1,2)$) -- cycle;
					\node[p0,draw=mydarkyellow,check] at (p0-exit) {};
					\node[p0,dashed,check] at (p0-exit) {};
				\end{scope}
				\begin{scope}
					\clip ($(p0-exit) + (-1,-2)$) -- ($(p0-exit) + (1,-2)$) -- ($(p0-exit) + (1,2)$) -- cycle;
					\node[p0,draw=mydarkblue,assign] at (p0-exit) {};
					\node[p0,dashed,assign] at (p0-exit) {};
				\end{scope}
				
				\path (p0-3-0) edge[dashed] (p0-exit);
			\end{scope}
			
			\begin{scope}[shift={(0,-2)}]		
				\node[p0,draw=mydarkyellow,check] (p1-entry) at (-1,0) {};
				\node[p0,check,dashed] at (p1-entry) {};
				
				\node[p1,choice] (p1-0-1) at (0,0) {$0$};
				\node[p1,choice] (p1-1-1) at (1,0) {$4j-2$};
				\node[p1,choice] (p1-2-1) at (2,0) {$0$};
				\node[p1,choice] (p1-3-1) at (3,0) {$0$};
				\node[p1,choice] (p1-0-0) at (0,-1) {$0$};
				\node[p1,choice] (p1-1-0) at (1,-1) {$4j-2$};
				\node[p1,choice] (p1-3-0) at (3,-1) {$4j$};
				
				\path
					(p1-entry) edge[dashed] (p1-0-1)
					(p1-0-1) edge (p1-1-1) edge (p1-0-0)
					(p1-0-0) edge (p1-1-0)
					(p1-1-1) edge (p1-2-1)
					(p1-2-1) edge (p1-3-1)
					(p1-3-1) edge (p1-3-0)
					(p1-1-0) edge (p1-3-0);
					
				\node[p0,draw=mydarkyellow,check] (p1-exit) at (4,-1) {};
				\node[p0,check,dashed] at (p1-exit) {};
				
				\path (p1-3-0) edge[dashed] (p1-exit);
			
			\end{scope}
		\end{tikzpicture}
		\caption{The gadgets $G^0_j$ (above) and $G^1_j$ (below).}
		\label{fig:memory-upper-bound:player-1:gadgets}
	\end{figure}
	
	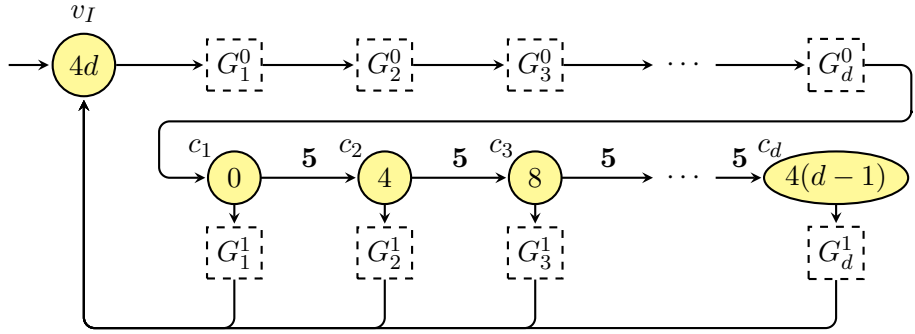
\begin{figure} \centering
		\begin{tikzpicture}[thick,xscale=2]
		
			\node[p0,check,label=north:$v_\initmark$] (reset) at (-1,0) {$4d$};
			\path ($(reset) - (.5,0)$) edge (reset);
			
			\begin{scope}[shift={(0,0)}]
				
				\node[draw,dashed] (G0-1-3) at (0,0) {$G^0_1$};
				\node[draw,dashed] (G0-5-7) at (1,0) {$G^0_2$};
				\node[draw,dashed] (G0-9-11) at (2,0) {$G^0_3$};
				\node (G0-dots) at (3,0) {$\cdots$};
				\node[draw,dashed] (G0-2d-3-2d-1) at (4,0) {$G^0_d$};
				
				\path[draw,rounded corners,->]
					(reset) edge (G0-1-3)
					(G0-1-3) edge (G0-5-7)
					(G0-5-7) edge (G0-9-11)
					(G0-9-11) edge (G0-dots)
					(G0-dots) edge (G0-2d-3-2d-1);

			\end{scope}
			
			\begin{scope}[shift={(0,-1.5)}]
				\node[p0,check,outer sep=0] (choice-0) at (0,0) {$0$};
				\node[anchor=south east,inner sep=.5pt] at (choice-0.north west) {$c_1$};
				
				\node[draw,dashed] (G1-2-4) at (0,-1) {$G^1_1$};
				
				\path
					(choice-0) edge (G1-2-4);
					
				\path[draw,->,rounded corners]
					(G1-2-4) -- (0,-2) -| (reset);
				
			\end{scope}
			
			\begin{scope}[shift={(1,-1.5)}]
				\node[p0,check,outer sep=0] (choice-1) at (0,0) {$4$};
				\node[anchor=south east,inner sep=.5pt] at (choice-1.north west) {$c_2$};
				
				\node[draw,dashed] (G1-4-8) at (0,-1) {$G^1_2$};
				
				\path
					(choice-1) edge (G1-4-8);
					
				\path[draw,->,rounded corners]
					(G1-4-8) -- (0,-2) -| (reset);
			\end{scope}
			
			\path (choice-0) edge node[anchor=south] {$\mathbf{5}$} (choice-1);
			
			\begin{scope}[shift={(2,-1.5)}]
				\node[p0,check,outer sep=0] (choice-2) at (0,0) {$8$};
				\node[anchor=south east,inner sep=.5pt] at (choice-2.north west) {$c_3$};
				
				\node[draw,dashed] (G1-10-12) at (0,-1) {$G^1_3$};
				
				\path
					(choice-2) edge (G1-10-12);
					
				\path[draw,->,rounded corners]
					(G1-10-12) -- (0,-2) -| (reset);
			\end{scope}
			
			\path (choice-1) edge node[anchor=south] {$\mathbf{5}$} (choice-2);
			\node (choice-dots) at (3,-1.5) {$\cdots$};
			
			\path
				(choice-2) edge node[anchor=south] {$\mathbf{5}$} (choice-dots);
			
			\begin{scope}[shift={(4,-1.5)}]
				\node[p0,check,outer sep=0,ellipse,inner sep=0] (choice-d) at (0,0) {$4(d-1)$};
				\node[anchor=south east,inner sep=.5pt] at (choice-d.north west) {$c_d$};
				
				\node[draw,dashed] (G1-4d-2-4d) at (0,-1) {$G^1_d$};
				
				\path
					(choice-d) edge (G1-4d-2-4d);
					
				\path[draw,->,rounded corners]
					(G1-4d-2-4d) -- (0,-2) -| (reset);
			\end{scope}
			
			\path (choice-dots) edge node[anchor=south] {$\mathbf{5}$} (choice-d);
			
			\path[draw,rounded corners,->] (G0-2d-3-2d-1) -| (4.5,-.5) |- (-.5,-.75) |- (choice-0);
		\end{tikzpicture}
		\caption{The arena~$\arena$ of~$\game_d$ showing necessity of exponential memory for Player~$1$.}
		\label{fig:memory-upper-bound:player-1:game}
	\end{figure}

The game is played in rounds. 
Each such round starts and ends in the initial vertex~$v_\initmark$ that answers every request. 
Thus, it suffices to analyze a single round of the game:
In each round, Player~$0$ starts by posing~$d$ requests.
	With her~$j$-th request, she may either request the color~$4j-3$ and take three steps before leaving~$G^0_j$, or she may request the color~$4j-1$ and take a single step before leaving~$G^0_j$.
	After posing~$d$ requests, Player~$0$ then moves to some~$G^1_j$ for~$1 \leq j \leq d$ while answering all requests for colors~$c \leq 4(j-1)$.
	In~$G^1_j$, Player~$1$ answers the request posed in~$G^0_j$.
	After he has done so, all requests are reset and the next round begins.
	All requests posed in~$G^0_{j'}$ for $j' < j$ are answered before entering $G^1_j$ due to the structure of the arena.
	Similarly, all requests posed in~$G^0_{j'}$ for $j' > j$ are answered immediately after leaving~$G^1_j$. 
	We show that by remembering all requests, which takes $2^d$ memory states, Player~$1$ can ensure that one request is only answered with cost~$5(d-1)+7$.
	If he uses less memory states, we show that Player~$0$ is able to answer every request with cost~$5(d-1)+6$, i.e., that strategy has cost at most~$5(d-1) + 6$.

	First, consider the strategy~$\tau$ for Player~$1$ that is defined as follows:
	During Player~$0$'s part of the round, Player~$1$ stores the requests that she makes using $2^d$ memory states.
	Assume Player~$0$ then moves to~$G^1_j$.
	If Player~$0$ requested color~$4j-3$ during her part of the round, Player~$1$ picks the upper branch shown in Figure~\ref{fig:memory-upper-bound:player-1:gadgets}, while he chooses the lower branch in case Player~$0$ requested color~$4j-1$.

	It remains to argue that this strategy indeed enforces a cost at least $5(d-1)+7$.
	Consider a play prefix that starts in~$v_\initmark$ and ends in~$c_j$ for some $1 \leq j \leq d$.
	At this point, the request with the highest cost incurred so far is either a request for color~$4j-3$ with cost~$5(d-1)+4$, or a request for color~$4j-1$ with cost~$5(d-1)+2$.
	All requests for colors~$c < 4j-3$ have already been answered due to the structure of the arena.
	By coloring the vertices resulting from the subdivision of the edges labeled with cost~$\mathbf 5$ with the color of the target of these edges, these requests incur cost at most~$5(d-1) + 5$.
	Moreover, all requests for colors~$c > 4j - 1$ have incurred a cost of at most~$5(d-1) - 1$.
	Now, assume that Player~$0$ enters $G^1_j$.
	First, consider the case with an open request for color~$4j-3$.
	Then Player~$1$ moves through the lower branch of his gadget, answering this request with cost~$5(d-1)+7$.
	If there is, however, an open request for color~$4j-1$, then Player~$1$ moves through the upper branch of his gadget, answering the open request with cost~$5(d-1)+7$ as well.
	Thus, the strategy~$\tau$ has cost~$5(d-1) + 7$.
	All other requests are answered immediately after leaving the gadget at vertex~$v_\initmark$ with a cost of at most $5(d-1) + 5$.
	
	We now show that Player~$1$ indeed needs at least~$2^d$ many memory states to enforce a cost of at least~$5(d-1)+7$.
	Towards a contradiction assume that Player~$1$ has a finite-state strategy~$\tau$ with~$\cost(\tau) \geq 5(d-1)+7$ that is implemented by a memory structure~$(M, m_\initmark, \update)$, where $\card{M} < 2^d$.
	We inductively construct a play~$\rho$ consistent with $\tau$ such that $\cost(\rho) \leq 5(d-1)+6$.
	Assume we have already defined a prefix~$\pi$ of $\rho$ ending in the initial vertex~$v_\initmark$.
	We determine a sequence of~$d$ requests and a choice of~$1 \leq j \leq d$ and prolong $\pi$ by letting Player~$0$ first pick the sequence of requests and then move into some gadget~$G^1_j$.
	Then, Player~$1$ applies his strategy, which leads back to the initial vertex. 
	
	To this end, let~$m = \update^+(m_\initmark, \pi)$.
	Since $\card{M} < 2^d$, and since there exist~$2^d$ play infixes leading from the unique successor of~$v_\initmark$ to~$c_1$, there exist two such infixes~$\pi_1$ and~$\pi_2$, such that $\update^+(m, \pi_1) = \update^+(m, \pi_2)$.
	Let~$j$ be minimal such that the choices made in~$G^0_j$ by Player~$0$ differ in~$\pi_1$ and~$\pi_2$, and w.l.o.g. assume that Player~$0$ poses a request for color~$4j-3$ when playing~$\pi_1$, while she poses a request for color~$4j-1$ when playing~$\pi_2$.
	
	Now consider the response of Player~$1$ consistent with~$\tau$ if Player~$0$ moves to~$G^1_j$ after the play prefix~$\pi \pi_1$ and note that this response is the same as the one to the play prefix~$\pi \pi_2$ due to~$\update^+(m, \pi_1) = \update^+(m, \pi_2)$.
	If Player~$1$ traverses the upper branch of~$G^1_j$ after witnessing~$\pi \pi_1$ or~$\pi \pi_2$, then he answers the request for~$4j-3$ posed during the traversal of~$\pi_1$ with cost~$5(d-1) + 6$.
	If he, however, traverses the lower branch of~$G^1_j$ after witnessing~$\pi \pi_1$ or~$\pi \pi_2$, then he answers the request for~$4j-1$ posed during~$\pi_2$ with cost~$5(d-1) + 6$.
	In the former case, we continue~$\pi$ by letting Player~$0$ play according to~$\pi_1$, while in the latter case we continue~$\pi$ by letting her play according to~$\pi_2$.
	In either case, we move to~$G^1_j$ afterwards.
	In~$G^1_j$, Player~$1$ plays consistently with~$\tau$.

	In both cases all requests posed in gadgets~$G^0_{j'}$ for~$j' < j$ are answered after at most~$5(d-1)+5$ steps, namely upon reaching the first vertex of the subdivision of the edge leading to the vertex~$c_{j' + 1}$.
	Also, the request posed in~$G^0_j$ is answered after~$5(d-1) + 6$ steps.
	Finally, all requests posed in gadgets~$G^0_{j'}$ for~$j' > j$ are answered after at most $5(d+j-j'-1) + 10 \leq 5(d-1) + 5$ steps upon visiting the vertex~$v_\initmark$.
	
	Since all requests are reset when reaching~$v_\initmark$, and since the reasoning above holds true for any memory state~$m$ reached at the end of any~$\pi$ as above, the play~$\rho$ resulting from an inductive application of this construction has $\cost(\rho) \leq 5(d-1) + 6$.
	Since~$\rho$ is consistent with $\tau$, this contradicts~$\cost(\tau) \geq 5(d-1) + 7$ and concludes the proof of the theorem.
\end{proof}

Note that Player~$1$ does not win the games~$\game_d$ with regards to the classical finitary parity condition, i.e., he cannot unbound the cost of an open request arbitrarily.

\section{Tradeoffs Between Memory and Cost}\label{sec:tradeoffs}

In the previous section, we have shown that an optimal strategy for either player in a parity game with costs requires exponential memory in general.
In contrast, winning strategies of minimal size for Player~$0$ in parity games with costs are known to be positional~\cite{FZ14}, while winning strategies for Player~$1$ require infinite memory already in the case of finitary parity games in order to violate every bound infinitely often.
Here we show that, in general, there exists a gradual tradeoff between the size and the cost of a strategy for both players.
For Player~$0$, this means that she can choose to lower the guaranteed bound~$b$ by using a larger winning strategy.
Dually, Player~$1$ can reduce the amount of memory he has to use by not violating every bound, but only a fixed bound~$b$.

\begin{thm}
\label{thm:tradeoffs:p0}
  
Fix some $d \geq 1$ and let the game $\game_d$ be as defined in the proof of Theorem~\ref{thm:memory-upper-p0}. For every $j$ with~$1 \leq j \leq d$ there exists a strategy $\sigma_j$ for Player~$0$ in $\game_d$ such that
\begin{itemize}
	
	\item $d^2 + 3d - 1 = \cost(\sigma_1) > \cost(\sigma_2) > \cdots > \cost(\sigma_d) = d^2 + 2d$, and
	
	\item $1 = \size{\sigma_1} < \size{\sigma_2} < \cdots < \size{\sigma_d} = 2^{d-1}$.
 
\end{itemize}
Also, for every strategy $\sigma'$ for Player~$0$ in $\game_d$ with $\cost(\sigma') \leq \cost(\sigma_j)$ we have $\card{\sigma'} \geq \card{\sigma_j}$.
\end{thm}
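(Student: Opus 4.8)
I would first exhibit the strategies $\sigma_1,\dots,\sigma_d$ explicitly together with their costs and sizes (which immediately gives the two chains), and then prove the minimality statement by a counting argument refining the lower bound of Theorem~\ref{thm:memory-upper-p0}. For the endpoints there is nothing new: $\sigma_d$ is the optimal strategy built in the proof of Theorem~\ref{thm:memory-upper-p0} (so $\card{\sigma_d}=2^{d-1}$, $\cost(\sigma_d)=d^2+2d$), and $\sigma_1$ is the positional strategy that plays color $2d$ in the first gadget of Player~0 and anything elsewhere; splitting a play into ``rest of Player~1's first gadget'', ``Player~1's remaining $d-1$ gadgets'' and ``prefix of Player~0's first gadget up to the chosen color'', one checks that $\sigma_1$'s worst case is Player~1 requesting color~$1$ in his first gadget, answered by color $2d$ at cost $d+(d-1)(d+2)+(d+1)=d^2+3d-1$, and that no positional strategy beats this, so $\cost(\sigma_1)=d^2+3d-1$.

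\textbf{The intermediate strategies.} For $1<j<d$ I would define $\sigma_j$ by a memory structure of size $2^{j-1}$ that coarsens the one of $\sigma_d$: it tracks the order of the relevant requests lying in a suitable window of $j-1$ colors finely enough to answer all such requests optimally, and answers every request outside the window uniformly with the fixed even color $2d-2j+2$ in Player~0's first gadget, rather than the crude safe color $2d$. Repeating the case analysis of Theorem~\ref{thm:memory-upper-p0}, the in-window requests are still answered with cost $d^2+2d$, while the most expensive out-of-window request --- color~$1$ posed in Player~1's first gadget, answered by color $2d-2j+2$ in Player~0's first gadget --- incurs cost exactly $d+(d-1)(d+2)+(d-j+2)=d^2+3d-j$. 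Hence $\cost(\sigma_j)=d^2+3d-j$ and $\card{\sigma_j}=2^{j-1}$, and since $d^2+3d-j$ is strictly decreasing and $2^{j-1}$ strictly increasing in $j$ and both meet the stated endpoints, the two chains follow.

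\textbf{Minimality.} By contraposition it suffices to show that any strategy $\sigma'$ for Player~0 with $\card{\sigma'}<2^{j-1}$ has $\cost(\sigma')>d^2+3d-j$; the case $j=1$ is vacuous and $j=d$ is exactly the bound of Theorem~\ref{thm:memory-upper-p0}. Mirroring that proof, I would fix a family of $2^{j-1}$ strictly increasing request sequences, all sharing one common ``large'' tail and differing only on the ``small'' part, and for $m$ in the family let $\req(m)$ be the round in which Player~1 poses exactly these requests. If $\card{\sigma'}<2^{j-1}$ then, from the memory state at the start of a round, two distinct sequences $\req(m'_1)\neq\req(m'_2)$ drive $\sigma'$ into the same state, so Player~0 reacts identically to both; taking the first gadget $j^\ast$ where they disagree (say Player~1 requests $c<c'$ there), the usual case distinction on Player~0's move in her $j^\ast$-th gadget forces a suboptimal answer against one of the two sequences. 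The family must be engineered so that this forced deficit over the optimum $d^2+2d$ is more than $d-j$ --- exploiting that the disagreeing colors can be placed early inside their gadgets while Player~0's enforced response sits late inside hers, or that one response is pushed a whole gadget late --- which yields cost $>d^2+3d-j$, and since this recurs in every round, $\cost(\sigma')>d^2+3d-j$.

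\textbf{Main obstacle.} The delicate part is entirely quantitative: both the equality $\cost(\sigma_j)=d^2+3d-j$ and the minimality bound rely on counting the edges between a request and its answer to within an additive constant, and the test family in the minimality argument has to be designed so that \emph{every} possible confusion costs strictly more than $d^2+3d-j$ --- matching the memory threshold $2^{j-1}$ exactly --- rather than merely more than $d^2+2d$. Choosing the window of ``small'' versus ``large'' colors so that it simultaneously realizes the $2^{j-1}$ bound on $\card{\sigma_j}$ and the precise cost gap $d-j$ is where the construction has to be carried out carefully.
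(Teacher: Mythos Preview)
Your construction of the intermediate strategies differs from the paper's and contains a concrete gap. The paper obtains $\sigma_j$ by restricting the memory of the optimal strategy $\sigma_d$: instead of the full set $M_d=\incseq_d$, it uses $M_d^j=\incseq_d\cap\{(c_1,\dots,c_{j-1},2d-1,\dots,2d-1)\}$, i.e., Player~$0$ stores only the first $j-1$ relevant requests and pads the remaining slots with $2d-1$; she then plays $c_k+1$ in her $k$-th gadget as before. The worst case is Player~$1$ posing $1,3,\dots,2j-3,2j-1,2j-1,\dots,2j-1$: the memory never leaves its initial state, Player~$0$ answers with $2,4,\dots,2j-2,2d,\dots,2d$, and the request for $2j-1$ in gadget~$j$ is answered by $2d$ in her gadget~$j$ at cost $(d-j+2)+(d-1)(d+2)+d=d^2+3d-j$. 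Note that $\lvert M_d^j\rvert=\sum_{k=0}^{j-1}\binom{d-1}{k}$, not $2^{j-1}$; the theorem only asserts the endpoints $1$ and $2^{d-1}$ and strict monotonicity in between.

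Your $\sigma_j$, by contrast, commits to the \emph{fixed} colour $2d-2j+2$ in Player~$0$'s first gadget. This already breaks: if Player~$1$ requests $2d-1$ in his first gadget (a perfectly legal move), then $2d-2j+2<2d$ does not answer it, and the earliest possible answer is $2d$ in Player~$0$'s second gadget, at cost $2+d(d+2)+d=d^2+3d+2>d^2+3d-j$ for every $j\ge 1$. In this arena the gadget index and the requested colour are tightly coupled --- a request for $c$ in gadget~$k$ must be answered in Player~$0$'s gadget~$k$ by some $e\le c+1+2(d-j)$ to stay within the budget --- so a strategy that fixes a small colour in gadget~$1$ regardless of what was requested there cannot achieve cost $d^2+3d-j$. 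The paper's truncation avoids this because it still records the colour at each of the first $j-1$ positions and answers accordingly; your ``window of colours'' idea decouples colour from position and thereby loses exactly the information needed. The same issue infects your worst-case analysis (the ``out-of-window'' scenario you price is not the worst one) and your minimality family, whose size $2^{j-1}$ does not match the memory set actually required.
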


\begin{proof}
	Recall that we defined the set of strictly increasing odd sequences $\incseq_d$ in the proof of Theorem~\ref{thm:memory-upper-p0} and showed that a memory structure using $\incseq_d$ as memory states implements an optimal strategy with cost $d^2 + 2d$.
	Intuitively, such a strategy stores up to $d-1$ requests made by Player~$1$ in his part of each round, as the final element of each increasing sequence is fixed to be~$2d-1$.
	The idea behind the construction of the strategies~$\sigma_j$ is to restrict the memory of Player~$0$ such that she can only store up to $j-1$ requests.
	In the extremal cases of $j=1$ and $j=d$ this implements a positional strategy and the strategy from the proof of Theorem~\ref{thm:memory-upper-p0}, respectively.
	
	We implement $\sigma_j$ by again using strictly increasing odd sequences, where we restrict the maximal number of entries that differ from the maximal value of~$2d-1$.
	Hence, in strategy $\sigma_j$, Player~$0$ stores at most $j-1$ requests.
	
	To this end, we define the length-restricted set of strictly increasing odd sequences 
	\[ \incseq_d^j = \incseq_d \cap \set{s = (c_1,\dots,c_{j-1},2d-1,\dots,2d-1) \mid s \in \nats^d } \]
	and pick $M_d^j = \incseq_d^j$.
	Note that $M_d^d = M_d$ as defined in the proof of Theorem~\ref{thm:memory-upper-p0} and that~$M_d^1$ is a singleton set.
	Clearly, the second claim of the theorem holds true, since $\incseq^{j-1}_d \subsetneq \incseq^j_d$ for each $d \geq 1$ and each~$j$ with~$1 \leq j \leq d$.
	The initial memory state is $(1,3,\dots,2j-3,2d-1,\dots,2d-1)$, the update function only stores the first~$j$ relevant requests, and the next-move function~$\nxt_j$ for Player~$0$ is the same as that from the proof of Theorem~\ref{thm:memory-upper-p0} in order to obtain the memory structure $\mem_j$ implementing $\sigma_j$ via~$\nxt_j$.
		
	It remains to show that each strategy $\sigma_j$ realizes a cost of $d^2 + 3d - j$ and that it is minimal for its respective cost.
	To this end, we fix some $j$ with $1 \leq j \leq d$ for the remainder of this proof.
	First, we show that Player~$1$ can enforce a cost of $d^2 + 3d - j$ if Player~$0$ plays consistently with $\sigma_j$.
	Intuitively, Player~$1$ fills the memory of Player~$0$ as quickly as possible, and requests the minimal color that has not yet been requested afterwards.
	Thus, he maximizes the gap between the smallest unstored request and the \myquot{default} answer of~$2d$.

	More precisely, in each turn Player~$1$ requests the colors $1,3,\dots,2j-3,2j-1,2j-1,\dots,2j-1$.
	Playing consistently with~$\sigma_j$, Player~$0$ answers these requests with $2, 4, \dots, 2j-2, 2d, 2d, \dots, 2d$.
	Hence, the cost of the resulting play is that incurred by answering a request for $2j-1$ in the $j$-th gadget of Player~$1$ with~$2d$ in the $j$-th gadget of Player~$0$.
	As argued in the proof of Theorem~\ref{thm:memory-upper-p0}, the cost incurred by such a request-response-pair amounts to 
	\[\left[d - j + 2\right] + \left[(d-1)(d+2)\right] + d = d^2 + 3d - j.\]
	As the game restarts after Player~$0$'s turn, Player~$1$ can enforce this cost infinitely often.
	Hence, $\cost(\sigma_j) \geq d^2 + 3d - j$.

	This sequence of requests is indeed optimal for Player~$1$, i.e., he cannot enforce a higher cost.
	Assume that Player~$1$ does not pose requests as specified above, but poses the requests $c_1, \dots, c_d$.
	Then either there exist some $k$ and $k'$ with $k < k' \leq j$, such that $c_k \geq c_{k'}$, or there exists a $k \leq j$ with $2j-1 < c_k \leq 2d-1$.
	
	In the former case, let~$k'$ be minimal such that such a~$k$ exists.
	Player~$0$ answers the first~$k' - 1$ requests optimally before answering all remaining requests with costs at most $(d-1)(d+2)$, as she ignores the request for $c_{k'}$.
	In the latter case, Player~$0$ again answers all requests up to the first request as described above optimally.
	Afterwards, she answers all succeeding requests with cost at most $d^2 + 2d + (2d - 1 - c_k) / 2 \leq d^2 + 3d - j$.
	Hence, there exists no play $\rho$ consistent with $\sigma_j$ and $\cost(\rho) > d^2 + 3d - j$.
		
	To conclude the proof, we observe that there exists no strategy $\sigma'$ with $\card{\sigma'} < \card{\sigma_j}$ and $\cost(\sigma') \leq \cost(\sigma_j)$.
	The argument is nearly identical to the argument of minimality of the strategy constructed in the proof of Theorem~\ref{thm:memory-upper-p0} and can in fact be obtained by replacing all occurrences of $2^{d-1}$ and $d^2+2d$ by $\card{\sigma_j}$ and $d^2 + 3d - j$, respectively.
	Hence, the strategies~$\sigma_j$ are minimal for their respective cost.
	\end{proof}
	
The similar result for Player~$1$ has a much simpler proof.
Again, it suffices to reuse the games from the lower bound.

\begin{thm}
For each~$d \geq 1$ there exists a finitary parity game~$\game_d$ with~$\card{\game_d} \in \bigo(d^2)$, such that for every~$j$ with $1 \leq j \leq d$, there exists a strategy~$\tau_j$ for Player~$1$ in~$\game_d$ such that
	\begin{itemize}
		\item $7 = \cost(\tau_1) < \cost(\tau_2) < \cdots < \cost(\tau_d) = 5(d-1)+7$, and
		\item $2 = \card{\tau_1} < \card{\tau_2} < \cdots < \card{\tau_d} = 2^d$.
	\end{itemize}
	Moreover, for every strategy~$\tau'$ with~$\cost(\tau') \geq \cost(\tau_j)$, we have~$\card{\tau'} \geq \card{\tau_j}$.
\end{thm}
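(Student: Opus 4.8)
The plan is to follow the template of Theorem~\ref{thm:tradeoffs:p0}: take a game in which Player~$1$ needs exponential memory to be optimal and exhibit inside it a full ladder of strategies of strictly increasing size and strictly increasing cost, each size-optimal for its cost. However, the game $\game_d$ from the proof of Theorem~\ref{thm:memory-upper-p1} does not directly admit such a ladder, since there the single \myquot{critical} request always accumulates cost roughly $5(d-1)$ before it can be answered, so \emph{every} Player~$1$ strategy has cost at least $5(d-1)+6$; to obtain the low end $\cost(\tau_1)=7$ one first has to adapt the construction. I would build $\game_d$ out of $d$ pairwise independent components $H_1,\dots,H_d$, where $H_i$ is an $\bigo(i)$-sized variant of the Theorem~\ref{thm:memory-upper-p1} gadget in which Player~$0$ poses an $i$-bit request sequence and Player~$1$ may delay one answer up to cost $5(i-1)+7$, \emph{but equipped with a short escape route that Player~$0$ may take the instant Player~$1$ commits to a wrong answer}, so that against any Player~$1$ strategy that does not distinguish all $2^i$ sequences Player~$0$ can keep the cost inside $H_i$ below $7$. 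The components are traversed one after another in every round, all open requests are reset at their boundaries and the play loops, so the cost of a play is the maximum of the costs incurred in the individual components. Summing $\card{H_i}\in\bigo(i)$ gives $\card{\game_d}\in\bigo(d^2)$.

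The strategy $\tau_j$ is then the natural one: Player~$1$ runs the optimal memory structure of Theorem~\ref{thm:memory-upper-p1} inside each of $H_1,\dots,H_j$ and plays positionally inside $H_{j+1},\dots,H_d$. Since at any point only the component currently being traversed has to be remembered, and which component that is is part of the vertex, this needs $\card{\tau_j}=2^j$ memory states, so $2=\card{\tau_1}<\cdots<\card{\tau_d}=2^d$ is immediate. For the cost, along any play consistent with $\tau_j$ the cost inside $H_i$ equals $5(i-1)+7$ for $i\le j$ (as in Theorem~\ref{thm:memory-upper-p1}) and is kept below $7$ by Player~$0$ for $i>j$, while requests are answered cheaply at component boundaries; taking the maximum gives $\cost(\tau_j)=5(j-1)+7$, and Player~$0$ indeed realises this value, whence $7=\cost(\tau_1)<\cdots<\cost(\tau_d)=5(d-1)+7$.

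For the minimality (\myquot{moreover}) part I would run the pigeonhole argument of Theorem~\ref{thm:memory-upper-p1} componentwise. Given $\tau'$ with $\card{\tau'}<2^j$, I construct a play consistent with $\tau'$ of cost strictly below $5(j-1)+7$: for each $i\ge j$ the restriction of $\tau'$ to $H_i$ uses fewer than $2^j\le 2^i$ states, so two of the $2^i$ request sequences reach the same memory state; they differ first in some gadget, and whichever answer Player~$1$ then gives is cheap for one of them, which Player~$0$ plays, keeping the $H_i$-cost below $7$; for $i<j$ the $H_i$-cost is at most $5(i-1)+7\le 5(j-2)+7$ regardless. The maximum over all components is then strictly below $5(j-1)+7$, which is the contrapositive of the claim.

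I expect the main obstacle to be precisely the design of the escape route in $H_i$: it must be unavailable while Player~$1$ is still answering correctly (so that full memory really buys cost $5(i-1)+7$) yet immediately available, and cheap, as soon as Player~$1$ answers one bit wrongly, so that every residual open request can still be discharged within fewer than $7$ steps; verifying this \myquot{all-or-nothing} behaviour, and that resetting requests at component boundaries does not let costs leak between components, is where the work lies. Everything else — the size bound, the definition of $\tau_j$, and the componentwise pigeonhole step — should be routine given the corresponding parts of Theorem~\ref{thm:memory-upper-p1}.
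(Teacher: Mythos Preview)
Your approach is quite different from the paper's, and substantially more intricate. The paper does not build a sequential composition at all: it simply takes the disjoint union of the games $\game'_1,\dots,\game'_d$ from Theorem~\ref{thm:memory-upper-p1} (where $\game'_j$ is that theorem's game with parameter $j$), adds a fresh initial vertex of color~$0$ belonging to Player~$1$, and lets Player~$1$ choose once and for all which $\game'_j$ to enter; the play then stays in that component forever. The strategy $\tau_j$ is just \myquot{enter $\game'_j$ and play optimally there}, giving $\cost(\tau_j)=5(j-1)+7$ and $\card{\tau_j}=2^j$ directly from Theorem~\ref{thm:memory-upper-p1}, and minimality follows because any $\tau'$ commits to a single subgame and can be analysed there. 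The whole argument is a few lines; the key observation you are missing is that since Player~$1$ is the maximiser, one can simply hand \emph{him} the choice of difficulty level rather than forcing the game to visit all levels every round.

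The escape-route design you flag as the main obstacle is indeed problematic, and I do not see how it can be made to work in the form you describe. In the Theorem~\ref{thm:memory-upper-p1} gadget with parameter~$i$, the request that Player~$1$ eventually delays is posed by Player~$0$ during her initial sweep through $G^0_1,\dots,G^0_i$, and by the time Player~$1$ makes \emph{any} move that request has already accumulated cost on the order of $5(i-1)$. An escape route that Player~$0$ may take only \emph{after} Player~$1$ commits to a wrong answer cannot undo this accumulated cost, so the $H_i$-cost cannot drop below roughly $5(i-1)+6$, let alone $7$. To obtain the all-or-nothing behaviour your argument needs, Player~$0$ would have to detect Player~$1$'s lack of memory \emph{before} committing to the expensive requests; that calls for a genuinely interactive gadget with interleaved moves, not a minor variant of the existing one, and it is far from clear that such a gadget of size $\bigo(i)$ exists with the exact cost endpoints you claim. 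The paper's disjoint-union construction sidesteps this difficulty entirely.
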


\begin{proof}
	For each~$j$ with~$1 \leq j \leq d$, let~$\game'_j$ be the game with~$2j$ odd colors defined in Theorem~\ref{thm:memory-upper-p1}.
	We construct~$\game_d$ such that it contains a dedicated initial vertex~$v_\initmark$ of color~$0$, from which Player~$1$ may choose to move to the initial vertex of any of the~$\game'_j$.
	Once the play~$\rho$ has moved into some~$\game'_j$, it never leaves that part of the arena, i.e., the suffix starting at the second position of~$\rho$ is a play of~$\game'_j$.
	
	For each~$j$, the strategy~$\tau_j$ defined in Theorem~\ref{thm:memory-upper-p1}, augmented by a single move from~$v_\initmark$ to the sub-game~$\game'_j$, satisfies the properties above.
	Moreover, as the sub-games~$\game'_j$ are isolated from each other, each strategy~$\tau'$ for Player~$1$ in~$\game_d$ can be trivially transformed into a strategy for him in the subgame~$\game'_j$ that~$\tau'$ chooses at the beginning of~$\game_d$.
	Hence, every strategy~$\tau_j$ is of minimal size for the cost that it realizes.
\end{proof}

\section{Parity Games with Integer Costs}
\label{sec:concrete}

In this section, we extend the model of parity games with costs to non-negative integer costs.
Recall that the cost function~$\cost\colon E \rightarrow \set{\eps, \inc}$ assigns to each edge of~$\game$ either no cost, or a unit cost.
We now consider cost functions that assign to each edge a natural number.
Formally, let~$\arena$ and~$\col$ be an arena and a coloring as defined previously and let~$\cost\colon E \rightarrow \nats$ be a cost function.
We call~$\game=(\arena, \bincp(\col, \cost))$ a parity game with (non-negative) integer-valued costs.
Now the cost of a play infix is the sum of the costs of its edges and the winning condition~$\bincp(\col, \cost)$ is defined analogously to the case of abstract costs.
We denote the highest cost assigned by~$\cost$ by~$W$ if~$\cost$ is clear from the context.
Since we assume~$\cost$ to be encoded in binary encoding,~$W$ may be of exponential size in the number of bits required to describe~$\cost$.
To distinguish these games from those we considered previously, we call a game~$\game=(\arena, \cp(\col, \cost))$ with~$\cost\colon E \rightarrow \set{\eps, \inc}$ a parity game with abstract costs and omit this qualifier if it is clear which game is meant from the context.

Formally, the size of a parity game with integer-valued costs is defined as $\size{\game} = \size{\arena} + \log W$.
Similarly to the abstract case, we call an edge~$e$ an $\eps$- or an increment-edge if~$\cost(e) = 0$ or~$\cost(e) > 0$, respectively.

We first show that determining the winner in~$\game$ with respect to a given bound~$b$ is as easy as determining the winner in a parity game with abstract costs.
Since the former problem subsumes the latter one, this yields~\pspace-completeness of the former problem via Lemma~\ref{lemma_pspacehardness}.
Afterwards, we briefly discuss the memory requirements of both players as well as the potential tradeoffs present in this setting.
In particular, we show that the tradeoffs between the cost and the size of a strategy are, in general, more pronounced in the case of integer-valued costs.

\subsection{The Complexity of Solving Parity Games with Integer-Valued Costs Optimally}

We first examine the complexity of playing optimally in parity games with integer-valued costs and show that this problem is not harder than the special case of abstract costs.
Afterwards, we argue that this implies an exponential upper bound on the size of optimal strategies for both players.

\begin{thm}
\label{thm:binary:pspace-completeness}
The following problem is $\pspace$-complete: \myquot{Given a parity game with integer-valued costs~$\game$ and a bound~$b \in \nats$ (in binary encoding), does Player~$0$ have a strategy~$\sigma$ for $\game$ with $\cost(\sigma) \le b$?}
\end{thm}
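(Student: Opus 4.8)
The plan is to mirror the two-step structure of the abstract-cost case (Lemma~\ref{lemma_pspacemembership}), but to account for the fact that the optimal bound may now be as large as $nW$, which is exponential in $\size{\game}$, so both the tracked request costs and the lengths of plays in the finite-duration game become exponential. Since the lower bound is inherited verbatim from Lemma~\ref{lemma_pspacehardness} (abstract costs are a special case of integer-valued costs with $W=1$), only the \pspace-membership needs a new argument. First I would fix $\game = (\arena, \bincp(\col,\cost))$ and a bound $b$, and observe that we may again assume $b < nW$: by the integer-valued analogue of Corollary~\ref{corollary_costupperbound}, if $b \ge nW$ then a positional winning strategy already realizes cost at most $nW \le b$, so the problem reduces to solving the game, which is in $\up \cap \coup \subseteq \pspace$ by Theorem~\ref{thm:previouswork}(\ref{thm:previouswork:cost}). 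Then I would build the qualitative parity game $\extgame$ exactly as before, with request functions $r$ mapping odd colors to $\set{\bot}\cup[b+1]$ and an overflow counter bounded by $n$; the update function is the same except that traversing an edge $e$ now adds $\cost(e)$ (rather than $1$) to each open request, and triggers an overflow whenever some request strictly exceeds $b$. Lemma~\ref{lem:cost-parity-to-parity} carries over essentially unchanged, since its proof only used that an overflow witnesses a request remaining open for $b+1$ units of cost — the argument never relied on costs being $0$ or $1$.

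The real work is redoing the finite-duration game. The preorder $\dominatedby$ on memory states, the notions of relevant requests, dominating cycles, and settled prefixes all transfer verbatim, and Lemma~\ref{lem:dominating-memory:stable-concatenation} still holds (its proof is about which requests get reset, not about the magnitude of increments). The obstacle is Lemma~\ref{lem:pspace-mem:unary:unsettled-bound}: the argument bounding the length of an unsettled prefix by $(n+1)^6$ used that between two relevance-reducing positions there can be at most $b+1$ increment-edges — but with integer costs a single increment-edge can now carry a huge cost, and more importantly the polynomial bound in $b$ is no longer polynomial in $\size{\game} = \size{\arena}+\log W$. The footnote after the definition of settled prefixes already hints at the fix: the plan is to introduce a shortcut/skipping criterion. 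Concretely, I would modify the finite-duration game so that when a sequence of edges would push some relevant request past $b$, the play is simply declared settled in favor of Player~$1$ (this is safe because such a sequence is an overflow witness regardless of intermediate structure); and when traversing a simple path in $\arena$ incurs cost that is "large but not overflowing", one can collapse it — since a simple path has at most $n$ edges and $W$ is its only size parameter, the relevant question becomes how many distinct simple-path costs can occur, which is polynomially bounded in $n$ and $\log W$ because what matters is only the resulting request-function *up to domination equivalence*, and request values are capped at $b$. The key estimate I would establish is that an unsettled prefix has length at most $\poly(n, \log W)$ — replacing "$\le b+1$ increment-edges between relevance-reducing positions" by "$\le n$ vertices between overflow-relevant jumps, with at most $\poly(\log W)$ such jumps per segment, using that each jump strictly decreases the $\dominatedby$-gap between the current request function and the saturation threshold, a quantity with only polynomially many equivalence classes."

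With that length bound in hand, Lemma~\ref{lem:unary:infinite-to-finite} goes through by the same simulation-and-cycle-removal argument (the simulation functions $h$ only rely on settled prefixes being of bounded length and on Lemma~\ref{lem:dominating-memory:stable-concatenation}), and the \pspace-membership follows as in Lemma~\ref{lemma_pspacemembership}: an alternating Turing machine keeps the current play prefix — now of polynomial size, since each vertex of $\arena\times\mem$ needs $\bigo(n \cdot d \log b) = \bigo(n \cdot d \log(nW))$ bits, which is polynomial in $\size{\game}$ — and checks after each move whether the prefix is settled (including the new shortcut criteria), accepting iff Player~$0$ settled it in her favor; since the prefix length stays polynomial, the machine runs in alternating polynomial time, and $\aptime = \pspace$~\cite{ChandraKS81} finishes. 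I expect the main obstacle to be getting the shortcut criterion exactly right — it must be sound (never let a player settle a play they could not actually settle in $\extgame$) and strong enough to cut the prefix length down to a polynomial in $\log W$ rather than in $W$ — together with the careful re-proof of the length bound and of the fact that the two simulation arguments in Lemma~\ref{lem:unary:infinite-to-finite} are undisturbed by the new settling rules.
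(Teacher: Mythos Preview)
Your high-level plan matches the paper's approach exactly: inherit hardness from the abstract case, build $\extgame$ with the same memory structure (where the update adds $\cost(e)$ to each open request), observe that Lemma~\ref{lem:cost-parity-to-parity} and Lemma~\ref{lem:dominating-memory:stable-concatenation} go through unchanged, introduce a shortcut mechanism to regain a polynomial length bound on unsettled prefixes, and then simulate the resulting finite-duration game on an alternating polynomial-time machine.

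There is, however, a genuine gap in your sketch of the length bound. Your proposed justification --- ``each jump strictly decreases the $\dominatedby$-gap between the current request function and the saturation threshold, a quantity with only polynomially many equivalence classes'' --- does not hold as stated: request values range over $[b+1]$ with $b$ up to $nW$, so the natural measure of ``distance to overflow'' has exponentially many levels, not polynomially many. The paper's shortcut criterion is specifically about \emph{cycles} (projected to $\arena$) along which the set of relevant requests is stable and whose cost~$s$ satisfies $s \le (b - r(c^*))/2$ for the costliest open request~$c^*$; when such a cycle is detected, the play jumps forward by the maximal multiple of $s$ that does not overflow. The crucial quantitative fact is that each such jump at least \emph{halves} the remaining gap $b - r(c^*)$, so within a type-4 infix there are at most $\log(nW)$ of these ``halving positions'', and between two consecutive halving positions there can be at most $n$ increment-edges (otherwise two of them share a first-component vertex, and the induced cycle either satisfies the shortcut criterion or already witnesses a halving). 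This gives the bound $\ell_\jump = (\log(nW)+1)(n+1)^6$.

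You also underestimate the work needed for the analogue of Lemma~\ref{lem:unary:infinite-to-finite}. The simulation for Player~$0$ is indeed almost unchanged (shortcuts only help her invariant, since the jumped-to state dominates the detour's destination), but for Player~$1$ the argument is substantially more delicate: when a shortcut is taken in the simulated play, one must decide whether to keep it (if the actual play in $\extgame$ has already caught up, i.e.\ $(o^*,r^*) \dominates (o^*_\jump,r^*_\jump)$) or to remove the entire shortcut cycle from the simulation (otherwise), and the correctness argument must then rule out the possibility that the maximal infinitely-often color lies on infinitely many \emph{removed shortcut cycles} --- this requires a separate case analysis that does not appear in the abstract-cost proof.
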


\pspace-hardness of this problem follows directly from Lemma~\ref{lemma_pspacehardness}, as every finitary parity game is a parity game with integer-valued costs.
Thus, the remainder of this section is dedicated to showing \pspace-membership of the given problem.
We follow the same approach that solved the case of abstract costs, i.e., we first extend~$\game$ with the same memory structure yielding the qualitative parity game~$\extgame$, which is still only of exponential size.
In contrast to the abstract case, however, plays of~$\extgame$ are not settled after polynomially many moves, as the upper bound on~$b$ is now exponential.
Thus, we construct a novel finite-duration game~$\jumpgame$, which extends~$\finitegame$ by a shortcut mechanism. 
This reduces unsettled play prefixes to at most polynomial length.
By showing equivalence of~$\extgame$ and~$\jumpgame$ and by the fact that~$\jumpgame$ can be simulated on a polynomially time-bounded alternating Turing machine, we obtain \pspace-membership of the given problem.
Due to the similarities between the two proofs, we reuse notation from Section~\ref{sec_pspacemembership}.

We begin by bounding the parameter~$b$ from above.
While we could assume~$b < n$ in the abstract case, this is not the case with integer-valued costs anymore: Here, a positional winning strategy for Player~$0$, which still exists, has cost at most~$nW$.

\begin{cor}
\label{cor:binary:cost-upperbound}
Let $\game$ be a parity game with integer-valued costs with $n$ vertices and largest cost~$W$.
If Player~$0$ wins $\game$, then she has a strategy~$\sigma$ with $\cost(\sigma) \le nW$, i.e., an optimal strategy has cost at most~$nW$.
\end{cor}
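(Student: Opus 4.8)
The plan is to reuse the pumping argument behind Corollary~\ref{corollary_costupperbound}, noting that the only difference is arithmetical: an infix traversing at most $n$ edges now has cost at most $nW$ rather than $n$.

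Since Player~0 wins $\game$, Theorem~\ref{thm:previouswork}(\ref{thm:previouswork:cost}) provides a positional winning strategy $\sigma$, and it suffices to show $\cost(\sigma)\le nW$. Let $G_\sigma$ be the subgraph of $\arena$ in which, at each vertex of Player~0, only the edge selected by $\sigma$ is kept. The first observation is that every cycle of $G_\sigma$ reachable from $\vinit$ has an even maximal color: otherwise the play consistent with $\sigma$ that reaches such a cycle and then follows it forever revisits the vertex of maximal, odd color infinitely often without ever answering the associated requests, so $\limsup_j \paritydist(\cdot,j)=\infty$, contradicting that $\sigma$ is winning.

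Next I would show that along any play $\rho$ consistent with $\sigma$ only finitely many positions $j$ have $\paritydist(\rho,j)>nW$. Positions with $\paritydist(\rho,j)=\infty$ are excluded for all but finitely many $j$ directly, since $\sigma$ is winning. Suppose infinitely many positions $j$ had a request answered at cost exceeding $nW$. Then the infix from such a request to its first answer uses more than $n$ edges; extracting cycles from it (its simple-path part costs at most $(n-1)W$) leaves a cycle of strictly positive cost lying strictly between the request and its answer. As $\arena$ has only finitely many cycles, one such cycle $C$ is traversed at arbitrarily large positions of $\rho$, hence infinitely often; by the observation above its maximal color is even, and since $C$ is enclosed strictly between a request of some odd color $c$ and that request's first answer, all colors on $C$ are $<c$. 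Now I would modify $\rho$ into a play $\rho'$ that, at the $i$-th occurrence of $C$, loops around $C$ an additional $i$ times before continuing as $\rho$ does; positionality of $\sigma$ makes $\rho'$ consistent with $\sigma$, the inserted detours do not answer the still-open requests (all colors on $C$ being below $c$), and they drive the cost-of-response at infinitely many positions above every bound, so $\limsup_j \paritydist(\rho',j)=\infty$, again contradicting that $\sigma$ is winning.

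Finally, for $\rho$ consistent with $\sigma$ and $j$ beyond the finitely many exceptional positions we get $\paritydist(\rho,j)\le nW$, hence $\cost(\rho)=\limsup_j \paritydist(\rho,j)\le nW$; taking the supremum over all such $\rho$ yields $\cost(\sigma)\le nW$. I expect the main obstacle to be this middle step, and within it the care needed to see that the recurring cycle $C$ is genuinely traversed at infinitely many positions and that the growing-detour construction keeps the relevant requests open while sending their cost-of-response to infinity; the rest is the same bookkeeping as in the abstract-cost case.
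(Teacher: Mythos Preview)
Your proposal is correct and is exactly the ``straightforward pumping argument'' the paper invokes (without giving details) when it derives this corollary from positional determinacy for Player~$0$. The one place to be careful---which you already flag---is that the positive-cost simple cycle you extract from a long request-to-answer infix need not be a contiguous sub-walk of $\rho$, but it is a cycle in $G_\sigma$ whose vertices all lie strictly inside that infix, and that is all your growing-detour construction actually needs.
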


Due to this corollary, if~$b \geq n W$, we check for the existence of a winning strategy for Player~$0$ in~$(\arena, \cp(\cost', \col))$, where~$\cost'(e) = \eps$, if~$\cost(e) = 0$, and~$\cost'(e) = \inc$ otherwise.
This can be decided in polynomial space due to Theorem~\ref{thm:previouswork}(\ref{thm:previouswork:cost}) and is equivalent to deciding the given problem.
Hence, we assume~$b < n W$ for the remainder of this proof.

Given~$\game$ and~$b$, we define~$\extgame = (\arena \times \mem, \parity(\col'))$ as before, with~$M = [n+1] \times R$, where~$[n+1]$ implements the overflow counter and~$R = (\set{\bot} \cup [b+1])^D$ is the set of request functions.
Since the proof of Lemma~\ref{lem:cost-parity-to-parity} does not rely on~$b < n$, Player~$0$ wins~$\extgame$ if and only if she has a strategy of cost at most~$b$ for~$\game$.
Recall that~$v'_\initmark$ is the initial vertex of~$\arena'$ and that we call a play settled if it either contains a dominating cycle or if the overflow counter is saturated.

In contrast to the abstract case, plays of~$\extgame$ are not settled after polynomially many steps, as up to~$nW$ many increment-edges may have to be traversed before an overflow occurs, which is exponential in the size of the game.
In order to be able to declare the winner of a play after polynomially many steps, we define~$\jumpgame$.
In this game, we skip infixes that form cycles when projected to~$\arena$, during which the costs incurred by the relevant requests increase, but the set of these requests is stable.
We say that such infixes satisfy the shortcut criterion.
When such an infix~$\pi$ is traversed, Player~$1$ has demonstrated that he can increase the cost of the currently open relevant requests by $\cost(\pi)$ without answering any such request, i.e., he can strictly improve his situation.
In~$\jumpgame$, Player~$1$ does not have to traverse this infix (or others) until an overflow occurs, but the play continues at a position that is as \myquot{close} to an overflow as Player~$1$ can get by repeatedly traversing~$\pi$ without actually causing an overflow.

Formally, we say that a play infix~$\pi = (v_0,o_0,r_0)\cdots(v_j,o_j,r_j)$ in~$\arena'$ satisfies the shortcut criterion if~$v_0 = v_j$, if $o_0 = o_j$, if~$\relreq(r_0) = \relreq(r_{j'}) \neq \emptyset$ for all~$0 \leq {j'} \leq j$, if $\cost(\pi) > 0$, and if~$r_j(c^*) + \cost(\pi) \leq b$ for~$c^* = \argmax_{c}r_j(c)$.
Note that the condition~$r_j(c^*) + \cost(\pi) \leq b$ is equivalent to demanding~$\cost(\pi) \leq \frac{b - r_0(c^*)}{2}$.

For the sake of readability, we refrain from defining the arena underlying~$\jumpgame$ formally.
We rather define the set of play prefixes of~$\jumpgame$ inductively, which are subsequences of plays in~$\arena'$.
In particular, the vertices in~$\jumpgame$ inherit the coloring from~$\extgame$.
First,~$v'_\initmark$ is a play prefix of~$\jumpgame$.
Now, let $\pi = (v_0,o_0,r_0) \cdots (v_j, o_j, r_j)$ be a play prefix of~$\jumpgame$ and let~$(v',o',r')$ be a successor of $(v_j, o_j, r_j)$ in~$\arena'$.
If there exists no~$j'$ such that the infix $\pi' = (v_{j'}, o_{j'}, r_{j'}) \cdots (v_j, o_j, r_j) (v',o',r')$ of~$\pi$ satisfies the shortcut criterion, then~$\pi(v',o',r')$ is a play prefix of~$\jumpgame$.
If, however, such a~$j'$ exists, let it be the maximal one, let~$c^* = \argmax_{c }r'(c)$, let~$s = \cost(\pi')$, and let $t = \max\set{ t' > 0 \mid r'(c^*) + s \cdot t' \leq b}$.
Moreover, define~$r^*$ as~$r^*(c) = r'(c) + s \cdot t$ if~$r'(c) \neq \bot$ and~$r^*(c) = \bot$, otherwise.
Then,~$(v_0,o_0,r_0)\cdots(v_j, o_j, r_j) (v',o',r^*)$ is a play prefix of~$\jumpgame$.
We define the cost of the transition from~$(v_j, o_j, r_j)$ to~$(v',o',r^*)$ as~$\cost(v_j,v') + s \cdot t$ in~$\jumpgame$ and redefine the notions of the cost of a play accordingly in order to obtain uniform notation.

Moreover, we use the following notions:
\begin{itemize}
	\item The transition from $(v_j, o_j, r_j)$ to $(v',o',r^*)$ is a \emph{shortcut}.
	\item The infix $(v_{j'}, o_{j'}, r_{j'}) \cdots (v_j, o_j, r_j)(v',o',r^*)$ is a \emph{shortcut cycle}, where we call the vertex~$(v',o',r^*)$ its \emph{destination}.\footnote{Note that, similarly to the case of dominating cycles, a shortcut cycle is only a cycle when projected to its first component.}
	\item The infix~$(v_{j'},o_{j'},r_{j'})\cdots(v_j, o_j, r_j) (v',o',r')$ is the \emph{detour} corresponding to the shortcut cycle, with \emph{destination}~$(v',o',r')$.
\end{itemize}

Let~$(v_j, o_j, r_j)(v',o',r^*)$ be a shortcut with~$(v',o',r')$ as the target of its associated detour as defined above.
We obtain~$\relreq(r') = \relreq(r^*)$ and~$r' \dominatedby r^*$.
Moreover, if~$c^*$ is the open request that has incurred the highest cost in~$r_{j'}$, this shortcut closes at least half the distance between~$r_{j'}(c^*)$ and~$b$, i.e.,~$r^*(c^*) \geq r_{j'}(c^*) + \frac{b - r_{j'}(c^*)}{2}$.
Hence, no infix~$\pi$ containing a shortcut~$\pi'$ satisfies the shortcut criterion, as the cost of~$\pi'$ is already larger than half the cost that would cause an overflow.
Thus,~$\pi$ violates the condition that it must be able to be traversed at least twice without causing an overflow.

However, a shortcut may be part of a dominating cycle.
If the maximal color along the detour associated with some shortcut is odd, then the shortcut cycle is an odd dominating cycle, i.e., a play may be settled due to a dominating cycle that contains a shortcut.

The introduction of shortcuts in~$\jumpgame$ ensures that plays in~$\jumpgame$ are settled in polynomial time.
Fix~$\ell_\jump = (\log(nW) + 1)(n+1)^6$, which is polynomial in the size of~$\game$.

\begin{lem}
\label{lem:pspace-mem:binary:unsettled-bound}
Let~$\pi$ be a play prefix of~$\jumpgame$.
If~$\card{\pi} > \ell_\jump$, then~$\pi$ is settled.	
\end{lem}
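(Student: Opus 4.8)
The plan is to mimic the proof of Lemma~\ref{lem:pspace-mem:unary:unsettled-bound} step by step, modifying only the part that bounded the increment-edges inside a ``type-$4$'' infix (there by $b+1$, which is now exponential). Fix an unsettled play prefix $\pi = (v_0,o_0,r_0)\cdots(v_j,o_j,r_j)$ of $\jumpgame$; I will show $\card{\pi}\le\ell_\jump$. As before, $\pi$ contains no repetition of a vertex of $\arena'$, since such a repetition is a dominating cycle. First I would verify that shortcut transitions do not disturb the first four stages of the decomposition: the detour underlying a shortcut has a constant overflow counter (its first and last memory states agree in this component, and the counter is non-decreasing) and, since adding a fixed positive amount to all open costs preserves relevance, a constant set of relevant requests; hence a shortcut transition creates no overflow position and does not change $\relreq$. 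Consequently the bounds ``$\le n$ overflow positions'', ``$\le n$ debt-free positions between consecutive overflow positions'', ``$\le d$ request-adding positions between consecutive debt-free positions'', and ``$\le d$ relevance-reducing positions between consecutive request-adding positions'' all go through verbatim, decomposing $\pi$ into at most $n\cdot(n+1)\cdot(d+1)^2$ type-$4$ infixes, i.e.\ infixes with a constant, nonempty set of relevant requests and no overflow positions.

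The new content is to bound the length of a type-$4$ infix $\pi_4$ by $n(\log(nW)+2)$ using the shortcut mechanism. Along $\pi_4$ set $\delta_k=\max\{r_k(c)\mid r_k(c)\neq\bot\}$ and $\gamma_k=b-\delta_k\ge 0$. Since the highest-cost request is always relevant, it lies in the stable relevant set of $\pi_4$, hence is never answered and never reset there, so $\delta_k$ is non-decreasing, $\gamma_k$ non-increasing, $\gamma_0\le b<nW$, and for any two positions $k<k'$ of $\pi_4$ with $v_k=v_{k'}$ one computes $\cost((v_k,\ldots)\cdots(v_{k'},\ldots)) = \delta_{k'}-\delta_k = \gamma_k-\gamma_{k'}$. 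Partition $\pi_4$ greedily into \emph{halving intervals}: interval $i$ starts at its first position with value $\Gamma_i=\gamma$ there and continues until $\gamma$ first drops below $\Gamma_i/2$, whereupon interval $i+1$ begins. Then $\Gamma_{i+1}<\Gamma_i/2$, so the $\Gamma_i$ decrease geometrically; being non-negative integers bounded by $\gamma_0<nW$ (and once $\gamma$ reaches $0$ no further interval can start), there are at most $\log(nW)+2$ intervals. I claim each interval spans at most $n$ positions. If not, some interval with start value $\Gamma_i$ has positions $k<k'$ with $v_k=v_{k'}$; the infix $\pi''$ from $k$ to $k'$ projects to a cycle in $\arena$, has matching overflow counters, and a constant nonempty relevant set (it lies in $\pi_4$). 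If $\cost(\pi'')=0$ then $r_k\dominationequivalent r_{k'}$, so $\pi''$ is a dominating cycle, contradicting unsettledness. If $\cost(\pi'')>0$, then from $\gamma_k\le\Gamma_i$ and $\gamma_{k'}\ge\Gamma_i/2$ we get $\cost(\pi'') = \gamma_k-\gamma_{k'} \le \Gamma_i/2 \le \gamma_{k'}$, i.e.\ $r_{k'}(\argmax_c r_{k'}(c)) + \cost(\pi'') \le b$, so $\pi''$ satisfies the shortcut criterion.

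To rule out this last case I need the invariant that \emph{no infix of a play prefix of $\jumpgame$ satisfies the shortcut criterion}, proved by induction on the length of the play prefix: the shortcut criterion of a fixed infix depends only on the memory states recorded along it, which are never altered afterwards, so if appending the last vertex of an infix did not trigger a shortcut, then no infix ending there satisfies the criterion; and if that last vertex \emph{is} a shortcut destination $(v',o',r^*)$ with $r^*=r'+s\cdot t$ on the open requests, then any infix ending there contains the shortcut transition, whose cost is at least $s\cdot t\ge s > b - r^*(\argmax_c r^*(c))$ (by maximality of $t$ among the multipliers that do not cause an overflow), so that infix again violates the criterion. Granting this, each halving interval has at most $n$ positions, a type-$4$ infix has at most $n(\log(nW)+2)$ positions, and aggregating gives $\card{\pi}\le n\cdot(n+1)\cdot(d+1)^2\cdot n(\log(nW)+2) \le (\log(nW)+1)(n+1)^6 = \ell_\jump$, using $d\le n$ and $n\ge 1$. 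I expect the main obstacle to be the bookkeeping at the seams: confirming that shortcuts never manufacture overflow positions nor change $\relreq$ (so that the first four stages survive unchanged), that $\gamma$ more than halves at every shortcut (so that shortcuts fall on interval boundaries rather than inside an interval), and that the ``no infix satisfies the shortcut criterion'' invariant really holds even though the shortcut rule only inspects suffixes at the moment a vertex is appended.
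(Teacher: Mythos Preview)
Your argument is correct and follows the paper's decomposition into overflow, debt-free, request-adding, and relevance-reducing positions, followed by a halving analysis inside each type-$4$ infix. Two points deserve comment. First, you make explicit the invariant ``no infix of a play prefix of $\jumpgame$ satisfies the shortcut criterion'' and prove it by induction; the paper uses this fact only implicitly, phrasing it as ``hence the vertex $(v_{k'},o_{k'},r_{k'})$ is the destination of a shortcut cycle, which contradicts the infix having small cost'', so your treatment is cleaner here. Second, and more substantively, inside a halving interval you bound the number of \emph{positions} by $n$ directly (a repeated $v$-component yields either a zero-cost dominating cycle or an infix satisfying the shortcut criterion, both impossible), whereas the paper inserts one more level: it bounds the number of \emph{increment-edges} per type-$4'$ infix by $n$ (using the same shortcut argument restricted to increment-edge targets, which avoids having to treat the zero-cost case separately) and then bounds the $\eps$-segments between them by $n$ each. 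Your route collapses these two stages and gives a tighter intermediate bound ($n$ versus $n(n+1)$ per halving interval); the paper's route is more parallel to the unary proof it is adapting. Both land comfortably under $\ell_\jump$.
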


\begin{proof}
Let $\pi = (v_0, o_0, r_0)\cdots(v_j, o_j, r_j)$.
First, due to the same argument as in the proof of Lemma~\ref{lem:pspace-mem:unary:unsettled-bound},~$\pi$ may not contain a vertex repetition, as it is unsettled.
Also note that~$\pi$ only contains at most~$n$ overflow positions, each type 1 infix contains at most~$n$ debt-free positions, each type 2 infix contains at most~$d$ request-adding positions and each type 3 infix contains at most~$d$ relevance-reducing positions, again due to the same arguments as in Lemma~\ref{lem:pspace-mem:unary:unsettled-bound}.

Fix a non-empty type 4 infix~$\pi_4$ and recall that there is at least one request continuously open throughout~$\pi_4$.
Let~$c^*$ be the request that has incurred the greatest cost~$s$ at the beginning of~$\pi_4$ and note that the request for~$c^*$ is continuously open throughout~$\pi_4$, as~$\pi_4$ has no relevance-reducing positions.
We define the first \emph{halving position} as the minimal position~$k$ such that~$r_k(c^*) \geq s + \frac{b - s}{2}$ (if it exists).
Inductively, if~$k$ is a halving position, then the minimal~$k' > k$ with~$r_{k'}(c^*) \geq r_k(c^*) + \frac{b - r_k(c^*)}{2}$ is a halving position as well (if it exists).
Since at each halving position the difference between the currently incurred cost of~$c^*$ and the bound~$b$ has been halved when compared to the previous halving position, there exist at most~$\log(b) \leq \log(nW)$ many such positions.
Hence, splitting~$\pi_4$ at its halving positions yields at most~$\log(nW) + 1$ many infixes without overflow, debt-free, request-adding, relevance-reducing or halving positions.
We say such an infix has type~$4'$.

Fix a non-empty type~$4'$ infix~$\pi_{4'}$.
We show that~$\pi_{4'}$ contains at most~$n$ increment-edges.
Towards a contradiction assume that it contains~$n+1$ increment-edges and let~$c^*$ be the request that has incurred the highest cost~$s$ at the beginning of~$\pi_{4'}$.
Since there exist more than~$n$ increment-edges, there exist two such edges leading to the vertices~$(v_k,o_k,r_k)$ and~$(v_{k'}, o_{k'}, r_{k'})$ with~$k < k'$ and~$v_k = v_{k'}$.
If the cost of the infix~$(v_k,o_k,r_k) \cdots (v_{k'}, o_{k'}, r_{k'})$ is larger than~$\frac{b-s}{2}$, then~$\pi_{4'}$ contains a halving position, which yields the desired contradiction.
If the cost is lower, however, then this infix satisfies the shortcut condition, since~$\pi_{4'}$ does not contain overflow positions and the relevant requests are stable throughout~$\pi_{4'}$.
Hence, the vertex~$(v_{k'}, o_{k'}, r_{k'})$ is the destination of a shortcut cycle, which contradicts the infix from~$(v_k,o_k,r_k)$ to~$(v_{k'}, o_{k'}, r_{k'})$ having a cost of less than~$\frac{b-s}{2}$.
Thus,~$\pi_{4'}$ contains at most~$n$ increment-edges and, by splitting $\pi_{4'}$ at the increment-edges, we obtain a decomposition of $\pi_{4'}$ into at most $b+1$ infixes, each without increment-edges and without request-adding, debt-free, and overflow positions.

Moreover, as none of the infixes contain increment-edges, they also contain no shortcuts.
Hence, each infix is of type 5, i.e., at most of length~$n$ as argued in the proof of Lemma~\ref{lem:pspace-mem:unary:unsettled-bound}.
Aggregating all these bounds yields an upper bound of~$(\log(nW) + 1)(n+1)^6$ on the length of an unsettled play prefix~$\pi$.
\end{proof}

We again define the winner of a play~$\rho$ of~$\jumpgame$ such that Player~$0$ wins~$\rho$ if the minimal settled prefix of~$\rho$ is settled due to an even dominating cycle.
Otherwise, Player~$1$ wins~$\rho$.
Due to Lemma~\ref{lem:pspace-mem:binary:unsettled-bound}, the winner of a play of~$\jumpgame$ is determined after finitely many moves, hence~$\jumpgame$ is determined~\cite{Zermelo13}.
It remains to show that solving~$\jumpgame$ actually solves~$\extgame$.

\begin{lem}
Player~$0$ wins~$\extgame$ if and only if she wins~$\jumpgame$.	
\end{lem}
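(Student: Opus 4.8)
The plan is to prove this equivalence by transferring strategies between $\extgame$ and $\jumpgame$, exactly as in the proof of Lemma~\ref{lem:unary:infinite-to-finite}, but now accounting for the shortcut mechanism. The key insight carried over from the abstract case is that a winning strategy in the finite-duration game can be used to play the infinite game $\extgame$ by \emph{simulating} a play of the finite game, removing settling dominating cycles to keep the simulation unsettled; conversely, a winning strategy in $\extgame$ can be replayed in $\jumpgame$. The new subtlety is that a shortcut in $\jumpgame$ corresponds not to a single edge traversal in $\extgame$ but to \emph{several} traversals of the associated detour cycle (namely $t$ of them, where $t$ is chosen maximal without causing an overflow), so the simulation function $h$ must map play prefixes of one game to play prefixes of the other in a way that expands shortcuts into detour repetitions and vice versa.

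First I would establish the direction ``Player~$0$ wins $\jumpgame$ $\Rightarrow$ Player~$0$ wins $\extgame$''. Given a winning strategy $\sigma'_\jump$ for Player~$0$ in $\jumpgame$, I define $h\colon(V\times M)^+ \to (\text{play prefixes of }\jumpgame)$ and a strategy $\sigma'$ in $\extgame$ simultaneously, maintaining an invariant analogous to the one in Lemma~\ref{lem:unary:infinite-to-finite}: if $\pi$ is consistent with $\sigma'$ and ends in $(v,o,r)$, then $h(\pi)$ is consistent with $\sigma'_\jump$, is unsettled, and ends in $(v,o',r')$ with $(o',r')\dominates(o,r)$. When Player~$0$ plays in $\extgame$, she mimics $\sigma'_\jump$'s move on $h(\pi)$; when either player traverses an edge in $\extgame$, we update $h$ by appending the corresponding vertex to the simulated play, then (a) if this creates a settling even dominating cycle in $\jumpgame$, remove it, as before, using transitivity of $\dominatedby$; and (b) if the newly appended segment constitutes a detour that triggers a shortcut in $\jumpgame$, replace the detour by its shortcut, noting that $r' \dominatedby r^*$ and $\relreq(r')=\relreq(r^*)$ so that Lemma~\ref{lem:dominating-memory:stable-concatenation} still applies to propagate the domination. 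The crucial point is that when $\jumpgame$ takes a shortcut that multiplies the incurred cost by $t$, the simulated play in $\extgame$ continues from the detour destination, whose request function is only \emph{dominated} by the shortcut destination's — but this is fine, since the invariant only asks for domination, and Player~$1$ in $\extgame$ can always choose to traverse the detour cycle as many times as he likes (or fewer), and whatever he does, the simulation stays valid. Stabilization of the overflow counter and the coloring argument (a high odd color occurring infinitely often in $\rho$ forces the simulation to grow unboundedly, contradicting Lemma~\ref{lem:pspace-mem:binary:unsettled-bound}) then go through verbatim. The converse direction, ``Player~$1$ wins $\jumpgame$ $\Rightarrow$ Player~$1$ wins $\extgame$'', is handled symmetrically with the dual invariant $(o',r')\dominatedby(o,r)$, plus the observation that once the overflow counter in $\extgame$ reaches $n$ the simulation may be abandoned.

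The main obstacle I expect is bookkeeping the correspondence between a shortcut and its detour cleanly: when the opponent in $\extgame$ traverses a cycle that, in $\jumpgame$, would have been collapsed into a shortcut, one must argue that \emph{finishing} that cycle in $\extgame$ and \emph{jumping ahead} in $\jumpgame$ produce memory states related by $\dominates$ (resp.\ $\dominatedby$), and that this relation is preserved even after the opponent traverses the cycle some arbitrary number of additional times while the simulation stays pinned at the shortcut destination. Concretely, the subtle case is when Player~$1$ in $\extgame$ repeatedly traverses the detour cycle \emph{more} than $t$ times, eventually causing an overflow in $\extgame$ that the shortcut destination in $\jumpgame$ has \emph{not yet} incurred; one must check that this is still consistent with $h(\pi)$ being unsettled (because in $\jumpgame$ Player~$1$ could continue traversing the same cycle post-shortcut, and after finitely many more traversals would reach an overflow there too, so either the simulation is extended legally or the overflow counter increments in both games in lockstep up to the domination slack). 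The remaining verification — that the reductions are composed correctly so that ``Player~$0$ wins $\game$ w.r.t.\ $b$ $\iff$ Player~$0$ wins $\jumpgame$'' follows by combining with Lemma~\ref{lem:cost-parity-to-parity} — is routine.
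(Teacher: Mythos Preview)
Your overall approach matches the paper's, and your treatment of the direction ``Player~$0$ wins $\jumpgame \Rightarrow$ Player~$0$ wins $\extgame$'' is essentially correct. However, there are two genuine gaps in the Player~$1$ direction, and your claim that it is ``handled symmetrically'' with the coloring argument going through ``verbatim'' is where the proposal breaks down.

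First, the shortcut handling is \emph{not} symmetric. In the Player~$1$ direction the invariant is $(o',r') \dominatedby (o,r)$, i.e., the simulation in $\jumpgame$ lags behind the play in $\extgame$. When appending a vertex to the simulation triggers a shortcut, the shortcut destination $(v^*, o^*_\jump, r^*_\jump)$ may \emph{overshoot} the actual vertex $(v^*, o^*, r^*)$ reached in $\extgame$, i.e., $(o^*, r^*) \not\dominates (o^*_\jump, r^*_\jump)$, breaking the invariant. The paper resolves this by a case split you do not describe: if $(o^*, r^*) \dominates (o^*_\jump, r^*_\jump)$ still holds, keep the shortcut; otherwise, \emph{remove the entire shortcut cycle} from the simulation and fall back to its starting vertex, whose memory state is dominated by the detour destination and hence, via Lemma~\ref{lem:dominating-memory:stable-concatenation}, by $(o^*, r^*)$. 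Your ``main obstacle'' paragraph gestures at a difficulty in this vicinity but misplaces it (you phrase it as Player~$1$ overshooting in $\extgame$, which is harmless---overflows there are what Player~$1$ wants) and offers no concrete mechanism.

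Second, and more seriously, the argument that $\tau'$ is winning does \emph{not} go through verbatim. The simulation now removes two kinds of infixes---odd dominating cycles \emph{and} shortcut cycles---so to derive a contradiction from the assumption that the maximal recurring color~$c$ is even, you must additionally argue that $c$ occurs only finitely often on \emph{removed shortcut cycles}. The paper spends most of the proof on precisely this: once overflow counters stabilize, either the simulation's counter is strictly behind (in which case every subsequent shortcut satisfies the invariant and is \emph{not} removed), or the counters coincide, in which case each removed shortcut cycle on which $c$ occurs forces strictly positive cost onto some relevant request in $\rho$ that $c$ cannot answer (since $c$ must be smaller than every relevant request open along the cycle---otherwise visiting $c$ would answer one and violate the shortcut criterion). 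Repeating this forces an overflow, contradicting stabilization. This is the genuinely new content beyond Lemma~\ref{lem:unary:infinite-to-finite}, and your proposal omits it entirely.
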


\begin{proof}
We first show that, if Player~$0$ wins~$\jumpgame$, then she also wins~$\extgame$.
To this end, let~$\sigma_\jump$ be a winning strategy for Player~$0$ in~$\jumpgame$.
We construct a winning strategy~$\sigma'$ for her in~$\extgame$ by mimicking the moves made in~$\extgame$ in~$\jumpgame$ using a simulation function~$h$ mapping play prefixes in~$\extgame$ to play prefixes in~$\jumpgame$.
This simulation function satisfies the same invariant as in the proof of Lemma~\ref{lem:unary:infinite-to-finite}, i.e.:
\begin{center}
\begin{minipage}{0.8\linewidth}
	Let $\pi$ be consistent with $\sigma'$ and end in $(v,o,r)$.
	Then, $h(\pi)$ is consistent with $\sigma_\jump$, is unsettled, and ends in $(v,o',r')$ with $(o',r') \dominates (o,r)$. 
\end{minipage}
\end{center}

We define~$h$ and~$\sigma'$ inductively and simultaneously, starting with~$h(v'_\initmark) = v'_\initmark$, which obviously satisfies the invariant.
Now let~$\pi$ be a play prefix of~$\game'$ consistent with~$\sigma'$, ending in~$(v,o,r)$, and assume~$h(\pi)$ is defined.
Due to the invariant,~$h(\pi)$ ends in some~$(v,o_\jump,r_\jump)$ with~$(o,r) \dominatedby (o_\jump,r_\jump)$.
If~$(v,o,r) \in V_0'$, let~$\sigma_\jump(h(\pi)) = (v^*,o^*_\jump,r^*_\jump)$ and define~$\sigma'(\pi) = (v^*, o^*, r^*)$, where~$(o^*, r^*) = \update((o,r),(v,v^*))$.
Otherwise, if $(v,o,r) \in V_1'$, let~$(v^*,o^*,r^*)$ be an arbitrary successor of~$(v,o,r)$ in~$\arena'$.
In either case, let~$\pi^* = \pi \cdot (v^*,o^*,r^*)$.

It remains to define~$h(\pi^*)$.
To this end, let~$(o^*_\jump, r^*_\jump)$ be the unique memory state such that~$\pi^*_\jump = h(\pi) \cdot (v^*, o^*_\jump, r^*_\jump)$ is a play prefix of~$\jumpgame$.
If~$\pi^*_\jump$ is unsettled, we define $h(\pi^*) = \pi^*_\jump$.
This choice satisfies the invariant:
If the vertex~$(v^*, o^*_\jump, r^*_\jump)$ is the destination of a shortcut, then let~$(v^*,o^*_\rightarrow, r^*_\rightarrow)$ be the destination of its corresponding detour.
We obtain $(o^*_\jump, r^*_\jump) \dominates (o^*_\rightarrow, r^*_\rightarrow) \dominates (o^*, r^*)$ due to Lemma~\ref{lem:dominating-memory:stable-concatenation}.
Otherwise, i.e., if~$(v^*, o^*_\jump, r^*_\jump)$ is not the destination of a shortcut, then Lemma~\ref{lem:dominating-memory:stable-concatenation} yields the invariant directly.
Now consider the case that~$\pi^*_\jump$ is settled.
Then it is settled due to containing an even dominating cycle as a suffix, due to the invariant and due to~$\pi^*_\jump$ being consistent with the winning strategy~$\sigma_\jump$ for Player~$0$.
We define~$h(\pi^*)$ by removing the settling dominating cycle similarly to the proof of Lemma~\ref{lem:unary:infinite-to-finite}.
Using the same argument as in the proof of that lemma, we obtain that~$h(\pi^*)$ satisfies the invariant.

It remains to show that~$\sigma'$ is winning for Player~$0$.
To this end, consider a play~$\rho$ consistent with $\sigma'$ and let~$\pi_j$ be the prefix of length~$j$ of~$\rho$.
As all~$\pi_j$ are consistent with~$\sigma'$, due to the invariant, neither the overflow counter of the~$h(\pi_j)$, nor that of the~$\pi_j$ reaches~$n$.
Hence, the colors of the last vertices of~$\pi_j$ and~$h(\pi_j)$ coincide.
Recall the argument in the proof of Lemma~\ref{lem:unary:infinite-to-finite}: If the largest color~$c$ appearing infinitely often in~$\rho$ is odd, then it can only occur finitely often on even dominating cycles.
Hence, after some prefix, every time a vertex of color~$c$ is visited, this vertex is added to the simulated play prefix and never removed.
This unbounded growth contradicts the~$h(\pi_j)$ being unsettled, as every play prefix of~$\jumpgame$ of length~$\ell_\jump+1$ is settled due to Lemma~\ref{lem:pspace-mem:binary:unsettled-bound}.
Thus,~$\rho$ satisfies the parity condition, i.e.,~$\sigma'$ is indeed winning for Player~$0$.

For the other direction, we show that if Player~$1$ wins~$\jumpgame$, he wins~$\extgame$, which suffices due to determinacy.
To this end, let~$\tau_\jump$ be a winning strategy for Player~$1$ in~$\jumpgame$.
We construct a winning strategy~$\tau'$ for him in~$\extgame$ by simulating play prefixes in~$\extgame$ by such prefixes in~$\jumpgame$, from which we remove shortcut- and dominating cycles.
We again define a simulation function~$h$ that maintains the same invariant as in the proof of Lemma~\ref{lem:unary:infinite-to-finite}, i.e.:
\begin{center}
\begin{minipage}{0.8\linewidth}
	Let $\pi$ be consistent with $\tau'$ and end in $(v,o,r)$ with $o <n$.
	Then, $h(\pi)$ is consistent with $\tau_\jump$, is unsettled, and ends in $(v,o',r')$ with $(o',r') \dominatedby (o,r)$. 
\end{minipage}
\end{center}
We define~$h$ and~$\tau'$ inductively and simultaneously, starting with~$h(v'_\initmark) = v'_\initmark$, which clearly satisfies the invariant.
Now let~$\pi$ be a play prefix of~$\extgame$ consistent with~$\tau'$ and ending in~$(v,o,r)$.
If~$(v,o,r) \in V_0'$, then let~$(v^*, o^*, r^*)$ be an arbitrary successor of~$(v,o,r)$ in~$\arena'$.
Otherwise, if~$(v,o,r) \in V_1'$, let~$\tau_\jump(h(\pi)) = (v^*,o^*_\jump,r^*_\jump)$ and define~$\tau'(\pi) = (v^*, o^*, r^*)$, where~$(o^*, r^*) = \update((o,r),(v,v^*))$.
In either case, let~$\pi^* = \pi \cdot (v^*,o^*,r^*)$.

It remains to define~$h(\pi^*)$ in the case that~$o^* < n$.
To this end, let~$(o^*_\jump, r^*_\jump)$ be the unique memory state such that~$\pi^*_\jump = h(\pi) \cdot (v^*, o^*_\jump, r^*_\jump)$ is a play prefix of~$\jumpgame$.
If~$\pi^*_\jump$ is unsettled and if~$(v^*, o^*_\jump, r^*_\jump)$ is not the destination of a shortcut, we define $h(\pi^*) = \pi^*_\jump$, which satisfies the invariant due to Lemma~\ref{lem:dominating-memory:stable-concatenation}.
If~$\pi^*_\jump$ is settled due to~$o^*_\jump = n$, then, due to the invariant and Lemma~\ref{lem:dominating-memory:stable-concatenation}, we obtain~$o^* = n$, i.e., we can define~$h(\pi^*)$ arbitrarily.
If~$\pi^*_\jump$ is settled due to reaching a dominating cycle, we remove this cycle from~$\pi^*_\jump$ similarly to the construction from the proof of Lemma~\ref{lem:unary:infinite-to-finite}.

Finally, consider the case that~$\pi^*_\jump$ is unsettled and~$(v^*, o^*_\jump, r^*_\jump)$ is the destination of a shortcut, with~$(v^*, o^*_\rightarrow, r^*_\rightarrow)$ as the destination of the corresponding detour.
We differentiate whether the destination~$(v^*, o^*_\jump, r^*_\jump)$ of the shortcut merely allows Player~$1$ to catch up to the play prefix constructed in~$\extgame$, or whether it is more advantageous for him than the position~$(v^*,o^*,r^*)$ actually reached in~$\extgame$.
In the former case, i.e., if~$(o^*,r^*) \dominates (o^*_\jump, r^*_\jump)$, we define~$h(\pi^*) = \pi^*_\jump$, which satisfies the invariant by assumption.
In the latter case, however, i.e., if $(o^*,r^*) \dominates (o^*_\jump, r^*_\jump)$ does not hold true, we remove the shortcut cycle similarly to the removal of a settling dominating cycle, obtaining~$\pi_\jump$, and define~$h(\pi^*_\jump) = \pi_\jump$.
This satisfies the invariant due to~$(o^*,r^*) \dominates (o^*_\rightarrow, r^*_\rightarrow)$, which we obtain via Lemma~\ref{lem:dominating-memory:stable-concatenation}, and the definition of the shortcut condition.

It remains to show that~$\tau'$ is indeed winning for Player~$1$ in~$\game'$.
To this end, consider a play~$\rho$ consistent with $\tau'$ and let~$\pi_j = (v_0,o_0,r_0) \cdots (v_j,o_j,r_j)$ be the prefix of length~$j+1$ of~$\rho$.
If the overflow counter along~$\rho$ eventually saturates,~$\rho$ is clearly winning for Player~$1$.
Hence, assume the opposite, and note that, due to the invariant of~$h$, the colors of the last vertices of~$\pi_j$ and~$h(\pi_j)$ coincide for all~$j \in \nats$.
Let~$c$ be the largest color occurring infinitely often along~$\rho$ and assume towards a contradiction that~$c$ is even.
Similarly to the previous argument,~$c$ must either occur on odd dominating cycles or on removed shortcuts after some finite prefix, as these are the only play infixes that are removed from the simulation. 
Similarly to the proof of Lemma~\ref{lem:unary:infinite-to-finite},~$c$ can only occur finitely often on odd dominating cycles, as each such occurrence implies one occurrence of some larger, odd color.

Now assume that~$c$ occurs infinitely often on removed shortcut cycles.
Since the overflow counter along~$\rho$ never saturates, none of the~$h(\pi_j)$ contains a saturated overflow counter either.
Moreover, as both the removal of an odd dominating cycle and that of a shortcut retain the value of the overflow counter, the values of the overflow counter of the~$h(\pi_j)$ eventually stabilize.
Let~$h(\pi_j) = (v^j_0,o^j_0,r^j_0)\cdots(v^j_{k_j},o^j_{k_j},r^j_{k_j})$.
Pick~$p$ such that~$o_p = o_j$ and $o^p_{k_p} = o^j_{k_j}$ for all~$j > p$, and such that~$c$ is the largest color occurring on the suffix of~$\rho$ starting at position~$p$.

If~$o^p_{k_p} < o_p$, then~$h(\pi_j)$ results from~$h(\pi_{j-1})$ by removing a shortcut cycle only finitely often.
In fact, after reaching~$\pi_p$, no shortcut cycle is removed anymore:
If a shortcut is used in the move from~$h(\pi_{j-1})$ to~$h(\pi_j)$, then~$(o_j, r_j) \dominates (o^j_{k_j}, r^j_{k_j})$, i.e., the shortcut cycle is not removed.
Hence, only finitely many shortcut cycles are removed, which contradicts~$c$ occurring on infinitely many of these.
Thus, we obtain~$o^p_{k_p} = o_p$, which implies~$r^p_{k_p} \dominatedby r_p$ due to the invariant of~$h$.
In particular, for each relevant request that is open in~$r^j_{k_j} $, some larger one is open in~$r_j$ for each~$j > p$.

If~$c$ occurs on a removed shortcut cycle, then~$c$ must be smaller than the smallest relevant request that is open during the witnessing infix: Otherwise it would answer that relevant request, due to~$c$ being even.
Hence the detour corresponding to the infix would violate the shortcut condition.
While there may be some open requests for colors~$c' < c$ in the corresponding infix in~$\rho$, visiting~$c$ does not answer all relevant requests in that corresponding infix in~$\rho$, as argued before.
This implies traversing the shortcut cycle increases the cost of some request in~$\rho$.
Furthermore, since~$c$ is the maximal color visited in the considered suffix, this request eventually overflows after traversing at most~$b+1$ many increment-edges.
This contradicts the choice of~$p$ such that no overflows occur after~$\pi_p$.
If less than~$b+1$ increment-edges occur during the remainder of the play, then also at most~$b+1$ shortcuts occur, since each shortcut requires the traversal of at least one increment-edge.
This in turn contradicts~$c$ occurring on infinitely many removed shortcuts.

Hence, since vertices of color~$c$ only occur finitely often on odd dominating cycles and on removed shortcut cycles, after some finite prefix, each visited vertex of color~$c$ is added to the simulated play and never removed.
Thus, the~$h(\pi_j)$ grow increasingly longer.
Such unbounded growth contradicts them being unsettled, as required by the invariant.
This is due to every play prefix of length at least~$l_\jump$ being settled, due to Lemma~\ref{lem:pspace-mem:binary:unsettled-bound}.
Hence,~$c$ is odd, i.e.,~$\rho$ is winning for Player~$1$.
\end{proof}

Having shown~$\jumpgame$ to be equivalent to~$\extgame$, which is in turn equivalent to~$\game$, we can use the same construction as in the proof of Lemma~\ref{lemma_pspacemembership}, i.e., simulate~$\jumpgame$ on an alternating Turing machine, in order to decide the winner of~$\jumpgame$.
Due to Lemma~\ref{lem:pspace-mem:binary:unsettled-bound}, and due to~$\log(nW)$ being polynomial in the size of the description of~$\game$, this Turing machine is polynomially time-bounded.
Thus, a similar proof to that of Lemma~\ref{lemma_pspacemembership} yields \pspace-membership of the given problem.
Together with the previously stated \pspace-hardness of the given problem, this concludes the proof of Theorem~\ref{thm:binary:pspace-completeness}

Due to the same reasoning as for the results of Section~\ref{sec:memory}, we obtain asymptotically tight exponential bounds for the memory required by both players in order to win with respect to a given bound~$b$.
The upper bounds are obtained as a corollary of the equivalence of~$\game$ and~$\extgame$. 

\begin{cor}
	Let $\game$ be a parity game with integer-valued costs containing~$n$ vertices and~$d$ odd colors.
	\begin{itemize}
		\item If Player~$0$ has a strategy $\sigma$ for $\game$ with $\cost(\sigma) = b$, then she also has a strategy $\sigma'$ with $\cost(\sigma') \leq b$ and $\card{\sigma'} = (b+2)^d$.
		\item If Player~$1$ has a strategy~$\tau$ for~$\game$ with~$\cost(\tau) = b$, then he also has a strategy~$\tau'$ with~$\cost(\tau') \geq b$ and~$\card{\tau'} = n(b+2)^d$.
	\end{itemize}
\end{cor}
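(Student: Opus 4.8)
The plan is to follow the route of Corollary~\ref{cor:parity:memory:upper-bound}, observing that every ingredient used there is insensitive to whether $b < n$ and hence transfers to integer-valued costs. Recall that, in the present setting, $\extgame = (\arena \times \mem, \parity(\col'))$ with $\mem = [n+1] \times R$ and $R = (\set{\bot} \cup [b+1])^D$, so $\card{\mem} = (n+1)(b+2)^d$; moreover, by Theorem~\ref{thm:previouswork}(\ref{thm:previouswork:parity}) the winner of the parity game $\extgame$ has a positional winning strategy, and, as already noted above, the proof of Lemma~\ref{lem:cost-parity-to-parity} goes through verbatim, so Player~$0$ has a strategy of cost at most $b$ in $\game$ if and only if she wins $\extgame$ (and likewise for any other bound in place of $b$).

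For the first item, suppose Player~$0$ has a strategy of cost at most $b$ in $\game$. Then she wins $\extgame$, hence wins it with a positional strategy $\sigma'$, and the right-to-left direction of the proof of Lemma~\ref{lem:cost-parity-to-parity} turns $\sigma'$ into a finite-state strategy for $\game$ of cost at most $b$ that is implemented by $\mem$, i.e.\ of size at most $(n+1)(b+2)^d$. The superfluous factor $n+1$ is removed exactly as in the proof of Corollary~\ref{cor:parity:memory:upper-bound} using the technique of~\cite{FZ14}: Player~$0$ commits to the largest overflow value from which $\sigma'$ is still winning and plays the residual positional strategy on that slice of $\extgame$, which leaves a memory structure built from $R$ only, of size $(b+2)^d$, without increasing the cost.

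For the second item, $\cost(\tau) = b$ implies in particular that every play consistent with $\tau$ has cost at least $b$, so Player~$0$ has no strategy of cost at most $b - 1$ in $\game$ (the case $b = 0$ being vacuous). Applying Lemma~\ref{lem:cost-parity-to-parity} with the bound $b-1$ together with determinacy of parity games, Player~$1$ wins the corresponding parity game, positionally; the other direction of the proof of Lemma~\ref{lem:cost-parity-to-parity} then yields a finite-state strategy $\tau'$ for $\game$ implemented by the memory structure $\mem'$ of that proof. As shown there, the overflow counter of every play consistent with $\tau'$ stays below $n$, so only $n$ of its values are ever needed, whence $\card{\tau'} \le n(b+1)^d \le n(b+2)^d$; and every such play violates the parity condition with costs with respect to $b-1$, i.e.\ has cost at least $b$, so $\cost(\tau') \ge b$.

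The bulk of the work is therefore purely organisational: verifying that Lemma~\ref{lem:cost-parity-to-parity}, positional determinacy of parity games, and the overflow-elimination argument of~\cite{FZ14} make no use of $b < n$, and then reading the memory bounds off the two directions of the reduction. The one slightly delicate point is the overflow elimination for Player~$0$, where one must justify that fixing a maximal winning overflow value is sound; but this is precisely the argument already carried out for Corollary~\ref{cor:parity:memory:upper-bound}, so I would simply appeal to it.
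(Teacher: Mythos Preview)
Your proposal is correct and follows exactly the route the paper takes: the paper itself gives no detailed argument here, merely stating that the bounds follow from the equivalence of $\game$ and $\extgame$ together with the reasoning behind Corollary~\ref{cor:parity:memory:upper-bound}, and you have faithfully unpacked precisely that reasoning (positional determinacy of $\extgame$, the two directions of Lemma~\ref{lem:cost-parity-to-parity}, and the overflow-elimination trick of~\cite{FZ14} for Player~$0$). Your use of the bound $b-1$ for Player~$1$ and the observation that the overflow counter never reaches $n$ are the right way to get the stated memory size; these details are left implicit in the paper.
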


We obtain matching lower bounds from Theorem~\ref{thm:memory-upper-p0} and Theorem~\ref{thm:memory-upper-p1}.

\subsection{Tradeoffs Between Memory and Cost in Parity Games with Integer-Valued Costs}

As every finitary parity game is a parity game with integer-valued costs, Theorem~\ref{thm:tradeoffs:p0} holds true for the latter kind of games as well.
In this result, the tradeoff in terms of costs ranges over an interval of size~$d$, while the size of strategies ranges from positional ones to strategies of exponential size.
Hence, even a small improvement in the cost realized by a strategy comes at the price of an exponential increase in memory.
In the case of a binary encoding of integer-valued costs, however, an exponential increase in the size of memory used by a strategy may yield an exponential improvement in terms of the cost realized by the strategy.

\begin{thm}
\label{thm:tradeoffs:binary:p0}
  
  For each~$d \geq 1$, there exists a parity game with integer-valued costs~$\game_d$ with~$\card{\game_d} \in \bigo(d^2)$, such that for every~$j$ with~$1 \leq j \leq d$ there exists a strategy~$\sigma_j$ for Player~$0$ in~$\game_d$ such that

\begin{itemize}[beginpenalty=10000]
	\item $(d+1)2^d + 2^{d-1} = \cost(\sigma_1) > \cost(\sigma_2) > \cdots > \cost(\sigma_d) = (d+1)2^d$, and
	\item $1 = \size{\sigma_1} < \size{\sigma_2} < \cdots < \size{\sigma_d} = 2^{d-1}$.
\end{itemize}
Also, for every strategy $\sigma'$ for Player~$0$ in $\game_d$ with $\cost(\sigma') \leq \cost(\sigma_j)$ we have $\card{\sigma'} \geq \card{\sigma_j}$.
\end{thm}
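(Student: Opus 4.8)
The plan is to take the arena of $\game_d$ from the proof of Theorem~\ref{thm:memory-upper-p0} as a combinatorial template and replace its abstract unit costs by binary-encoded integer costs, the only structural change being on the edges of Player~$0$'s answering gadgets (Figure~\ref{fig:memory-upper-bound:player-0:gadgets}). I would weight the horizontal edges inside those gadgets geometrically, so that ``descending late'', i.e., answering a request with a high even color such as the default answer $2d$ instead of the tight color, costs an amount growing like a power of two in the number of colors skipped, rather than linearly as in the unit-cost game. The edges of Player~$1$'s gadgets and the back-edge are weighted so that every request answered \emph{optimally}, i.e., by a color matching its rank among the requests posed so far, is answered with cost exactly $(d+1)2^d$, independently of which increasing sequence Player~$1$ poses and of how Player~$0$ implements her bookkeeping. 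The resulting game still has size $\bigo(d^2)$ and costs describable with $\bigo(d)$ bits, and since every finitary parity game is a parity game with integer-valued costs, the model of Section~\ref{sec:concrete} applies.

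Next I would define the strategies $\sigma_j$ exactly as in the proof of Theorem~\ref{thm:tradeoffs:p0}: use the length-restricted increasing odd sequences $\incseq_d^j$ as memory, with the same initial state, the same update function storing only the first $j-1$ relevant requests, and the same next-move function $\nxt_j$. Since $\incseq_d^{j-1} \subsetneq \incseq_d^j$ and $\card{\incseq_d^d} = 2^{d-1}$, the chain $1 = \size{\sigma_1} < \cdots < \size{\sigma_d} = 2^{d-1}$ is immediate, which is the second bullet. To pin down $\cost(\sigma_j)$ I would reuse the optimal counter-play of Player~$1$ from Theorem~\ref{thm:tradeoffs:p0}: fill Player~$0$'s memory with the $j-1$ smallest odd colors and then always request the smallest uncovered color $2j-1$. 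Against $\sigma_j$ this request is absorbed into the padding slot and hence answered with the default color $2d$ in Player~$0$'s $j$-th gadget, which by the geometric weighting costs $(d+1)2^d$ plus a penalty strictly decreasing in $j$, equal to $2^{d-1}$ for $j=1$ and to $0$ for $j=d$; every other request, and every other choice of Player~$1$ (repeating a color, or requesting a color above $2j-1$), is answered more cheaply by the same case distinction as in Theorem~\ref{thm:tradeoffs:p0}. This gives the first bullet and that the costs run from $(d+1)2^d$ up to $(d+1)2^d + 2^{d-1}$.

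For the minimality claim I would transplant the pigeonhole argument of Theorem~\ref{thm:memory-upper-p0} almost verbatim: for any $\sigma'$ with $\card{\sigma'} < \card{\incseq_d^j}$ there are distinct $m'_1, m'_2 \in \incseq_d^j$ whose associated request sequences $\req(m'_1), \req(m'_2)$ are merged by $\sigma'$'s memory update; following the one on which they first differ, Player~$1$ forces some request either to be answered out of order or to be answered by a color above the tight one, and the geometric weights make this strictly exceed $\cost(\sigma_j)$. As in the previous tradeoff proof, the argument is obtained by replacing the constants $2^{d-1}$ and $d^2+2d$ by $\card{\sigma_j}$ and $\cost(\sigma_j)$.

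The main obstacle I anticipate is calibrating the edge weights so that all of the following hold simultaneously: an optimally answered request costs \emph{exactly} $(d+1)2^d$, and this value is \emph{independent} of which in-order request sequence Player~$1$ chose and of how Player~$0$ tracks it (this already needed care in the unit-cost construction, where the position at which one descends inside a gadget matters); each further un-tracked request contributes a distinct, geometrically growing amount, so that $\cost(\sigma_1) > \cdots > \cost(\sigma_d)$ really realizes the prescribed extremal values; and no behaviour of Player~$1$ cleverer than fill-then-probe, and no out-of-order answer by Player~$0$, accidentally beats the intended bound. Verifying these exclusions, rather than the high-level structure, is where the bulk of the work lies.
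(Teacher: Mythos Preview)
Your proposal is essentially the same approach as the paper's and is correct in outline. Both reuse the arena of Theorem~\ref{thm:memory-upper-p0}, re-weight the edges with binary-encoded costs that are exponential in the color index, take the strategies~$\sigma_j$ verbatim from Theorem~\ref{thm:tradeoffs:p0}, and port the cost analysis and the pigeonhole minimality argument with updated constants.

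The one place where the paper is more concrete than your sketch is precisely the calibration you flag as the obstacle, and its solution is worth noting because it is cleaner than what you describe. Rather than weighting the \emph{horizontal} edges of Player~$0$'s gadgets and then separately tuning Player~$1$'s gadgets for normalization, the paper applies a \emph{single, symmetric} rule to both players' gadgets: the edge \emph{into} a vertex of color~$2c-1$ or~$2c$ gets cost~$2^{c-1}$, the edge \emph{out of} that vertex gets cost~$2^d - 2^{c-1}$, and every other edge gets cost~$0$. This makes every path through every gadget cost exactly~$2^d$, so the total cost of a request--response pair depends only on the two descent points (the requested color and the answering color), not on which gadgets they sit in. With this uniformity the cost computation collapses to a one-line formula~$\cost(\sigma_j) = (d+1)2^d + 2^{d-1} - 2^{j-1}$, and the case distinctions you anticipate (``fill-then-probe'' being the worst adversary, out-of-order answers never helping Player~$0$) go through exactly as in Theorem~\ref{thm:tradeoffs:p0} with no new work. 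Your asymmetric scheme could also be made to work, but it would force you to track path-dependent gadget costs, which is exactly the bookkeeping the paper's weighting sidesteps.
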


\begin{proof}
We reuse the arena from the game~$\game_d$ constructed in the proof of Theorem~\ref{thm:memory-upper-p0} and redefine the cost function~$\cost$.
We do so by assigning a cost of~$2^{c-1}$ to each edge leading to some vertex colored with either the odd color~$2c -1$ or the even color~$2c$.
Moreover, we assign a cost of~$2^d - 2^{c-1}$ to each edge leading away from some such vertex.
All other edges are edges of cost~$0$.
The traversal of such a gadget with costs incurs a uniform cost of~$2^d$, regardless of the path taken through it.
As the cost of each edge can be encoded using~$d$ bits,~$\game_d$ is indeed of size~$\bigo(d^2)$.

For any~$j$ with~$1 \leq j \leq d$ let~$\sigma_j$ be a strategy as defined in Theorem~\ref{thm:tradeoffs:p0}.
Due to similar arguments as in the proof of that theorem, we obtain~$\cost(\sigma_j) = (d+1)2^d + 2^{d-1} - 2^{j-1}$, which satisfies the first property stated in the given theorem.
As the strategies remain unchanged from the proof of Theorem~\ref{thm:tradeoffs:p0}, they also satisfy~$\card{\sigma_j} < \card{\sigma_{j+1}}$.
Finally, similar reasoning to the proof of Theorem~\ref{thm:tradeoffs:p0} yields that each~$\sigma_j$ is minimal for its cost.
\end{proof}


\section{Streett Games with Costs}
\label{sec:streett}

In this section, we consider the Streett condition with costs~\cite{FZ14}, which generalizes both the parity condition with costs as well as the classical Streett condition.
We show that, given some Streett game with costs~$\game$ and bound~$b$, the problem of deciding whether there exists a strategy for Player~$0$ in~$\game$ with cost at most~$b$, is $\exptime$-complete.
Thus, this problem is harder to solve than that of solving classical Streett games (unless $\conp = \exptime$), and as hard as solving both finitary Streett games and Streett games with costs~\cite{FZ14}.
As a corollary of this result and of those previously obtained in this work we furthermore obtain tight exponential bounds on the memory required by both players in such a game.
We begin by formally defining the Streett condition with costs, before examining its complexity and the memory required by both players.

Let $\arena = (V, V_0, V_1, E, v_\initmark)$ be an arena and let $\Gamma = (Q_c, P_c)_{c \in [d]}$ for some $d \geq 1$ be a non-empty, finite family of so-called Streett pairs of subsets of $V$.
Intuitively, for each $c \in [d]$, the set $Q_c$ denotes vertices requesting condition~$c$, which are answered by visiting some vertex in $P_c$.
Finally, let $\cost = (\cost_c)_{c \in [d]}$ be a family of cost functions, where $\cost_c\colon E \rightarrow \nats$ for each $c \in [d]$, which we extend to cost functions over plays as usual.
We denote the highest cost assigned by any $\cost_c \in \cost$ by~$W$.

Let $\rho = v_0 v_1 v_2\cdots$ be a play in~$\arena$ and let $j \in \nats$ be a position.
We first define the cost-of-response for a single Streett pair $(Q_c, P_c)$ as
\[
	\streettdist_c(\rho, j) =
	\begin{cases}
 		\min \set{ \cost (v_j \cdots v_{j'}) \mid j' \ge j \text{ and } v_{j'} \in P_c } &\text{if } v_j \in Q_c \\
 		0 &\text{otherwise}
 	\end{cases}
\]
with~$\min\emptyset = \infty$.
Note that, in contrast to parity games, the visit to $v_j$ may open multiple requests, as there may exist multiple $c$ such that $v_j \in Q_c$.
Thus, we define the (total) cost-of-response at position~$j \in \nats$ of $\rho$ by 
\[
	\streettdist(\rho, j) = \max \set { \streettdist_c(\rho, j) \mid c \in [d] } \enspace.
\]
Thus, $\streettdist(\rho, j)$ is the cost of the infix of $\rho$ from position~$j$ to the earliest position where all requests opened at position $j$ are answered, and $\infty$, if at least one such request is not answered.
Moreover, $\streettdist(\rho, j)$ is zero if no requests are opened at position~$j$.

The Streett condition with costs is then defined as
\[\streettc(\Gamma, \cost) = \set{ \rho \in V^\omega \mid \limsup\nolimits_{j\rightarrow \infty} \streettdist(\rho , j) < \infty } \enspace,\]
i.e., $\rho$ satisfies the condition if there exists a bound~$b \in \nats$ such that all but finitely many requests are answered with cost less than $b$.
In particular, only finitely many requests may be unanswered, even if they only incur finite cost.
Similarly to the case of the parity condition with costs, the bound~$b$ may depend on the play $\rho$.

A game~$\game = (\arena, \streettc(\Gamma, \cost))$ is called a Streett game with costs.
If all $\cost_c$ assign~$0$ to every edge, then $\streettc(\Gamma, \cost)$ is a classical Streett condition \cite{streett81}, denoted by $\streett(\Gamma)$.
Dually, if all $\cost_c$ assign~$1$ to every edge, then $\streettc(\col, \cost)$ is equal to the finitary Streett condition over $\Gamma$, as introduced by Chatterjee et al.~\cite{ChatterjeeHenzingerHorn09} and denoted by $\finstreett(\Gamma)$.
In these cases, we refer to $\game$ as a Streett or a finitary Streett game, respectively.

We assume the cost functions to be given in binary encoding.\footnote{All lower bounds shown for this setting already hold for that of finitary Streett games.}
Hence, in general, the largest cost~$W$ is exponential in~$2^{\card{\cost}}$, where $\card{\cost}$ is the length of the encoding of $\cost$.
Thus, we define~$\card{\game} = \card{\arena} + d + \log W$.

\begin{thm}
\label{thm:previouswork-streett}\leavevmode
	\begin{enumerate}
		
		\item\label{thm:previouswork-streett:streett}
		 Solving Streett games is $\conp$-complete.
		 If Player~$0$ wins, then she has a winning strategy of size~$d!$, while Player~$1$ has uniform positional winning strategies~\cite{Horn05}.
		
		\item\label{thm:previouswork-streett:finitary}
		Solving finitary Streett games is $\exptime$-complete.\footnote{Shown in unpublished work by Chatterjee, Henzinger, and Horn, obtained by a minor modification to the proof of \exptime-hardness of solving request-response games \cite{ChatterjeeHenzingerHorn11}.}
		If Player~$0$ wins, then she has a winning strategy of size $d2^d$, but Player~$1$ has in general no finite-state winning strategy~\cite{ChatterjeeHenzingerHorn09}.

		\item\label{thm:previouswork-streett:cost}
		Solving Streett games with costs is $\exptime$-complete.
		If Player~$0$ wins, then she has a winning strategy of size $2^d((2d)!)$, but Player~$1$ has in general no finite-state winning strategy~\cite{FZ14}.
	
	\end{enumerate}
\end{thm}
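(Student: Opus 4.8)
Since this theorem merely collects results from the literature, the plan is to reconstruct the three arguments from the cited works rather than to develop anything new; I treat the items in turn.

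For item~(\ref{thm:previouswork-streett:streett}), the key observation is that the complement of a Streett condition is a Rabin condition: a play fails $\streett(\Gamma)$ iff for some $c$ there are infinitely many visits to $Q_c$ but only finitely many to $P_c$. By positional determinacy of Rabin games, if Player~$1$ wins then he wins positionally, which both gives his uniform positional strategies and places the problem in $\conp$: one guesses a positional $\tau\colon V_1\to V$, restricts the arena to the moves allowed by $\tau$, and checks in polynomial time (an SCC-based emptiness test) that every resulting play violates $\streett(\Gamma)$. For $\conp$-hardness I would reduce from the complement of a suitable $\np$-complete problem, encoding variables and clauses by Streett pairs so that Player~$0$ loses precisely when the formula is unsatisfiable. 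The $d!$ memory bound for Player~$0$ comes from an index-appearance-record structure: remembering a permutation of $[d]$ recording the order in which the pairs were most recently touched reduces the game to a parity game whose acceptance is read off the record, and a positional strategy there pulls back to a size-$d!$ strategy.

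For item~(\ref{thm:previouswork-streett:finitary}), I would establish the $\exptime$ upper bound by expanding the arena by a memory component that records the set $S\subseteq[d]$ of currently open requests together with the identity of the request currently being timed; once every open request has been answered the play passes through a distinguished vertex, and as in the finitary parity case a winning Player~$0$ can enforce a uniform bound, so the condition collapses to a B\"uchi/parity flavour. This arena has size $\card{\arena}\cdot d\cdot 2^{d}$, a B\"uchi game of that size is solvable in exponential time, and a positional strategy there yields the $d2^{d}$ memory bound. That Player~$1$ has no finite-state winning strategy in general is the same pumping argument as in Theorem~\ref{thm:previouswork}(\ref{thm:previouswork:finitary}): a finite-state winning strategy for him would realize a uniform bound on the delay, contradicting that he must violate every bound to win. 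The $\exptime$-hardness (the footnoted unpublished result) adapts the $\exptime$-hardness proof for request--response games by simulating an alternating polynomial-space Turing machine, with the finitary condition's ``fixed but arbitrary'' bound playing the role of the response-time budget.

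For item~(\ref{thm:previouswork-streett:cost}), $\exptime$-hardness is immediate, since a finitary Streett game is the special case where every $\cost_c$ assigns $1$ to each edge, so hardness is inherited from item~(\ref{thm:previouswork-streett:finitary}); likewise Player~$1$'s need for infinite memory is inherited. For $\exptime$-membership I would again expand the arena, now tracking for each pair the cost incurred by its oldest open request up to the bound (as in the construction of Section~\ref{sec_pspacemembership}) together with an appearance-record ordering used to decide urgency, so that a play is winning iff, after finitely many overflows, all requests meet the cost budget; this reduces to a B\"uchi game on an arena of size roughly $\card{\arena}\cdot 2^{d}\cdot(2d)!$, whence $\exptime$-membership and, from a positional strategy there, the $2^{d}((2d)!)$ memory bound. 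The main obstacle throughout is not the $\conp$ part but the two $\exptime$ upper bounds, which require exhibiting sound and complete reductions to a polynomially described winning condition on an at-most-exponential arena, and, even more delicately, the $\exptime$-hardness for finitary Streett games, whose simulation of an alternating polynomial-space machine using only the ``there exists a bound'' semantics is exactly what accounts for the jump from $\conp$ to $\exptime$.
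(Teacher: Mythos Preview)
The paper does not prove this theorem at all: it is stated purely as a collection of results from the literature, with citations to \cite{Horn05}, \cite{ChatterjeeHenzingerHorn09}, \cite{ChatterjeeHenzingerHorn11}, and \cite{FZ14}, and no proof environment follows. Your proposal therefore goes strictly beyond what the paper does by attempting to reconstruct the arguments behind the cited results.

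That said, your sketches are broadly faithful to the cited sources. The treatment of item~(\ref{thm:previouswork-streett:streett}) via Rabin/Streett duality, positional Rabin determinacy, and the index-appearance-record construction is exactly the standard route. For item~(\ref{thm:previouswork-streett:finitary}) your outline is in the right spirit, though the construction in \cite{ChatterjeeHenzingerHorn09} is organised somewhat differently from what you describe (it does not literally reduce to a B\"uchi game in one step), and the $\exptime$-hardness is indeed via alternating polynomial-space machines as you say. For item~(\ref{thm:previouswork-streett:cost}) the reduction in \cite{FZ14} targets a parity game rather than a B\"uchi game, and the $2^d((2d)!)$ factor arises from combining a $2^d$ tracking of open requests with a latest-appearance record of size $(2d)!$; your description conflates this slightly but captures the essential idea. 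None of these deviations constitutes a genuine gap, since the theorem is not meant to be proved here in the first place.
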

We define the cost of strategies for Streett games with costs analogously to the parity case.
In contrast to that case, however, we obtain an exponential upper bound on the cost of an optimal strategy for a Street game with costs by using Theorem~\ref{thm:previouswork-streett}(\ref{thm:previouswork-streett:cost}) and applying the same pumping argument as for Corollary~\ref{corollary_costupperbound}.

\begin{cor}
\label{corollary_costupperbound_streett}
Let $\game$ be a Streett game with costs with $n$ vertices and $d$ Streett pairs.
Moreover, let~$W$ be the largest cost in~$\game$.
If Player~$0$ wins $\game$, then she has a strategy~$\sigma$ with $\cost(\sigma) \le nW \cdot 2^d((2d)!)$.
\end{cor}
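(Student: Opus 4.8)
The plan is to derive the bound from a finite-state winning strategy that already exists, by the same kind of pumping argument that underlies Corollary~\ref{corollary_costupperbound}. Suppose Player~$0$ wins $\game=(\arena,\streettc(\Gamma,\cost))$ with $\card{\arena}=n$. By Theorem~\ref{thm:previouswork-streett}(\ref{thm:previouswork-streett:cost}) she then has a finite-state winning strategy $\sigma$ of size $m=2^d((2d)!)$; I would fix a memory structure $\mem$ with $\card{\mem}=m$ implementing $\sigma$ and pass to the expanded arena $\arena'=\arena\times\mem$, which has $N:=nm=n\cdot 2^d((2d)!)$ vertices. Extending each $Q_c$, each $P_c$, and each $\cost_c$ to $\arena'$ via the first component, the map $\rho\mapsto\ext(\rho)$ is a cost- and colour-preserving bijection between plays consistent with $\sigma$ in $\arena$ and plays consistent with the induced positional strategy $\sigma'$ in $\arena'$; in particular it preserves $\streettdist(\cdot,j)$ at every position~$j$. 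Hence it suffices to prove $\cost(\sigma')\le NW$ in $\arena'$.

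Towards this, I would assume for contradiction that some play $\rho$ consistent with $\sigma'$ has $\cost(\rho)=\limsup_{j\to\infty}\streettdist(\rho,j)>NW$. Since $\sigma'$ is winning, $\rho\in\streettc(\Gamma,\cost)$, so all but finitely many requests along $\rho$ are answered, and therefore there are infinitely many positions at which a request is answered with cost greater than $NW$. Fix such a position $j$ together with a pair $c$ with $v_j\in Q_c$ whose request is answered, at the earliest $j'>j$ with $v_{j'}\in P_c$, with $\cost_c(v_j\cdots v_{j'})>NW$. As every edge contributes at most $W$ to $\cost_c$, the infix $v_j\cdots v_{j'}$ traverses more than $N$ edges of positive $\cost_c$-cost, while visiting no vertex of $P_c$ strictly between $j$ and $j'$. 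Applying the pigeonhole principle to the more than $N$ vertices of $\arena'$ entered immediately after these edges yields two that coincide, so $\ext(\rho)$ contains a cycle $\pi'$ (a closed walk when projected to $\arena$) lying inside the interval $(j,j')$, avoiding $P_c$, and traversing at least one $\cost_c$-increment edge; moreover $\pi'$ is reached by a prefix of $\rho$ along which the request for $c$ opened at $j$ is still pending.

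Following this prefix and then repeating $\pi'$ forever gives a play consistent with $\sigma'$ along which the request for $c$ is never answered although infinitely many $\cost_c$-increment edges are traversed. The main point to be careful about — and the only step in the proof of Corollary~\ref{corollary_costupperbound} that is not purely routine — is that a single pending request does not by itself falsify $\streettc(\Gamma,\cost)$; I would handle this by inserting such a loop not just once but at infinitely many of the high-cost positions provided by the assumption $\cost(\rho)>NW$, which produces a play consistent with $\sigma'$ on which $\streettdist$ is unbounded, i.e.\ a play outside $\streettc(\Gamma,\cost)$. This contradicts that $\sigma'$ is winning, so $\cost(\sigma')\le NW$, and consequently the size-$m$ winning strategy $\sigma$ satisfies $\cost(\sigma)\le nW\cdot 2^d((2d)!)$, as required.
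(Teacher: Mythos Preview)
Your proposal is correct and follows exactly the approach the paper sketches: take the finite-state winning strategy of size $2^d((2d)!)$ from Theorem~\ref{thm:previouswork-streett}(\ref{thm:previouswork-streett:cost}), pass to the product arena where it becomes positional, and run the pumping argument of Corollary~\ref{corollary_costupperbound} there. You have moreover spelled out, and correctly handled, the one point the paper leaves implicit---that pumping a single cycle may leave only finitely many unanswered requests and hence need not violate $\streettc(\Gamma,\cost)$, so one must pump at infinitely many of the high-cost windows (with growing repetition counts) to force $\streettdist$ to be unbounded.
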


Similarly to the case of parity games with costs, this bound is tight.
The games demonstrating the lower bound are, however, no longer trivial.
We adapt the games used in~\cite{ChatterjeeHenzingerHorn11} for demonstrating the necessity for exponential memory for Player~$0$ in request-response games in order to show this bound.

\begin{thm}
For each~$d \geq 0$, there exists a finitary Streett game~$\game_d$ with~$\bigo(d)$ vertices and~$d + 1$ Streett pairs, such that Player~$0$ has a strategy with cost~$b = 3 (2^d - 1) + 2$, but no strategy with cost less than~$b$.
\end{thm}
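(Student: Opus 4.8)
The plan is to adapt the request-response construction of Chatterjee, Henzinger, and Horn~\cite{ChatterjeeHenzingerHorn11} that forces Player~$0$ to implement a binary counter, turning the exponential memory requirement into an exponential lower bound on the cost of an optimal strategy. Concretely, I would build $\game_d$ from a central hub vertex~$\vinit$ and $d$ copies of a small ``level gadget'' $G_1, \dots, G_d$, each of constant size, wired so that the reachable configurations of the gadgets correspond bijectively to the $2^d$ values of a $d$-bit counter. There are $d+1$ Streett pairs: pair~$c \in \set{1,\dots,d}$ is requested at~$\vinit$ (so $\vinit \in Q_c$ for all~$c$) and can be answered only inside the level-$c$ gadget when it is entered in the ``carry'' configuration, while the auxiliary pair~$0$ forces the play to return to~$\vinit$ infinitely often, so that it suffices to reason about finite infixes that start and end in~$\vinit$. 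Player~$1$ owns the choice at~$\vinit$ of which of the $d$ level requests to (re)open; Player~$0$ owns the moves inside the gadgets that advance or reset the counter. Each traversal of a level gadget incurs cost~$3$ (realised by subdividing three increment-edges, which keeps the arena of size~$\bigo(d)$), and the edge that resets the counter and returns to~$\vinit$ incurs cost~$2$.

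For the upper bound I would exhibit the strategy~$\sigma$ for Player~$0$ that deterministically drives the counter through all $2^d$ values in the standard ``ruler'' order $G_1, G_2, G_1, G_3, G_1, G_2, G_1, \dots$, i.e.\ the $i$-th gadget visited is determined by the number of trailing zeros of~$i$. A routine induction on~$d$ shows that along any play consistent with~$\sigma$ the full cycle from~$\vinit$ back to~$\vinit$ visits exactly $2^d - 1$ level gadgets before the reset edge, so it has cost $3(2^d-1)+2$; moreover every level-$c$ request posed at~$\vinit$ is answered before the counter next reaches its ``all lower bits set'' configuration, which occurs within the same cycle, hence with cost at most $3(2^d-1)+2$. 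Taking the supremum over all plays consistent with~$\sigma$ (the worst case being Player~$1$ reopening the level-$d$ request at every visit to~$\vinit$) yields $\cost(\sigma) = 3(2^d-1)+2 = b$.

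For the matching lower bound I would show that no strategy for Player~$0$, of any size, has cost below~$b$. The combinatorial core is the claim that, starting from~$\vinit$ with all $d$ level requests open, Player~$0$ must traverse at least $2^d - 1$ level gadgets and then the reset edge before all requests are answered: entering~$G_d$ in the configuration that answers request~$d$ is possible only after the lower $d-1$ levels have cycled through all $2^{d-1}$ of their configurations, which by induction forces $2^{d-1}-1$ gadget traversals before and after, plus the visit to~$G_d$ itself, giving $2(2^{d-1}-1)+1 = 2^d-1$. Since the auxiliary pair~$0$ forces infinitely many returns to~$\vinit$, and at each return Player~$1$ reopens all $d$ requests, the value~$\streettdist(\rho, j)$ is at least $3(2^d-1)+2$ for infinitely many positions~$j$ of the resulting play~$\rho$, so $\limsup_j \streettdist(\rho,j) \geq b$ and therefore $\cost(\sigma) \geq b$. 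The base case $d = 0$ is the trivial game with $b = 2$.

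The step I expect to be the main obstacle is the lower bound, and within it the precise arena design: the level gadgets and their wiring must be arranged so that (i) the counter configurations are in genuine bijection with the values $0,\dots,2^d-1$ and, crucially, so that the \emph{only} route available to Player~$0$ for clearing a level-$c$ request really does force a complete traversal of the level-$(<c)$ sub-arena --- otherwise Player~$0$ could short-circuit the exponential blow-up --- and (ii) Player~$1$ can force the worst case to recur infinitely often rather than just finitely, so that it bears on the $\limsup$ rather than being absorbed into a finite prefix. Verifying the recursion that yields the factor~$3$ per level and the additive~$2$ for the reset, while simultaneously respecting the $\bigo(d)$ vertex bound and the single extra Streett pair, is the delicate part; the upper bound and the determinacy and finiteness bookkeeping are routine once the arena is fixed.
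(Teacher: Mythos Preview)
Your high-level picture (binary counter, ruler sequence, cost $3(2^d-1)+2$) matches the paper, but the arena you sketch has a genuine gap in the role assignment and in the mechanism that forces the exponential traversal.

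You write that $\vinit \in Q_c$ for all $c$, yet also that Player~$1$ ``owns the choice at~$\vinit$ of which of the $d$ level requests to (re)open''; these are inconsistent. More importantly, with Player~$1$ choosing at the hub and Player~$0$ steering inside the gadgets, neither bound goes through. If Player~$1$ controls which gadget is entered, he will simply never enter $G_d$, so request~$d$ stays open with cost~$\infty$ and the upper bound fails. If instead Player~$0$ chooses the gadget, then with constant-size stateless gadgets in an $\bigo(d)$-vertex arena there is no structural obstruction preventing her from walking straight into $G_c$ and answering request~$c$ immediately; your ``configurations'' would have to be encoded in the arena, costing $2^d$ vertices. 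You correctly flag (i) as the delicate point, but the design you describe does not satisfy it, and you have not proposed a mechanism that does.

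The paper resolves this by encoding the counter in the \emph{set of open requests} rather than in the arena, and by giving Player~$1$ a verification role rather than a request-choosing role. All $d+1$ requests are opened simultaneously at a vertex~$Q$; from the hub~$m$ Player~$0$ chooses a branch~$c$, thereby \emph{claiming} that $c$ is the smallest open request. The first vertex~$P_c$ of the branch answers request~$c$; then Player~$1$ either accepts the claim by moving to a vertex that reopens all requests $c' < c$ and returns to~$m$, or rejects it by moving to a trap vertex~$\overline{c}$ that answers all requests $c'' > c$ and has a self-loop. If Player~$0$'s claim was false (some $c' < c$ is still open), Player~$1$ can sit in~$\overline{c}$ and drive the cost of request~$c'$ past any bound before restarting the round. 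This punishment is what forces Player~$0$ to always pick the genuinely smallest open request, which in turn forces the ruler sequence and the $2^d-1$ branch visits before request~$d$ is answered. The missing idea in your sketch is precisely this verify-and-punish gadget on Player~$1$'s side; without it there is no way to enforce the exponential lower bound in an $\bigo(d)$-vertex arena.
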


\begin{proof}
	Figure~\ref{fig:streett-lower-bound} shows the game~$\game_3$.
	In general, the game~$\game_d$ consists of an initializing prefix, i.e., the vertices~$P$ and~$Q$, a central vertex~$m$, and one branch for each of the~$d+1$ Streett pairs, which can be entered from~$m$.
	Each branch consists of one path leading back to the central vertex as well as one path restarting the game via moving to the initializing prefix.
	
	The vertices~$P$ and~$Q$ answer and open all~$d+1$ requests, respectively, while a visit to~$m$ neither opens nor answers any requests.
	When moving into the branch associated with condition~$c$, the initial visit to~$P_c$ answers the request for condition~$c$.
	Afterwards, visiting~$c$ or~$\overline{c}$ opens requests for all conditions~$c' < c$ and answers all requests for conditions~$c' > c$, respectively.
	\begin{figure}
	\centering
	\begin{tikzpicture}[thick,yscale=2,xscale=2]
	
		\tikzset{
			prefixnode/.style={fill=myred},
			middlenode/.style={fill=myyellow},
			bladenode/.style={fill=myblue}
		}
		
		\node[p0,prefixnode] (entry) at (0,-1.25) {$P$};
		\node[p0,prefixnode] (request) at (0,-.625) {$Q$};
		\node[p0,middlenode] (center) at (0,0) {$m$};
		
		\path
			($(entry) - (0,.5)$) edge (entry)
			(entry) edge (request)
			(request) edge (center);
		
		\begin{scope}[shift={(1,-.5)}]
			\node[p1,bladenode] (blade-1-start) at (0,0) {$P_0$};
			\node[p1,bladenode] (blade-1-trap) at (1,0) {$\overline{0}$};
			\node[p0,bladenode] (blade-1-end) at (2,0) {$0$};
			
			\path
				(center) edge (blade-1-start)
				(blade-1-start) edge (blade-1-trap) edge [bend left=20] (blade-1-end)
				(blade-1-trap) edge [loop below,looseness=8] (blade-1-trap) edge [bend left=5] (entry)
				(blade-1-end) edge[out=150,in=-5] (center);
		\end{scope}
		
		\begin{scope}[shift={(1,.5)}]
			\node[p1,bladenode] (blade-2-start) at (0,0) {$P_1$};
			\node[p1,bladenode] (blade-2-trap) at (1,0) {$\overline{1}$};
			\node[p0,bladenode] (blade-2-end) at (2,0) {$1$};
			
			\path
				(center) edge (blade-2-start)
				(blade-2-start) edge (blade-2-trap) edge [bend right=20] (blade-2-end)
				(blade-2-trap) edge [loop above,looseness=8] (blade-2-trap)
				(blade-2-end) edge[out=-150,in=5] (center);
				
			\path[draw,thick,->,rounded corners]
				(blade-2-trap) -- ($(blade-2-trap) + (.5,.25)$) -| ($(center) + (3.5,0)$) |- (entry);
		\end{scope}
		
		\begin{scope}[shift={(-1,.5)}]
			\node[p1,bladenode] (blade-3-start) at (0,0) {$P_2$};
			\node[p1,bladenode] (blade-3-trap) at (-1,0) {$\overline{2}$};
			\node[p0,bladenode] (blade-3-end) at (-2,0) {$2$};
			
			\path
				(center) edge (blade-3-start)
				(blade-3-start) edge (blade-3-trap) edge [bend left=20] (blade-3-end)
				(blade-3-trap) edge [loop above,looseness=8] (blade-3-trap)
				(blade-3-end) edge[out=-30,in=-185] (center);
				
			\path[draw,thick,->,rounded corners]
				(blade-3-trap) -- ($(blade-3-trap) + (-.5,.25)$) -| ($(center) - (3.5,0)$) |- (entry);
		\end{scope}
		
		\begin{scope}[shift={(-1,-.5)}]
			\node[p1,bladenode] (blade-4-start) at (0,0) {$P_3$};
			\node[p1,bladenode] (blade-4-trap) at (-1,0) {$\overline{3}$};
			\node[p0,bladenode] (blade-4-end) at (-2,0) {$3$};
			
			\path
				(center) edge (blade-4-start)
				(blade-4-start) edge (blade-4-trap) edge [bend right=20] (blade-4-end)
				(blade-4-trap) edge [loop below,looseness=8] (blade-4-trap) edge [bend right=5] (entry)
				(blade-4-end) edge[out=30,in=185] (center);
		\end{scope}
	\end{tikzpicture}
	
	\caption{The game~$\game_3$.}
	\label{fig:streett-lower-bound}	
	\end{figure}
	
	Intuitively, by moving to~$P_c$, Player~$0$ claims that~$c$ is the smallest index for which a request is open and answers the request for that condition.
	If this claim holds true, the best choice for Player~$1$ is to move to~$c$, where requests for all conditions~$c' < c$ are opened and the play returns to~$m$.
	If, on the other hand, this claim does not hold true, then Player~$1$ can move to~$\overline{c}$, where all requests for conditions~$c' > c$ are answered and Player~$1$ can increase the cost of the remaining requests arbitrarily before moving to~$P$ and thereby starting the next round.
	Note that staying in~$\overline{c}$ infinitely long is losing for Player~$1$, as no requests are opened in that vertex and thus, the cost of the resulting play is~$0$.
	
\makeatletter
\newcommand*{\shifttext}[2]{%
  \settowidth{\@tempdima}{#2}%
  \makebox[\@tempdima]{\hspace*{#1}#2}%
}
\makeatother
	
	The optimal strategy for Player~$0$ implements a binary counter: Whenever the play reaches~$m$, she has to recall the smallest~$c$ for which there is an open request and move to~$P_c$.
	Since moving to~$P_c$ and returning to~$m$ via~$c$ reopens requests for all conditions~$c' < c$, Player~$0$ has to repeat answering these requests from smallest to largest before she can answer the outstanding request for condition~$c+1$.
	Figure~\ref{fig:street-lower-bound:example} shows the cost of requests during a play consistent with this strategy in~$\game_3$.
	\begin{figure}
		\begin{tabular}{l*{17}{c}} \toprule
			Branch & & \shifttext{-1.5em}{0} & \shifttext{-1.5em}{1} & \shifttext{-1.5em}{0} & \shifttext{-1.5em}{2} & \shifttext{-1.5em}{0} & \shifttext{-1.5em}{1} & \shifttext{-1.5em}{0} & \shifttext{-1.5em}{3} & \shifttext{-1.5em}{0} & \shifttext{-1.5em}{1} & \shifttext{-1.5em}{0} & \shifttext{-1.5em}{2} & \shifttext{-1.5em}{0} & \shifttext{-1.5em}{1} & \shifttext{-1.5em}{0} & $\cdots$ \\ \midrule
			$r(0)$ & $1$ & $\bot$ & $1$ & $\bot$ & $1$ & $\bot$ & $1$ & $\bot$ & $1$ & $\bot$ & $1$ & $\bot$ & $1$ & $\bot$ & $1$ & $\bot$ & $\cdots$ \\
			$r(1)$ & $1$ & $4$ & $\bot$ & $\bot$ & $1$ & $4$ & $\bot$ & $\bot$ & $1$ & $4$ & $\bot$ & $\bot$ & $1$ & $4$ & $\bot$ & $\bot$ & $\cdots$ \\
			$r(2)$ & $1$ & $4$ & $7$ & $10$ & $\bot$ & $\bot$ & $\bot$ & $\bot$ & $1$ & $4$ & $7$ & $10$ & $\bot$ & $\bot$ & $\bot$ & $\bot$ & $\cdots$ \\
			$r(3)$ & $1$ & $4$ & $7$ & $10$ & $13$ & $16$ & $19$ & $22$ & $\bot$ & $\bot$ & $\bot$ & $\bot$ & $\bot$ & $\bot$ & $\bot$ & $\bot$ & $\cdots$ \\ \bottomrule
		\end{tabular}
		\caption{An optimal play for Player~$0$ in~$\game_3$. We only consider the positions at which the play is at vertex~$m$. The topmost row denotes the branch visited in-between two visits to that vertex.}
		\label{fig:street-lower-bound:example}
	\end{figure}
	
	Note that, in this example, after answering the request for~$3$, Player~$0$ continues implementing a binary counter until no more requests are open.
	At this point, Player~$0$ may choose an arbitrary branch.
	If Player~$1$ then moves to the trap-vertex and subsequently to~$P$, he starts the next round of the game.
	If he, however, returns to~$m$, Player~$0$ obtains new open requests which she has to answer as she did before.
	Since no request for~$3$ can be opened without moving to~$P$, not moving on to the next round will not yield higher costs than doing so.
	
	As illustrated in Figure~\ref{fig:street-lower-bound:example}, answering the request for condition~$d$ posed at the beginning of each round requires~$2^d-1$ many visits to branches, as well as an additional step into the first vertex of the branch of condition~$d$.
	As each of the visits to the branches implies the traversal of three edges, this request is answered~$b$ steps after it is posed.
	Hence, the strategy implementing a binary counter has a cost of~$b = 3 (2^d - 1) + 2$.
	
	Moreover, as argued above, in each round in which Player~$0$ deviates from this strategy, Player~$1$ can move the play into the \myquot{trap}-vertex~$\overline{c}$ of the current branch upon the first deviation, where he can loop until the cost of an open request increases beyond~$b$, before he moves to the next round.
	Hence, Player~$0$ has to adhere to this strategy after finitely many rounds, i.e., each strategy for her has cost at least~$b$.
\end{proof}

\subsection{The Complexity of Solving Streett Games with Costs Optimally}

We now show that solving Streett games with costs with respect to a given bound~$b$ is \exptime-complete.
Since solving finitary Streett games is complete for the same complexity class, and since an exponential~$b$ suffices for Player~$0$ to win in such games due to Corollary~\ref{corollary_costupperbound_streett}, we can encode the problem of solving a finitary Streett game as the given problem with only a linear blowup, due to binary encoding of~$b$.
Hence, the latter problem is \exptime-hard.

In order to show membership of the given problem in \exptime, we reduce it to that of solving a classical Streett game with exponentially many vertices, but with only a single additional Streett pair.
Since Streett games with~$n$ vertices and~$d$ Streett pairs can be solved in time~$\bigo(nd(d!))$ using the algorithm from Piterman and Pnueli~\cite{PitermanPnueli06}, this construction yields \exptime-membership of the given problem.

\begin{thm}
\label{thm:streett:complexity:completeness}
The following problem is $\exptime$-complete: \myquot{Given a Streett game with costs~$\game$ and a bound~$b \in \nats$ in binary encoding, does Player~$0$ have a strategy~$\sigma$ for $\game$ with $\cost(\sigma) \le b$?}
\end{thm}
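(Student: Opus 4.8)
The proof splits into hardness, which is immediate, and membership, which adapts the construction of the extended game from the proof of Lemma~\ref{lem:cost-parity-to-parity}.

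For hardness, the plan is to reduce solving finitary Streett games --- $\exptime$-complete by Theorem~\ref{thm:previouswork-streett}(\ref{thm:previouswork-streett:finitary}), and $\exptime$-hard already in that restricted form --- to the given problem. Given a finitary Streett game $\game$ with $n$ vertices and $d$ Streett pairs (so its largest cost is $W = 1$), I would output the instance $(\game, b)$ with $b := n \cdot 2^d ((2d)!)$, the bound from Corollary~\ref{corollary_costupperbound_streett}. This $b$ has $\bigo(d\log d + \log n)$ bits in binary, so the reduction runs in polynomial time. It is correct because Player~$0$ wins $\game$ if and only if she has a strategy $\sigma$ with $\cost(\sigma) \le b$: the forward direction is exactly Corollary~\ref{corollary_costupperbound_streett}, and conversely any strategy of finite cost is a winning strategy for the finitary Streett condition.

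For membership, I would not build a finite-duration game as in the $\pspace$ proof; since the natural extended game already has exponential size and exponential-size Streett games are solvable in exponential time, it suffices to reduce to a classical Streett game. First, exactly as in Section~\ref{sec:concrete}, one may assume $b < n \cdot W \cdot 2^d ((2d)!)$: for larger $b$ the question reduces, via Corollary~\ref{corollary_costupperbound_streett}, to deciding whether Player~$0$ wins $\game$, which is in $\exptime$ by Theorem~\ref{thm:previouswork-streett}(\ref{thm:previouswork-streett:cost}); and under this assumption $\log b$ is polynomial in $\card{\game}$. Then I would define the classical Streett game $\extgame = (\arena \times \mem, \streett(\Gamma'))$ mimicking the memory structure of Lemma~\ref{lem:cost-parity-to-parity}: the memory set is $[n+1] \times R$ with $R = (\set{\bot} \cup [b+1])^{[d]}$, where the $R$-component tracks, for each Streett pair $c$, the cost so far of the oldest open request of pair $c$ (capped at $b+1$), and the $[n+1]$-component is an overflow counter incremented up to $n$ whenever some request would exceed $b$, at which point all request costs are reset. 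The pair family $\Gamma'$ consists of the $d$ lifted pairs $(Q_c \times \mem,\, P_c \times \mem)$ together with one overflow pair $(Q_{\mathrm{ovf}}, \emptyset)$ with $Q_{\mathrm{ovf}} = \set{(v,o,r) \mid o = n}$, so that saturating the overflow counter --- which traps the play in a sink --- is losing for Player~$0$. Since $\log b$ is polynomial, $\card{\arena \times \mem} = n(n+1)(b+2)^d$ is exponential in $\card{\game}$, and $\extgame$ has $d+1$ Streett pairs. The key lemma to prove is the Streett analogue of Lemma~\ref{lem:cost-parity-to-parity}: Player~$0$ has a strategy of cost at most $b$ in $\game$ if and only if she wins $\extgame$; I expect its proof to copy that of Lemma~\ref{lem:cost-parity-to-parity} essentially verbatim, with the lifted pairs playing the role of the parity coloring $\col'$ (they exclude plays that violate the Streett condition without ever overflowing) and with determinacy of Borel --- hence Streett --- games used in the Player~$1$ direction, where Player~$1$'s positional winning strategy on $\extgame$ is transferred to $\game$ by the same ``reset the overflow counter to the least value from which Player~$1$ still wins'' memory. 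Finally, solving $\extgame$ with the Piterman--Pnueli algorithm~\cite{PitermanPnueli06} takes time $\bigo(\card{\arena \times \mem}\cdot(d+1)\cdot((d+1)!))$, which is in $\exptime$; combined with the correctness lemma this places the problem in $\exptime$.

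The main obstacle is the correctness lemma, specifically its Player~$1$ direction: one must show that the ``overflow-reset'' finite-state strategy derived from Player~$1$'s winning strategy in $\extgame$ is well defined and defeats every cost-$\le b$ strategy of Player~$0$ in $\game$, and, as part of this, that reachability of the value $o = n$ in $\extgame$ genuinely certifies that Player~$1$ can force infinitely many overflows in $\game$ --- this is where the pumping argument and the choice of the bound $n$ on the overflow counter enter (it works because the memory state at an overflow position is determined by the current vertex, so after $n$ overflow positions two coincide and yield a lasso). A secondary, purely routine, nuisance is that in the Streett setting a single vertex may open several requests at once, so the update function must be phrased carefully over all pairs $c$ with $v \in Q_c$ respectively $v \in P_c$; once the template of Lemma~\ref{lem:cost-parity-to-parity} is in hand this is mechanical. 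Everything else --- the reduction to the case $b < nW \cdot 2^d((2d)!)$, the size bound on $\extgame$, and the appeal to Piterman--Pnueli --- is immediate.
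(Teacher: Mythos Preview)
Your proposal is correct and follows essentially the same approach as the paper: hardness via the reduction from finitary Streett games with the bound~$b = nW\cdot 2^d((2d)!)$ from Corollary~\ref{corollary_costupperbound_streett}, and membership by first disposing of large~$b$ via the same corollary and then building the product with the memory structure~$[n+1]\times(\{\bot\}\cup[b+1])^{[d]}$, adding the single overflow Streett pair, invoking the Streett analogue of Lemma~\ref{lem:cost-parity-to-parity}, and solving the resulting exponential-size classical Streett game with Piterman--Pnueli. The paper likewise defers the correctness lemma to ``the same arguments as in the proof of Lemma~\ref{lem:cost-parity-to-parity}'', so your identification of the Player~$1$ direction (with its pumping on overflow positions) as the only non-mechanical step is apt.
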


\begin{proof}

$\exptime$-hardness of this problem follows from the $\exptime$-hardness of deciding the winner in a finitary Streett game~\cite{FZ14} as argued above.

We now show the given problem to be in $\exptime$.
The idea for this proof is the same as for the proof of Lemma~\ref{lemma_pspacemembership}, i.e., we reduce the problem to solving a classical Streett game $\game'$.
Instead of simulating $\game'$ on the fly, however, we construct and solve it explicitly.

Let $\game = (\arena, \streettc(\Gamma, \cost))$ be a Streett game with costs with $n$ vertices and $d$ Streett pairs.
Moreover, let~$W$ be the largest cost assigned by any~$\cost_c$ in~$\cost$.
If $b \geq nW \cdot 2^d((2d)!)$, then we construct~$\game' = (\arena, \streettc(\Gamma, \cost'))$ with~$\cost'(e) = \eps$, if $\cost(e) = 0$ and $\cost'(e) = \inc$ otherwise and solve $\game'$ using an exponential-time algorithm~\cite{FZ14}.
If Player~$0$ can ensure some upper bound on the cost incurred in~$\game'$, then she can also do so in~$\game$ using the same strategy.
Thus, by Corollary~\ref{corollary_costupperbound_streett}, she can bound the cost from above by~$b$.
Similarly, if Player~$1$ wins~$\game'$, then he can still do so in~$\game$ using the same strategy.
Hence, solving~$\game'$ solves~$\game$ with respect to~$b$.

Thus, assume $b < nW \cdot 2^d ((2d)!)$.
For the reduction of $\game$ to a Streett game we again use a memory structure that keeps track of the costs of responses accumulated so far, up to the bound $b$, while allowing this counter to overflow at most~$n$ times.
Hence, let $\mem = (M, \init, \update)$ be the memory structure with memory states $M = [n+1] \times (\set{\bot} \cup [b+1])^{[d]}$, where we define~$\init$ and~$\update$ analogously to the parity case.
Note that $\mem$ is of size $\card{\mem} = (n + 1) \cdot (b + 2)^d$.

Also, given $\Gamma = (Q_c, P_c)_{c \in [d]}$, we construct $\Gamma' = (Q'_c, P'_c)_{c \in [d + 1]}$ as $(Q'_c, P'_c) = (Q_c \times M, P_c \times M)$ for $c \in [d]$, and $(Q'_d, P'_d) = (V \times (\set{\bot} \cup [b+1])^{[d]} \times \set{n},\emptyset)$.
Thus, $\Gamma'$ is the conjunction of the extension of $\Gamma$ to the game $\game \times \mem$ and one additional Streett pair which causes Player~$0$ to lose once the overflow counter reaches the value~$n$.
Hence~$\card{\Gamma'} = \card{\Gamma} + 1$.

We define $\game' = (\arena \times \mem, \streett(\Gamma'))$.
Player~$0$ wins $\game'$ if and only if she has a strategy with cost at most~$b$ in~$\game$, due to the same arguments as in the proof of Lemma~\ref{lem:cost-parity-to-parity}.

Moreover, using the algorithm presented in \cite{PitermanPnueli06}, we can solve $\game'$ in exponential time in $\card{\game}$.
The Streett game $\game'$ has $n' \in \bigo(n^2 b^d)$ vertices.
Since we assume $b \leq nW \cdot 2^d((2d)!)$, we obtain
	$n' \in \bigo(( nW \cdot 2^d((2d)!) ) ^d)$.
Moreover, $\game'$ has $d' = d+1$ many Streett pairs.

As discussed above, by using the algorithm for solving Streett games by Piterman and Pnueli~\cite{PitermanPnueli06}, we obtain an algorithm that is polynomial in the number of vertices~$n$ of~$\game$, while it is exponential in the number of Streett pairs~$d$ and~$\log W$.
Hence, the given problem is in \exptime.
\end{proof}

This result also holds true if the bound~$b$ is given in unary encoding.
As every number in unary encoding can be rewritten in binary encoding in polynomial time, membership in \exptime\ follows directly.
Moreover, recall that \exptime-hardness of the problem of Theorem~\ref{thm:streett:complexity:completeness} follows from \exptime-hardness of solving finitary Streett games.
This problem is in turn shown to be \exptime-hard via a reduction from the word problem for polynomially time-bounded alternating Turing machines.
A minor modification of that proof yields that a polynomial bound suffices for Player~$0$ in order to win the resulting finitary parity game.
Hence, the problem is still \exptime-hard when considering a unary encoding of the bound~$b$.

\subsection{Memory Requirements of Optimal Strategies in Streett Games with Costs}

As we have shown in the previous section, Streett games with costs can be solved by reducing them to classical Streett games of exponential size, but with only linearly many Streett pairs.
Similarly to Corollary~\ref{cor:parity:memory:upper-bound}, we obtain an exponential upper bound on the memory necessary for both players to win a Streett game with costs with respect to a given bound as a corollary of Theorems~\ref{thm:previouswork-streett}(\ref{thm:previouswork-streett:streett}) and~\ref{thm:streett:complexity:completeness}.

Also, similarly to Corollary~\ref{cor:parity:memory:upper-bound}, we are able to remove the overflow counter from our memory structure for Player~$0$ by letting her play according to the largest value of this counter for which she still has a winning strategy.
This yields an improved upper bound on her required memory.
Moreover, we obtain matching lower bounds for both players as a corollary of Theorems~\ref{thm:memory-upper-p0} and~\ref{thm:memory-upper-p1}, as every finitary parity game with costs with~$d$ colors can be turned into a finitary Streett game of the same size with~$d$ Streett pairs.

\begin{cor}
	\label{cor:streett:memory:upper-bound}
	Let~$\game$ be a Streett game with costs with~$n$ vertices and~$d$ Streett pairs.
	Moreover, let~$b \in \nats$ be some bound.
	\begin{enumerate}[beginpenalty=10000]
		\item If Player~$0$ has a strategy~$\sigma$ in~$\game$ with $\cost(\sigma) \leq b$, then she also has a strategy~$\sigma'$ with~$\cost(\sigma') \leq b$ and~$\card{\sigma'} = (d+1)! \cdot (b + 2)^d$.
		\item If Player~$1$ has a strategy~$\tau$ in~$\game$ with $\cost(\tau) \geq b$, then he also has a strategy~$\tau'$ with~$\cost(\tau') \geq b$ and~$\card{\tau'} = n(b + 2)^d$.
	\end{enumerate}
	These bounds are asymptotically tight already for finitary Streett conditions.
\end{cor}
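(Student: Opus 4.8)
The plan is to rerun the argument behind Corollary~\ref{cor:parity:memory:upper-bound}, this time over the classical Streett game from the proof of Theorem~\ref{thm:streett:complexity:completeness} rather than over the parity game $\extgame$ of Lemma~\ref{lemma_pspacemembership}. Fix a Streett game with costs $\game = (\arena, \streettc(\Gamma, \cost))$ with $n$ vertices, $d$ Streett pairs, and a bound $b$. Recall from that proof the memory structure $\mem = (M,\init,\update)$ with $M = [n+1] \times (\set{\bot}\cup[b+1])^{[d]}$, which tracks the accumulated cost of each open request up to $b$ and counts overflows up to $n$, so that $\card{\mem} = (n+1)(b+2)^d$, together with the classical Streett game $\game' = (\arena\times\mem, \streett(\Gamma'))$ whose family $\Gamma'$ consists of the $d$ lifted pairs plus one extra pair making Player~$0$ lose as soon as the overflow counter saturates, so that $\card{\Gamma'} = d+1$. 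As established there, by the argument of Lemma~\ref{lem:cost-parity-to-parity}, Player~$0$ has a strategy of cost at most $b$ in $\game$ iff she wins $\game'$; dually, since Streett games are determined, Player~$1$ has a strategy of cost at least $b$ in $\game$ iff Player~$0$ does not win the analogous game built for the bound $b-1$, i.e.\ iff Player~$1$ wins that game. Both items then follow by transporting a small winning strategy on the classical Streett game back through $\mem$.

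The Player~$1$ half is immediate. By Theorem~\ref{thm:previouswork-streett}(\ref{thm:previouswork-streett:streett}) he has a \emph{uniform positional} winning strategy on the classical Streett game in question, and any vertex whose overflow component equals $n$ is already a sink won by him, so he needs no memory there; composing the positional strategy with the non-sink part of $\mem$ yields a finite-state strategy $\tau'$ in $\game$ with $\cost(\tau') \ge b$ and $\card{\tau'} = n(b+2)^d$.

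For the Player~$0$ half I again invoke Theorem~\ref{thm:previouswork-streett}(\ref{thm:previouswork-streett:streett}): a winning strategy on a classical Streett game with $d'$ pairs can be implemented with $d'!$ memory states, so from $\game'$, which has $d+1$ pairs, I obtain a winning strategy carried by a memory structure of size $(d+1)!$ over $\arena\times\mem$, and hence, composing with $\mem$, a cost-at-most-$b$ strategy in $\game$ of size $\card{\mem}\cdot(d+1)! = (n+1)(b+2)^d(d+1)!$. To delete the overflow counter I use precisely the device behind Corollary~\ref{cor:parity:memory:upper-bound}, taken from \cite{FZ14}: rather than recording how many overflows have already occurred, Player~$0$ fixes, for each configuration she may reach, the largest overflow value from which she still wins $\game'$, plays according to her $\game'$-winning strategy at that virtual configuration, and one checks as in \cite{FZ14} that this still has cost at most $b$; the counter component then drops out of the memory, leaving a strategy of size $(d+1)!(b+2)^d$, as claimed.

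I expect the overflow-counter elimination for Player~$0$ to be the one delicate point. In the parity setting the underlying winning strategy on $\extgame$ is positional, whereas here the winning strategy on $\game'$ already carries $(d+1)!$ memory states that may genuinely depend on the overflow component, so one must verify that a single ``worst admissible'' overflow value can be substituted uniformly without disturbing the simulation of that strategy; the \cite{FZ14} argument has to be re-read and adapted in this slightly more general form rather than quoted verbatim. The closing claim of asymptotic tightness for finitary Streett conditions costs nothing extra: a finitary parity game with costs and $d$ odd colors is verbatim a finitary Streett game of the same size with $d$ pairs, so the family of Theorem~\ref{thm:memory-upper-p0} yields finitary Streett games in which Player~$0$ needs $2^{d-1}$ memory to meet her optimal bound and the family of Theorem~\ref{thm:memory-upper-p1} yields ones in which Player~$1$ needs exponentially many memory states, so neither of the exponential upper bounds $(d+1)!(b+2)^d$ and $n(b+2)^d$ can be replaced by a subexponential one.
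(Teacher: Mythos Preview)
Your approach is essentially the paper's: the paper states this corollary without a formal proof, deriving it in the preceding paragraph from Theorem~\ref{thm:previouswork-streett}(\ref{thm:previouswork-streett:streett}), Theorem~\ref{thm:streett:complexity:completeness}, the overflow-counter elimination of Corollary~\ref{cor:parity:memory:upper-bound}, and the lower bounds of Theorems~\ref{thm:memory-upper-p0} and~\ref{thm:memory-upper-p1} via the parity-to-Streett translation---exactly the ingredients you assemble, and you correctly flag the Player~$0$ counter removal as the point needing care.

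There is, however, a small gap in your Player~$1$ argument. Writing ``compose the positional strategy with the non-sink part of~$\mem$'' does not by itself yield a strategy of cost at least~$b$ in~$\game$: once the overflow counter would saturate, you have not said what the memory does, and simply capping it leaves Player~$1$ following~$\tau'$ at a configuration from which~$\tau'$ may only be steering towards the sink of~$\game'$, which is useless back in~$\game$. What is actually needed is the reset mechanism from the second half of the proof of Lemma~\ref{lem:cost-parity-to-parity}: whenever an overflow is about to occur, replace the counter value by the \emph{smallest}~$o_{v'}$ for which $(v',o_{v'},r_{v'})$ is still reachable under~$\tau'$, and continue playing~$\tau'$ from there. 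The pigeonhole argument in that proof then shows the counter never reaches~$n$, so the effective state space is $[n]\times R$, giving the $n(b+2)^d$ bound and guaranteeing infinitely many excesses of the bound in~$\game$. The paper gets this implicitly by pointing back to Corollary~\ref{cor:parity:memory:upper-bound} (hence to Lemma~\ref{lem:cost-parity-to-parity}); you should cite that construction rather than the bare composition.
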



\section{Conclusion}
\label{sec:conc}

In this work we have shown that playing parity games with costs optimally is harder than just winning them, both in terms of computational complexity as well as in terms of memory requirements of strategies. 
We proved checking an upper bound on the value of an optimal strategy to be complete for polynomial space, while just solving such games is in~$\up \cap \coup$, respectively in~$\ptime$ for the special case of finitary parity games.
Moreover, we have shown that optimal strategies in general require exponential memory, but also that exponential memory is always sufficient to implement optimal strategies.
In contrast, winning strategies in these games are positional.
Finally, we have shown that, in general, there exists a gradual tradeoff between the size and the cost of strategies.

Also, we considered Streett games with costs: checking an upper bound on the cost of an optimal strategy is $\exptime$-complete and exponential memory is sufficient to implement optimal strategies.
Thus, playing optimally is as hard as just winning.

All our proofs can be adapted for the case of bounded parity (Streett) games and bounded parity (Streett) games with costs~\cite{ChatterjeeHenzingerHorn09,FZ14}.
While the parity condition with costs only restricts the cost-of-response in the limit, the bounded parity condition prohibits any unanswered request with cost~$\infty$ (but still allows finitely many unanswered requests with finite cost).
The other conditions are defined similarly. 

In further research, we are considering two additional directions in which to extend the cost function: By allowing negative integers as costs and by allowing multiple cost functions, i.e., by extending the parity winning condition with a family of cost functions similar to the case of Streett conditions.
Moreover, we investigate tradeoffs in delay games with quantitative winning conditions.
Preliminary results exhibited a tradeoff between costs of strategies and delay~\cite{Zimmermann17}, but there are no results involving the size of strategies.

\bibliographystyle{alpha}
\bibliography{main.bib}

\end{document}